\newtheorem{theorem}{Theorem}
\newtheorem{lemma}{Lemma}
\newtheorem{assumption}{Assumption}
\newtheorem{definition}{Definition}
\newtheorem{remark}{Remark}
\newcommand{\blue}[1]{{\color{blue}{#1}}}
\DeclareMathOperator{\Tr}{\mathrm{tr}}
\title{Model-Free Output Feedback Stabilization via  Policy Gradient Methods}
\author{Ankang Zhang, Ming Chi, Xiaoling Wang, and Lintao Ye
\thanks{
A. Zhang, M. Chi, and L. Ye are with the School of Artificial Intelligence and Automation, Huazhong University of Science and Technology, Wuhan 430074, China, email: \{zak, chiming, yelintao93\}@.hust.edu.cn. X. Wang is with the College of Automation, Nanjing University of Posts and Telecommunications, Nanjing 210023, China, email: xiaolingwang@njupt.edu.cn. } 
}
\begin{document}
\maketitle

\begin{abstract}
Stabilizing a dynamical system is a fundamental problem that serves as a cornerstone for many complex tasks in the field of control systems. The problem becomes challenging when the system model is unknown. Among the Reinforcement Learning (RL) algorithms that have been successfully applied to solve problems pertaining to unknown linear dynamical systems, the policy gradient (PG) method stands out due to its ease of implementation and can solve the problem in a model-free manner. However, most of the existing works on PG methods for unknown linear dynamical systems assume full-state feedback. In this paper, we take a step towards model-free learning for partially observable linear dynamical systems with output feedback and focus on the fundamental stabilization problem of the system. We propose an algorithmic framework that stretches the boundary of PG methods to the problem without global convergence guarantees. We show that by leveraging zeroth-order PG update based on system trajectories and its convergence to stationary points, the proposed algorithms return a stabilizing output feedback policy for discrete-time linear dynamical systems. We also explicitly characterize the sample complexity of our algorithm and verify the effectiveness of the algorithm using numerical examples.
\end{abstract}

\section{Introduction}
Stabilizing dynamical systems is a foundational and enduring topic in control theory. In both theoretical and practical aspects, identifying an initial stabilizing controller is a critical prerequisite for solving more complex optimization tasks, such as the linear quadratic regulator (LQR) problem. While such a task is often straightforward in model-based settings, where system model is learned from data or given prior~\cite{dean2020sample}, it becomes non-trivial in model-free settings, where system models are absent. However, model-free approaches are particularly attractive as they bypass the system identification phase, thereby avoiding potential modeling errors~\cite{hou2013model,recht2019tour}. As a staple of reinforcement learning, policy gradient (PG) methods have achieved significant milestones for model-free control, particularly for solving LQR problems in both continuous-time and discrete-time linear systems~\cite{fazel2018global,mohammadi2019global,mohammadi2020linear,fatkhullin2021optimizing,mohammadi2021convergence,malik2020derivative,cassel2021online}. 
Although both the LQR cost and the set of stabilizing policies are non-convex~\cite{fazel2018global,fatkhullin2021optimizing}, the seminal work in~\cite{fazel2018global,mohammadi2019global} demonstrated that the LQR cost exhibits gradient dominance (i.e., Polyak-Łojasiewicz condition~\cite{karimi2016linear}), thereby guaranteeing the global convergence of PG methods. Building on this theoretical foundation, subsequent studies have extended PG methods to various LQR formulations, such as Markovian jump LQR~\cite{jansch2022policy}, risk-constrained LQR~\cite{zhao2023global}, and LQR over rate-limited channels~\cite{ye2024model}.
However, these results rely on a common assumption that an initial stabilizing controller is available. 

There is a line of research based on the so-called discount methods to stabilize unknown control systems~\cite{perdomo2021stabilizing,ozaslan2022computing,zhao2024convergence}.
The core mechanism hinges on relaxing the stabilization of the original system into a sequence of discounted LQR problems on damped systems. This allows us to start from a trivial initial policy (e.g., a zero policy) and return a stabilizing controller for the original system as the discount factor gradually approaches $1$.
However, the aforementioned works predominantly address the full state feedback (SF) scenario, where the control input is computed using system state.
In practical control applications, acquiring the complete state information of a system can be prohibitively expensive or even impossible~\cite{qian2024robust, Chi10935305,duan2023optimization1}. 
This motivates us to study the problem in this work: \textit{Design a model-free algorithm for learning stabilizing output feedback controllers and analyze its sample complexity}.

While the model-free PG methods have been applied to solve the linear quadratic control problem in partially observable systems (i.e., when only the system output is available), the existing works have mainly focused on the optimization landscape of the problem~\cite{fatkhullin2021optimizing,duan2022optimization,duan2023optimization1,duan2023optimization2}, without explicit sample complexity analysis or addressing the problem of learning a stabilizing output feedback controller. Unlike the SF scenario, the output feedback scenario lacks the gradient dominance property, and the set of stabilizing policies may be disconnected~\cite{fatkhullin2021optimizing,bu2020topological}.
Specifically, when the system model is known, the analysis in \cite{fatkhullin2021optimizing,duan2022optimization,duan2023optimization1} provides convergence guarantees for gradient-based optimization algorithms toward stationary points of the cost function.
Another challenge of the output feedback problem is that the system can be stabilized by a dynamic output feedback controller if not by a static output feedback controller~\cite{duan2023optimization2,tang2023analysis} 
Similarly, the aforementioned works rely on the assumption of a known initial stabilizing output feedback controller.
Given the above challenges and the lack of existing sample complexity results for unknown partially observable systems,
motivated by these recent advances~\cite{perdomo2021stabilizing,zhao2024convergence,ozaslan2022computing}, we take the first step to investigate learning stabilizing Static Output Feedback (SOF) policies for discrete-time linear systems with unknown system models. Our main contributions are summarized as follows.

$\bullet$ {\bf Model-Free SOF Synthesis Framework.} We develop a model-free algorithmic framework for learning stabilizing SOF controllers for open-loop unstable discrete-time linear systems from system trajectories. Our framework is built upon zeroth-order PG methods with only local convergence guarantees.

$\bullet$ {\bf Discount Mechanism for Partially Observable System.} We introduce a discount method into the SOF stabilization problem, which transforms the stabilization of the original system into policy learning problems for a sequence of discounted partially observable systems with properly chosen discount factors. 
Such a mechanism is a building block of our algorithmic framework. Our proposed algorithms only require an upper bound on the spectral radius of the open-loop system to start with, and upper bounds on the norms of unknown system matrices.

 $\bullet$ {\bf Convergence Guarantees and Sample Complexity.} We establish the overall convergence of the proposed zeroth-order PG method to a stabilizing SOF controller for the original system,
 despite the inherent lack of gradient dominance in the SOF landscape. This is achieved by rigorously analyzing the error of two-point zeroth-order gradient estimates used in the PG method, and how the estimation error propagates during the course of the algorithm. Furthermore, we provide a comprehensive quantification of the total sample complexity (i.e., the number of system trajectories) required for the entire stabilization procedure.

We now summarize some related work below.

$\bullet$ {\bf Stabilization via PG Methods.} PG methods circumvent the complexities of model identification by searching directly in the controller parameter space. The work of~\cite{lamperski2020computing} proposes a model-free algorithm based on policy iteration to stabilize deterministic LQR systems. The authors in~\cite{perdomo2021stabilizing} developed the discount method framework for stabilization of both linear and nonlinear systems. Along this line, the work of~\cite{ozaslan2022computing} addresses the stabilization problem for continuous-time systems. In the discrete-time setting, the authors in~\cite{zhao2024convergence} provided rigorous theoretical guarantees, demonstrating that the sample complexity is of the order $\mathcal{O}(n^2)$ with $n$ being the state dimension.
Different from~\cite{zhao2024convergence}, the work of~\cite{toso2025learning} studies the problem of learning to stabilize policies via an unstable
subspace representation, thereby substantially lowering the sample complexity, particularly in high-dimensional settings.
Recent studies on stabilization problems via policy iteration~\cite{lamperski2020computing} and PG methods~\cite{perdomo2021stabilizing, zhao2024convergence, ozaslan2022computing,toso2025learning} have predominantly focused on the full-state feedback paradigm. In contrast, our work addresses the stabilization problem for unknown partially observable systems, which is more practically relevant yet structurally more challenging due to the lack of gradient dominance.

$\bullet$ {\bf Stabilization via System Identification.} Another indirect route to finding stabilizing controllers is to first perform system identification from data and then utilize the identified model for controller synthesis. For instance, \cite{hu2022sample} leverages spectral decomposition to identify the unstable subspace, achieving a sample complexity that is independent of the state dimension. For partially observable systems,~\cite{zhangstabilizing} employs singular value decomposition on the Hankel matrix to estimate unstable modes, providing dimension-free stability guarantees. Additionally, \cite{talebi2021regularizability} introduces the concept of regularizability to facilitate online regulation via a data-guided procedure when only the input matrix is known. While these approaches demonstrate sample efficiency by exploiting system structures, they generally necessitate explicit system identification or sophisticated subspace recovery. In contrast, our approach directly searches the policy space, circumventing the need for any form of model or subspace estimation.

{\bf Notation.}
For a matrix $A$, let $\Tr(A)$, $\rho(A)$, $\underline{\sigma}(A)$, $\|A\|$ and $\|A\|_F$ be its trace, spectral radius, minimal singular value, spectral norm, and Frobenius norm, respectively. 
$\mathbb{R}^{n\times m}$ and $\mathbb{R}^n$ denote the set of $n\times m$ real matrices and the $n$-dimensional real Euclidean space, respectively. $\mathbb{N}_+$ denotes the set of positive integers. $I_n$ represents the $n \times n$ identity matrix.

\section{Problem Formulation and Preliminaries}
\subsection{Problem Formulation}
Consider a discrete-time linear time-invariant (LTI) system characterized by the following state-space realization:
\begin{equation}\label{eq-original system}
  \begin{aligned}
    x_{t+1} &= Ax_{t} + Bu_t, \\
    y_{t} &= Cx_t,
  \end{aligned}
\end{equation}
where $x_t \in \mathbb{R}^{n}$ is the state vector, $u_t \in \mathbb{R}^{m}$ is the control input, and $y_t \in \mathbb{R}^{p}$ is the output. $A\in \mathbb{R}^{n\times n}$, $B\in \mathbb{R}^{n\times m}$ and $C\in \mathbb{R}^{p\times n}$ are the system matrices.

Our work specifically addresses the scenario where the open-loop system is unstable, i.e., the spectral radius $\rho(A) \geq 1$. Throughout the paper, we make the following standard assumptions regarding the initial state distribution (see, e.g.,~\cite{zhao2024convergence,malik2020derivative}) and the system's structural properties (see, e.g.,~\cite{fatkhullin2021optimizing,duan2022optimization}).

\begin{assumption} \label{ass1}
    The initial state $x_0$ is a random vector drawn from a distribution $\mathcal{D}$ with zero mean and identity covariance, i.e., $x_0 \sim \mathcal{D}(0, I_n)$. Furthermore, $x_0$ is assumed to be norm-bounded, satisfying $\|x_0\| \leq d$ for some constant $d > 0$.
\end{assumption}

\begin{assumption} \label{ass:SOF_existence}
   The set of stabilizing static output feedback policies is non-empty. That is, there exists $K \in \mathbb{R}^{m \times p}$ such that the closed-loop system matrix $A - BKC$ is Schur stable (i.e., $\rho(A - BKC) < 1$).
\end{assumption}

{\bf Objective:} Our primary goal is to synthesize a stabilizing static output feedback (SOF) controller of the form $u_t = -Ky_t$. We seek to achieve this through policy gradient (PG) methods in a model-free manner, where the system matrices $(A, B, C)$ and state dimension $n$ are unknown, and only system rollouts are available.

\subsection{Discount Method for Output Feedback Control}

To address the challenges posed by open-loop instability, we employ a discount method. We begin by recalling the standard discounted LQR problem under full state feedback (SF) (see, e.g.,~\cite{bertsekas2012dynamic}):
\begin{equation} \label{lqr-state}
    \begin{aligned}
        &\min_F~J_{\gamma}(F)=\mathbb{E}_{x_0}\sum_{t=0}^{\infty}\gamma^t
        \left(x_t^\top Qx_t+u_t^\top Ru_t\right),\\
        & \text{subject to}~\eqref{eq-original system}~ \text{and}~u_t=-Fx_t, 
    \end{aligned}
\end{equation}
where $\gamma \in (0, 1)$ is the discount factor, $F\in\mathbb{R}^{m\times n}$ is the state feedback gain, and $Q \succ 0, R \succ 0$ are the cost matrices. Motivated by recent advances~\cite{zhao2024convergence,perdomo2021stabilizing}, we extend the discount method to output feedback control. 
Specifically, for a given SOF policy $K \in \mathbb{R}^{m \times p}$, the performance is evaluated by:
\begin{equation} \label{eq-lq-cost}
  \begin{aligned}
    &J_{\gamma}(K)=\mathbb{E}_{x_0}\sum_{t=0}^{\infty}\gamma^t
        \left(x_t^\top Qx_t+u_t^\top Ru_t\right),\\
    &\text{subject to}~\eqref{eq-original system}~ \text{and}~u_t=-Ky_t.
  \end{aligned}
\end{equation}
In this work, the discounted cost \eqref{eq-lq-cost} is employed primarily as a computational tool to learn a stabilizing SOF policy. Rather than pursuing optimality, we leverage the gradient of $J_\gamma(K)$ as a search direction to steer the controller $K$ into the stabilizing region. 

Next, we show that the discounted cost in \eqref{eq-lq-cost} can be equivalently recast into a standard infinite-horizon form by introducing the damped system dynamics~\cite{perdomo2021stabilizing}. Letting $\tilde{x}_t \coloneqq \gamma^{t/2} x_t$, the closed-loop system~\eqref{eq-original system} with $u_t=-Ky_t$ is transformed into
\begin{equation}\label{eq-equal system}
    \tilde{x}_{t+1} = \sqrt{\gamma}(A - BKC)\tilde{x}_t,
\end{equation}
 and the cost function $J_\gamma(K)$ can be equivalently expressed as a standard infinite-horizon cost without the discount factor:
\begin{equation}\label{eq:standard_lqr_form}
    \begin{aligned}
        &J_{\gamma}(K)=\mathbb{E}_{\tilde{x}_0}\sum_{t=0}^{\infty}
        \tilde{x}_t^\top \left(Q+C^\top K^\top RKC\right)\tilde{x}_t,\\
        & \text{subject to}~\eqref{eq-equal system},
    \end{aligned}
\end{equation}
where $\tilde{x}_0=x_0$. 
It is well-known that the cost $J_{\gamma}(K)$ is finite if and only if $K$ belongs to the set of stabilizing policies for the damped system, defined as $\mathcal{S}_{\gamma} \coloneqq \{K \in \mathbb{R}^{m \times p} \mid \sqrt{\gamma}\rho(A-BKC) < 1 \}$~\cite{bertsekas2012dynamic}. 
According to Assumption~\ref{ass:SOF_existence}, there exists a stabilizing policy $K$ satisfying $\rho(A-BKC) < 1$. Since $\gamma \in (0, 1)$, it follows that $\sqrt{\gamma}\rho((A-BKC)) \leq \rho(A-BKC) < 1$, which directly implies that the set $\mathcal{S}_{\gamma}$ is non-empty. By a slight abuse of notation, we will simply use $x_t$ to denote the state of the damped system directly in the sequel. The distinction between the original and the damped system states can be inferred from the context, given the discount factor $\gamma$.

For any $K\in\mathcal{S}_{\gamma}$, we respectively define $P_K$ and $\Sigma_{K}$ as the positive definite solution to the following Lyapunov equations:
\begin{align}
   &P_K(\gamma)=Q+C^\top K^\top RKC+\gamma\left(A-BKC\right)^\top P_K(\gamma)\left(A-BKC\right),
   \label{eq-Lya1}\\
    &\Sigma_{K}(\gamma)=I_n+\gamma\left(A-BKC\right)\Sigma_K(\gamma)\left(A-BKC\right)^\top .
    \label{eq-Lya2}
\end{align}
Then, the cost function $J_\gamma(K)$ can be succinctly expressed as
\begin{equation}\label{eq-Lya3}
J_\gamma(K) = \Tr(P_K(\gamma)) = \Tr((Q + C^\top K^\top R K C) \Sigma_K(\gamma)).
\end{equation}
Recall that our objective is to learn a stabilizing SOF policy using the PG method. To this end, in the remainder of this section, we derive several properties of the objective function $J_{\gamma}(K)$ and $K$, including local smoothness, local Lipschitz continuity, and strong stability.
To proceed, we introduce the following standard definitions.

\begin{definition}\label{defi-smoothness}(Local Smoothness)
    A function $f: \mathbb{R}^d \rightarrow \mathbb{R}$ is locally $(L,D)$-smooth over $\mathcal{X} \subseteq \mathbb{R}^d$ if for any $x\in \mathcal{X}$ and any $y\in \mathbb{R}^d$ with $\|y-x\|\leq D$,
    \begin{equation*}
        \|\nabla f(x)-\nabla f(y)\|\leq L \|x-y\|.
    \end{equation*}
\end{definition}

\begin{definition}\label{defi-Lipschitz}(Local Lipschitz Continuity)
    A function $f: \mathbb{R}^d \rightarrow \mathbb{R}$ is locally $(G, D)$-Lipschitz over $\mathcal{X} \subseteq \mathbb{R}^d$ if for any $x\in \mathcal{X}$ and any $y\in \mathbb{R}^d$ with $\|y-x\|\leq D$,
    \begin{equation*}
        |f(x)-f(y)|\leq G\|x-y\|.
    \end{equation*}
\end{definition}

\begin{definition}\label{defi-Strong stability}(Strong Stability)
    A matrix $M$ is $(\kappa,\varrho)$-strongly stable for $\kappa\geq 1$ and $0<\varrho\leq1$ if there exist matrices $H\succ0$ and $W$ such that $M=HWH^{-1}$ with $\|W\|\leq1-\varrho$ and $\|H\|\|H^{-1}\|\leq \kappa$.
\end{definition}
An SF policy $F$ is called $(\kappa,\varrho)$-strongly stable if it satisfies the following conditions:
    1) The closed-loop matrix $A-BF$ is $(\kappa,\varrho)$-strongly stable;
    2) The policy is bounded as $\|F\| \leq \kappa$.
In the sequel, we assume that 
\begin{equation}\label{eq-sys-para-norm}
    \ell_0 I \preceq Q, R \preceq \ell_1 I, \|B\| \leq \psi, \|C\| \leq \varphi,
\end{equation}
where $\ell_1\geq\ell_0>0$ and $\psi, \varphi\geq1$ are known constants.
Defining the $\nu$-sublevel set associated with a discount factor $\gamma$ as $\mathcal{S}_{\gamma}(\nu) \coloneqq \{K \mid J_{\gamma}(K) \leq \nu \}$, we first introduce the following lemma, which relates the cost of a controller to its strong stability parameters. 

\begin{lemma}\cite[Lemma 18]{cohen2019learning} 
\label{lem-strong stability}
    Suppose that an SF policy $F$ satisfies $F\in\{F~|~  J(F)\leq \nu\}$, then $F$ is $(\kappa,\varrho)$-strongly stable with $\kappa=\sqrt{\nu/\ell_0}$ and $\varrho=1/2\kappa^2$.
\end{lemma}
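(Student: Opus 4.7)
The plan is to work directly from the Lyapunov representation of the cost. Since the state feedback cost admits the closed form $J(F)=\Tr(P_F)$ where $P_F\succ 0$ solves
\begin{equation*}
P_F = Q + F^\top R F + (A-BF)^\top P_F (A-BF),
\end{equation*}
the hypothesis $J(F)\le \nu$ immediately yields the two-sided bound $\ell_0 I \preceq P_F \preceq \nu I$: the lower bound comes from $P_F \succeq Q \succeq \ell_0 I$, while the upper bound uses $\|P_F\|\le \Tr(P_F)\le \nu$ together with positive semidefiniteness.

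Next I would construct the similarity transform witnessing strong stability. Setting $H = P_F^{-1/2}$ and $W = P_F^{1/2}(A-BF)P_F^{-1/2}$, one checks that $A-BF = HWH^{-1}$, and
\begin{equation*}
\|H\|\,\|H^{-1}\| = \sqrt{\|P_F^{-1}\|\cdot\|P_F\|}\le \sqrt{\nu/\ell_0} = \kappa,
\end{equation*}
which gives the conditioning bound required by Definition~\ref{defi-Strong stability}. The norm bound on the policy $\|F\|\le \kappa$ follows from the same chain of inequalities: the Lyapunov identity gives $P_F \succeq F^\top R F \succeq \ell_0 F^\top F$, hence $\|F\|^2 \le \|P_F\|/\ell_0 \le \nu/\ell_0 = \kappa^2$.

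The core step, and the one I expect to require the most care, is bounding $\|W\|$. Rearranging the Lyapunov equation and conjugating by $P_F^{-1/2}$ yields
\begin{equation*}
W^\top W = I - P_F^{-1/2}(Q+F^\top R F)P_F^{-1/2} \preceq I - \ell_0 P_F^{-1} \preceq \Bigl(1-\tfrac{\ell_0}{\nu}\Bigr) I,
\end{equation*}
where the last step uses $P_F^{-1}\succeq (1/\nu)I$. Hence $\|W\|\le\sqrt{1-\ell_0/\nu}=\sqrt{1-1/\kappa^2}$. The elementary inequality $\sqrt{1-x}\le 1-x/2$ on $[0,1]$ then produces
\begin{equation*}
\|W\|\le 1 - \frac{1}{2\kappa^{2}} = 1-\varrho,
\end{equation*}
matching the prescribed contraction rate and completing the verification.

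Combining the three ingredients, $A-BF=HWH^{-1}$ with $\|H\|\|H^{-1}\|\le\kappa$, $\|W\|\le 1-\varrho$, and $\|F\|\le\kappa$, establishes that $F$ is $(\kappa,\varrho)$-strongly stable. No part of the argument is truly obstructive provided one is careful with the symmetric square roots and with invoking $\sqrt{1-x}\le 1-x/2$ at the precise threshold that matches $\varrho = 1/(2\kappa^2)$; that tightness is what ties the contraction rate to the conditioning parameter $\kappa$.
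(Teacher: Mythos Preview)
Your proof is correct. The paper does not supply its own argument for this lemma; it simply cites \cite[Lemma~18]{cohen2019learning}, and your construction via $H=P_F^{-1/2}$, $W=P_F^{1/2}(A-BF)P_F^{-1/2}$ together with the inequality $\sqrt{1-x}\le 1-x/2$ is exactly the standard route to that result. It is perhaps worth noting that the paper does carry out an analogous computation in Lemma~\ref{lem-bound of rho}, but there the authors build the similarity transform from the state-covariance matrix $\Sigma_K$ rather than from $P_F$; both choices yield the same contraction bound, and your $P_F$-based version is the more direct one for establishing strong stability in the sense of Definition~\ref{defi-Strong stability}.
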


We then extend this notion to the output feedback setting. Letting $F=KC$, an SOF policy $K$ is said to be $(\kappa,\varrho)$-strongly stable if the equivalent SF gain $F$ is $(\kappa,\varrho)$-strongly stable.
For $K\in\mathcal{S}_{\gamma}(\nu)$, it follows from Lemma~\ref{lem-strong stability} that $K$ implies strong stability with $\kappa=\sqrt{\nu/\ell_0}$ and $\varrho=1/2\kappa^2$.
Consequently, we have
\begin{equation} \label{eq-kappa}
    \|KC\|\leq\kappa \quad \text{and} \quad \|A-BKC\|\leq \kappa(1-\varrho) < \kappa,
\end{equation}
where the bound on $\|A-BKC\|$ follows from the decomposition $HWH^{-1}$ in Definition~\ref{defi-Strong stability}.

In the following, Lemma~\ref{lem-pg-derivation} provides the expression for $\nabla J_{\gamma}(\cdot)$, while Lemmas~\ref{lem1} and~\ref{bound of gradient} establish properties of $J_{\gamma}(K)$ on the sublevel set $\mathcal{S}_{\gamma}(\nu)$; the proofs are all deferred to Appendix~\ref{app:preliminary proofs}.

\begin{lemma}(Gradient Expression) \label{lem-pg-derivation}
    For any $K \in \mathcal{S}_{\gamma}$, it holds that
    \begin{equation}
        \nabla J_{\gamma}(K) = 2 E_K \Sigma_K C^\top,
    \end{equation}
    where $E_K = (R + \gamma B^\top P_K B)KC - \gamma B^\top P_K A$, and $\Sigma_K \coloneqq \mathbb{E}_{x_0} \sum_{t=0}^{\infty} \gamma^t x_t x_t^\top$.
\end{lemma}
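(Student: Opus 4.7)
The plan is to implicitly differentiate the Lyapunov equation~\eqref{eq-Lya1} defining $P_K$ and then use~\eqref{eq-Lya2} to recognize the resulting operator inverse as $\Sigma_K$. Throughout I would write $L \coloneqq A - BKC$ and note that, since $K \in \mathcal{S}_\gamma$, the matrix $\sqrt{\gamma}\,L$ is Schur stable, so the discrete Lyapunov operator $X \mapsto X - \gamma L^\top X L$ is invertible with a norm-convergent series inverse $Y \mapsto \sum_{t \geq 0} \gamma^t (L^\top)^t Y L^t$. This also justifies taking the directional derivative of $P_K$ in $K$.

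First, I would expand~\eqref{eq-Lya1} at $K + \Delta K$ and collect the terms linear in $\Delta K$. Writing $\Delta P \coloneqq DP_K[\Delta K]$, this yields the perturbed Lyapunov equation $\Delta P = H(\Delta K) + \gamma L^\top \Delta P\, L$, whose driving term is
\begin{equation*}
H(\Delta K) = C^\top \Delta K^\top R K C + C^\top K^\top R \Delta K C - \gamma C^\top \Delta K^\top B^\top P_K L - \gamma L^\top P_K B \Delta K C.
\end{equation*}
The key algebraic observation is that $H(\Delta K)$ regroups as $C^\top \Delta K^\top E_K + E_K^\top \Delta K C$ with $E_K = R K C - \gamma B^\top P_K L$, and substituting $L = A - BKC$ one verifies $E_K = (R + \gamma B^\top P_K B) K C - \gamma B^\top P_K A$, matching the statement of the lemma.

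Second, since $J_\gamma(K) = \Tr(P_K)$ (from~\eqref{eq-Lya3} with $\mathbb{E}[x_0 x_0^\top] = I_n$), inverting the Lyapunov operator and taking the trace gives
\begin{equation*}
DJ_\gamma(K)[\Delta K] = \Tr(\Delta P) = \sum_{t \geq 0} \gamma^t \Tr\bigl((L^\top)^t H(\Delta K) L^t\bigr) = \Tr\bigl(H(\Delta K)\,\Sigma_K\bigr),
\end{equation*}
where the last equality uses cyclicity of the trace together with $\Sigma_K = \sum_{t \geq 0} \gamma^t L^t (L^\top)^t$, obtained by unrolling~\eqref{eq-Lya2}. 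Expanding the two summands of $\Tr(H(\Delta K)\Sigma_K)$ via cyclicity and the symmetry of $\Sigma_K$, both contribute identically, yielding $DJ_\gamma(K)[\Delta K] = 2\,\Tr(\Delta K^\top E_K \Sigma_K C^\top)$. Reading off the gradient in the Frobenius inner product then gives $\nabla J_\gamma(K) = 2 E_K \Sigma_K C^\top$, as claimed.

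\textbf{Main obstacle.} The argument is essentially bookkeeping; the only mildly delicate step is organizing the four first-order pieces of $H(\Delta K)$ symmetrically around $E_K$ so that the two trace contributions merge into the factor of $2$. Convergence of the Lyapunov series and the legitimacy of implicit differentiation are immediate from $K \in \mathcal{S}_\gamma$, which renders $\sqrt{\gamma}\,L$ Schur stable.
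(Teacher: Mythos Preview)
Your proof is correct, but it follows a genuinely different route from the paper's. The paper argues via the value function $V_K(x)=x^\top P_K x$: it writes the Bellman-type recursion $V_K(x_0)=x_0^\top(Q+C^\top K^\top RKC)x_0+\gamma V_K((A-BKC)x_0)$, differentiates this recursion in $K$ to obtain a recursion $\nabla_K V_K(x_0)=2E_K x_0x_0^\top C^\top+\gamma\nabla_K V_K(x_1)$, unrolls it over $t$, and then takes the expectation over $x_0$. Your argument instead perturbs the Lyapunov equation~\eqref{eq-Lya1} at the matrix level, identifies the driving term $H(\Delta K)=C^\top\Delta K^\top E_K+E_K^\top\Delta K C$ of the Sylvester equation for $\Delta P$, and exploits the trace duality between the Lyapunov operators in~\eqref{eq-Lya1} and~\eqref{eq-Lya2} to turn $\Tr(\Delta P)$ into $\Tr(H(\Delta K)\Sigma_K)$. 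The paper's approach is closer in spirit to the policy-gradient theorem and makes the per-state contribution explicit; your operator-level derivation is algebraically cleaner, makes the $P_K$/$\Sigma_K$ adjoint relationship transparent, and extends naturally to sensitivity and perturbation bounds (as later needed in Lemma~\ref{lem1}). Both are standard and equally valid here.
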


\begin{lemma}  \label{lem1}
    Let $K\in\mathcal{S}_{\gamma}(\nu)$ and $K'$ satisfy $\|K-K'\|\leq D$ with $D=1/(8\kappa^3\psi\varphi)$, where $\kappa=\sqrt{\nu/\ell_0}$.
    Then, we have
    \begin{enumerate}[leftmargin=*]
        \item $\|P_K\|\leq \nu$, $\|\Sigma_K\|\leq \nu/\ell_0$;
        \item $\|\Sigma_K-\Sigma_{K'}\| \leq 6\kappa^5\psi\varphi\|K-K'\|$;
        \item $\|P_K-P_{K'}\|\leq 16\kappa^7\ell_1\psi\varphi\|K-K'\|$;
        \item $J_\gamma(K)$ is locally $(G,D)$-Lipschitz over $\mathcal{S}_{\gamma}(\nu)$ with $D=1/(8\kappa^3\psi\varphi)$ and $G=16\kappa^9\ell_1\psi\varphi$;
        \item $J_{\gamma}(K)$ is locally $(L,D)$-smooth over $\mathcal{S}_{\gamma}(\nu)$ with $D=1/(8\kappa^3\psi\varphi)$ and $L=104\kappa^{10}\ell_1\psi^2\varphi\sqrt{\min\{m,p\}}$.
    \end{enumerate}
\end{lemma}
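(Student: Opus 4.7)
I will establish the five claims in order, with item~1 providing the initial bounds, items~2--3 the main perturbation infrastructure, and items~4--5 following by plugging these into the gradient formula.

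Item~1 is immediate from the twin representations in~\eqref{eq-Lya3}. Since $P_K\succeq 0$, we have $\|P_K\|\leq \Tr(P_K)=J_\gamma(K)\leq\nu$. For $\Sigma_K$, the lower bound $Q\succeq\ell_0 I$ together with $J_\gamma(K)=\Tr\bigl((Q+C^\top K^\top RKC)\Sigma_K\bigr)\geq\ell_0\Tr(\Sigma_K)\geq\ell_0\|\Sigma_K\|$ gives $\|\Sigma_K\|\leq\nu/\ell_0$.

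For items~2 and~3, the foundational step---and what I expect to be the principal obstacle---is showing that every $K'$ in the $D$-ball around $K$ remains stabilizing for the damped system, so that $P_{K'}$ and $\Sigma_{K'}$ exist and are comparably bounded. Since $K$ is $(\kappa,1/(2\kappa^2))$-strongly stable by Lemma~\ref{lem-strong stability}, one can write $\sqrt{\gamma}(A-BKC)=HWH^{-1}$ with $\|W\|\leq 1-\varrho$ and $\|H\|\|H^{-1}\|\leq\kappa$, so that
\begin{equation*}
\sqrt{\gamma}(A-BK'C)=H\bigl(W+\sqrt{\gamma}H^{-1}B(K-K')CH\bigr)H^{-1}.
\end{equation*}
The perturbation term has norm at most $\kappa\psi\varphi D=1/(8\kappa^2)$, which is strictly less than $\varrho=1/(2\kappa^2)$, so $K'$ is again strongly stable. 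Once this is established, I subtract~\eqref{eq-Lya2} for $K'$ from that for $K$ to obtain a Lyapunov equation for $\Sigma_K-\Sigma_{K'}$ whose forcing term is linear in $B(K-K')C$ and bounded via $\psi\varphi\|\Sigma_{K'}\|\|K-K'\|$. Inverting by the series $\sum_{t\geq 0}\gamma^t(A-BKC)^t(\,\cdot\,)\bigl((A-BKC)^\top\bigr)^t$, whose operator norm is of order $\kappa^2/\varrho=2\kappa^4$ by strong stability, yields item~2. Item~3 is analogous, applied to~\eqref{eq-Lya1}; here the forcing term has an additional piece $C^\top(K^\top RK-K'^\top RK')C$, which is controlled via $\|KC\|\leq\kappa$, bilinearity of $K\mapsto K^\top RK$, and item~2.

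Items~4 and~5 follow by combining items~1--3 with the gradient formula in Lemma~\ref{lem-pg-derivation}. For local Lipschitz continuity, I integrate the gradient along the segment $K_t=K+t(K'-K)$, each point of which remains stabilizing by the perturbation argument above, and bound $\|\nabla J_\gamma(K_t)\|_F$ uniformly using the simplified identity $E_K=RKC-\gamma B^\top P_K(A-BKC)$ together with $\|KC\|\leq\kappa$, $\|A-BKC\|\leq\kappa$, and $\|P_K\|\leq\nu$. For local smoothness, I decompose
\begin{equation*}
\nabla J_\gamma(K)-\nabla J_\gamma(K')=2(E_K-E_{K'})\Sigma_K C^\top+2E_{K'}(\Sigma_K-\Sigma_{K'})C^\top,
\end{equation*}
further split $E_K-E_{K'}$ into pieces involving $(P_K-P_{K'})$ and $(K-K')$, and apply items~1--3 term by term. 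The factor $\sqrt{\min\{m,p\}}$ in $L$ arises when converting the spectral-norm bound on the resulting $m\times p$ matrix into a Frobenius-norm bound. The remaining difficulty is careful bookkeeping of powers of $\kappa$: each chaining of a perturbation bound with the gradient expression layers on an additional $\kappa$ factor, and these accumulate to the $\kappa^{10}$ in the final smoothness constant.
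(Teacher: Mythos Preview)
Your plan is essentially correct and parallels the paper for items~1, 2, 3, and 5; the paper phrases items~2--3 via the linear operators $\mathcal{T}_F(X)=\sum_{t\ge0}\gamma^t(A-BF)^tX[(A-BF)^\top]^t$ and $\mathcal{F}_F(X)=\gamma(A-BF)X(A-BF)^\top$ and invokes perturbation lemmas from Fazel et~al., but this is exactly the ``subtract Lyapunov equations and invert the series'' argument you describe.

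The one substantive divergence is item~4. The paper does \emph{not} integrate the gradient; it simply reads the Lipschitz bound off item~3 via
\[
|J_\gamma(K)-J_\gamma(K')|=|\Tr\bigl((P_K-P_{K'})X_0\bigr)|\le\|P_K-P_{K'}\|\,\Tr(X_0),
\]
together with $\Tr(X_0)\le\ell_0^{-1}\Tr(P_KX_0)\le\nu/\ell_0=\kappa^2$, which immediately gives $G=16\kappa^9\ell_1\psi\varphi$. Your integration route has two costs. First, bounding $\|\nabla J_\gamma(K_t)\|$ uniformly on the segment requires $\|P_{K_t}\|$ and $\|\Sigma_{K_t}\|$ bounds, but $K_t$ is not known to lie in $\mathcal{S}_\gamma(\nu)$, so item~1 does not apply directly; you would have to go back to the strong-stability decomposition of $K_t$ and rederive these bounds with slightly inflated constants. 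Second, the pairing $|\langle\nabla J_\gamma(K_t),K'-K\rangle|$ naturally controls $\|K'-K\|_F$, not $\|K'-K\|$, so matching the stated constant (which carries no $\sqrt{\min\{m,p\}}$ factor) would require extra care. Neither issue is fatal, but the trace argument is both shorter and yields the exact constant in the statement.
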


\begin{lemma}
   For any $K\in\mathcal{S}_{\gamma}(\nu)$, it holds that $\|\nabla J_{\gamma}(K)\|_F\leq G^0$, where $G^0=2\kappa^3\varphi(\ell_1+\psi\nu)\sqrt{\min\{m,p\}}$.
   \label{bound of gradient}
\end{lemma}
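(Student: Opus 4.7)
The plan is to start directly from the closed-form gradient expression given by Lemma~\ref{lem-pg-derivation}, namely $\nabla J_{\gamma}(K) = 2 E_K \Sigma_K C^\top$ with $E_K = (R + \gamma B^\top P_K B)KC - \gamma B^\top P_K A$, and then bound each factor using the spectral-norm estimates already collected in Lemma~\ref{lem1} and in inequality~\eqref{eq-kappa}. Since $\nabla J_{\gamma}(K) \in \mathbb{R}^{m \times p}$ has rank at most $\min\{m,p\}$, I would first pass from the Frobenius norm to the spectral norm via $\|\nabla J_{\gamma}(K)\|_F \leq \sqrt{\min\{m,p\}}\,\|\nabla J_{\gamma}(K)\|$, which immediately accounts for the $\sqrt{\min\{m,p\}}$ factor in $G^0$.

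The key simplification is to rewrite $E_K$ so that only quantities that are already bounded on the sublevel set appear. Using $A = (A - BKC) + BKC$, the cross terms $\gamma B^\top P_K BKC$ cancel and one obtains the cleaner identity
\begin{equation*}
E_K = R K C - \gamma B^\top P_K (A - BKC).
\end{equation*}
This rewriting is crucial because we do not have (and do not want to assume) a bound on $\|A\|$ alone; instead, the closed-loop matrix $A - BKC$ is what is controlled by strong stability. From~\eqref{eq-sys-para-norm}, \eqref{eq-kappa}, and item~1 of Lemma~\ref{lem1} we have $\|R\|\leq \ell_1$, $\|B\|\leq\psi$, $\|KC\|\leq\kappa$, $\|A-BKC\|\leq\kappa$, and $\|P_K\|\leq\nu$. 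Combining these with $\gamma\leq 1$ yields $\|E_K\|\leq \ell_1\kappa + \psi\nu\kappa = \kappa(\ell_1+\psi\nu)$.

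Finally, I would assemble the pieces via submultiplicativity:
\begin{equation*}
\|\nabla J_{\gamma}(K)\| \leq 2\,\|E_K\|\,\|\Sigma_K\|\,\|C\| \leq 2\kappa(\ell_1+\psi\nu)\cdot\kappa^2\cdot\varphi = 2\kappa^3\varphi(\ell_1+\psi\nu),
\end{equation*}
where I used $\|\Sigma_K\|\leq\nu/\ell_0 = \kappa^2$ from item~1 of Lemma~\ref{lem1} and $\|C\|\leq\varphi$. Multiplying by the rank-based factor $\sqrt{\min\{m,p\}}$ then gives exactly $G^0$. There is no real obstacle here: the only subtle step is recognizing that $A$ must be absorbed into $(A-BKC)+BKC$ before applying the norm bounds, otherwise one would need an a~priori bound on $\|A\|$ that the hypotheses do not provide.
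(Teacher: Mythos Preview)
Your proposal is correct and essentially identical to the paper's proof: the paper also bounds $\|\nabla J_{\gamma}(K)\|_F \leq \sqrt{\min\{m,p\}}\cdot 2\|E_K\|\|\Sigma_K\|\|C\|$ and invokes the estimate $\|E_K\|\leq \|R\|\|KC\|+\|B\|\|P_K\|\|A-BKC\|\leq \ell_1\kappa+\psi\nu\kappa$ (established in~\eqref{eq-sigmak-E} during the proof of Lemma~\ref{lem1}). Your explicit rewriting $E_K = RKC - \gamma B^\top P_K(A-BKC)$ via $A=(A-BKC)+BKC$ is exactly the identity underlying the paper's bound in~\eqref{eq-sigmak-E}, which the paper simply cites rather than re-deriving.
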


\section{Algorithm Design}
\label{section 3}
In this section, we propose the algorithm to synthesize a stabilizing SOF policy. In stark contrast with the state-feedback framework studied in~\cite{perdomo2021stabilizing,zhao2024convergence}, where the gradient dominance property permits convergence of the cost below a specific value $J$ for some $K$, i.e., $J_{\gamma}(K)\leq J$, such a property is generally absent in the output-feedback setting. Therefore, our algorithm design adopts an alternative approach by utilizing convergence to stationary points with the gradient norm condition $\|\nabla J_{\gamma}(K)\|_F \leq \epsilon$ for some $\epsilon>0$ as a criterion to synthesize a stabilizing SOF policy. However, in the model-free setting, the true gradient $\nabla J_{\gamma}(K)$ is inaccessible due to the lack of knowledge regarding the system matrices $(A, B, C)$. Therefore, we utilize a gradient estimate $\widehat{\nabla} J_{\gamma}(K)$ which is obtained by system rollouts. 
Based on the above discussions, to stabilize the system~\eqref{eq-original system}, we propose the following alternate discount method:
\begin{subequations}
\label{eq-method}
\begin{align}
&\text{(a) Find}~K_{k+1}~ \text{such that}~\|\widehat{\nabla} J_{\gamma_{k}}(K_{k+1})\|_F\leq2\epsilon/3,\label{eq-19a}\\
&\text{(b) Update}~\gamma_{k+1}=\left(1+\zeta \alpha_k\right)\gamma_k, \label{eq-19b}  
\end{align}
\end{subequations}
where $k=0,1,\ldots$ denotes the iteration index of the discount factor, $\epsilon>0$ and $\zeta \in(0,1)$ are constants, and $\alpha_k>0$ calculated via $J_{\gamma_k}(K_{k+1})$ determines the update rate of the discount factor. Note that by requiring the gradient estimation error to be bounded by $\epsilon/3$, i.e., $\|\nabla J_{\gamma_k}(K_{k+1}) - \widehat{\nabla} J_{\gamma_k}(K_{k+1})\|_F \leq \epsilon/3$, we ensure via $\|\widehat{\nabla}J_{\gamma_k}(K_{k+1})\|_F\leq2\epsilon/3$ in~\eqref{eq-19a} that $\|\nabla J_{\gamma_k}(K_{k+1})\|_F \leq \epsilon$.

For a fixed discount factor $\gamma_k$, we propose to use the zeroth-order PG method 
\begin{equation}\label{eq-pg}
    K^{j+1}=K^{j}-\eta \widehat{\nabla} J_{\gamma_k}(K^{j}),
\end{equation}
to obtain $K_{k+1}$ in \eqref{eq-19a}, starting from $K_0 = K_{k}$. 
In~\eqref{eq-pg}, $\eta >0$ is a constant stepsize and $j=0,1,\dots$ denotes the iteration index of policies. 
The iteration proceeds until a policy $K^{j}$ satisfies $\|\widehat{\nabla} J_{\gamma_{k}}(K^{j})\|_F\leq2\epsilon/3$, at which point we define $K_{k+1} = K^{j}$ to proceed to step \eqref{eq-19b}. In the remainder of this section, we will drop the index $k$ of $\gamma_k$ and denote $\gamma$ for notational simplicity when it is clear from the context.

\subsection{Gradient Estimation}
\label{subsection gradient estimation}
We use two-point methods to obtain the gradient estimate $\widehat{\nabla} J_\gamma(K^{j})$ in Algorithm~\ref{alg:two_point_grad}, where we can simulate the system from $t=0$ to $t=\tau^e-1$ to obtain $J_{\gamma,x_0}^{\tau^e}(K)$ as follows:
\begin{align}\label{eq-value-function}
    &J_{\gamma,x_0}^{\tau^e}(K)=\sum_{t=0}^{\tau^e-1}\left(x_t^\top Qx_t+u_t^\top Ru_t\right),
\end{align}
where $u_t=-Ky_t$, and $x_0$ satisfies Assumption~\ref{ass1}.
\begin{algorithm2e}[h]
\SetNoFillComment
\caption{Two-Point Gradient Estimation}
\label{alg:two_point_grad}
\KwIn{Discount factor $\gamma$, policy $K \in \mathcal{S}_\gamma$, smoothing constant $r$, simulation time $\tau^e$, number of system trajectories $N^e$.}
\For{$i=1$ to $N^e$}{
    Sample $\overline{U}_i$ with i.i.d. entries from $\mathcal{N}(0,1)$ and normalize $U_i=\overline{U}_i/\|\overline{U}_i\|_F$.\\
    Sample an initial state $x_0^i$ from distribution $\mathcal{D}$.\\
    Simulate system (3) up to time $\tau^e$ to obtain $J_{\gamma,x_0^i}^{\tau^e}(K+rU_i)$ and $J_{\gamma,x_0^i}^{\tau^e}(K-rU_i)$.
}
\KwOut{Gradient estimate $\displaystyle \widehat{\nabla} J_\gamma(K) = \frac{mn}{2rN^e} \sum_{i=1}^{N^e} \left( J_{\gamma,x_0^i}^{\tau^e}(K+rU_i) - J_{\gamma,x_0^i}^{\tau^e}(K-rU_i) \right) U_i$.}
\end{algorithm2e}

Now, we demonstrate that the gradient estimation error satisfies $\|\widehat{\nabla} J_{\gamma}(K)-\nabla J_{\gamma}(K)\|_F \leq \epsilon/3$ with high probability.

\begin{lemma}\label{lem-gradient estimation error}
   Let $\epsilon>0$. For any $\delta_0\in(0,e^{-2}/2)$, if $\tau^e$, $r$, $N^e$ in Algorithm~\ref{alg:two_point_grad} satisfy
   \begin{equation}
       \tau^e \geq \frac{2\nu}{\ell_0} \log\left(\frac{36d^2\nu^2mp}{r\epsilon\ell_0}\right),~N^e\geq\frac{81(mpG)^2(d^2+1)^2}{\epsilon^2}\log\frac{15}{\delta_0},~r\leq\min\left\{D,\frac{\nu}{G},\frac{\epsilon}{9L}\right\},
   \end{equation}
   then $ \|\widehat{\nabla}J_{\gamma}(K)-\nabla J_{\gamma}(K)\|_F\leq \epsilon/3$ holds with probability at least $1-\delta_0$, where 
   $D$, $G$, $L$ are given by Lemma~\ref{lem1}.
\end{lemma}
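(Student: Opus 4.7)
The plan is to adopt the standard three-term decomposition for errors in a two-point zeroth-order estimator with finite-horizon rollouts. Let $J_\gamma^r(K) := \mathbb{E}_{V \sim \mathrm{Unif}(B)}[J_\gamma(K+rV)]$ denote the ball-smoothed cost, where $B$ is the Frobenius unit ball in $\mathbb{R}^{m\times p}$, and write $\widehat{\nabla}J_\gamma^\infty(K)$ for the ideal two-point estimator that uses the infinite-horizon costs $J_{\gamma,x_0^i}(K \pm rU_i)$ in place of the rolled-out finite-horizon costs $J_{\gamma,x_0^i}^{\tau^e}(\cdot)$. Stokes' theorem together with antipodal symmetrization over the Frobenius unit sphere yields $\mathbb{E}_{U,x_0}[\widehat{\nabla}J_\gamma^\infty(K)] = \nabla J_\gamma^r(K)$, so I split
\begin{equation*}
\widehat{\nabla}J_\gamma(K) - \nabla J_\gamma(K) = \underbrace{\bigl(\widehat{\nabla}J_\gamma(K) - \widehat{\nabla}J_\gamma^\infty(K)\bigr)}_{\text{truncation}} + \underbrace{\bigl(\widehat{\nabla}J_\gamma^\infty(K) - \nabla J_\gamma^r(K)\bigr)}_{\text{Monte Carlo}} + \underbrace{\bigl(\nabla J_\gamma^r(K) - \nabla J_\gamma(K)\bigr)}_{\text{smoothing}},
\end{equation*}
and aim to bound each piece in Frobenius norm by $\epsilon/9$.

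For the smoothing bias, because $r \leq D$ every perturbation $K+rV$ with $\|V\|_F \leq 1$ lies in the $D$-neighborhood on which Lemma~\ref{lem1}(5) gives $L$-smoothness, so $\|\nabla J_\gamma^r(K) - \nabla J_\gamma(K)\|_F \leq \mathbb{E}_V\|\nabla J_\gamma(K+rV) - \nabla J_\gamma(K)\|_F \leq Lr \leq \epsilon/9$ under the choice $r \leq \epsilon/(9L)$. For the truncation bias, the condition $r \leq D$ keeps each perturbed policy $K \pm rU_i$ strongly stable with parameters $(\kappa,\varrho)$ from Lemma~\ref{lem-strong stability}, so the damped closed-loop states obey $\|x_t\| \leq \kappa(1-\varrho)^t\|x_0\|$ pathwise with $\kappa = \sqrt{\nu/\ell_0}$ and $\varrho = 1/(2\kappa^2)$. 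Bounding the stage cost by $\ell_1(1+\kappa^2)\|x_t\|^2$ and summing the geometric tail beyond $\tau^e$ yields $|J^{\tau^e}_{\gamma,x_0}(K \pm rU_i) - J_{\gamma,x_0}(K \pm rU_i)| \lesssim \kappa^4 \ell_1 d^2 e^{-\tau^e/\kappa^2}$; multiplying by the estimator prefactor $mp/(2r)$ and using $1/\varrho = 2\kappa^2 = 2\nu/\ell_0$ shows that the stated lower bound on $\tau^e$ forces this term to fall below $\epsilon/9$.

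The Monte Carlo term is the delicate one and is where the main obstacle lies. A pathwise Lipschitz bound is required (the bound in Lemma~\ref{lem1}(4) is for the $x_0$-averaged cost $J_\gamma$): the Lyapunov representation \eqref{eq-Lya3} yields $J_{\gamma,x_0}(K) = x_0^\top P_K x_0$, and combining with the $P_K$-perturbation bound of Lemma~\ref{lem1}(3) gives $|J_{\gamma,x_0}(K+rU) - J_{\gamma,x_0}(K-rU)| \leq \tilde G(\|x_0\|^2 + 1)r$ for a constant $\tilde G$ absorbable into $G$ of Lemma~\ref{lem1}(4). Consequently each estimator term $g_i = \tfrac{mp}{2r}\bigl(J_{\gamma,x_0^i}^{\tau^e}(K+rU_i) - J_{\gamma,x_0^i}^{\tau^e}(K-rU_i)\bigr)U_i$ satisfies $\|g_i\|_F \leq mpG(d^2+1)$ almost surely, using $\|x_0^i\| \leq d$ and $\|U_i\|_F = 1$. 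A vector Hoeffding/Bernstein inequality then produces $\|\widehat{\nabla}J_\gamma^\infty(K) - \nabla J_\gamma^r(K)\|_F \leq \epsilon/9$ with probability at least $1-\delta_0$ once $N^e$ exceeds the stated threshold, with the constant $15$ in $\log(15/\delta_0)$ and the restriction $\delta_0 < e^{-2}/2$ absorbing the dimensional constants of the vector-norm tail bound. A final triangle inequality across the three components then delivers $\|\widehat{\nabla}J_\gamma(K) - \nabla J_\gamma(K)\|_F \leq \epsilon/3$. The technical heart of the argument is extracting the explicit $(d^2+1)$ dependence in the pathwise Lipschitz bound, since this factor is precisely what produces the $(d^2+1)^2$ in the prescribed sample complexity for $N^e$.
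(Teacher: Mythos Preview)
Your plan matches the paper's proof: the same three-term split into truncation, Monte Carlo deviation, and smoothing bias, handled respectively by exponential state decay of the damped closed loop, a vector Azuma--Hoeffding bound on the centered per-sample terms, and local $L$-smoothness from Lemma~\ref{lem1}. Two small refinements will make your sketch line up exactly with the paper's execution: first, the perturbed policies $K\pm rU_i$ are only guaranteed to lie in $\mathcal{S}_\gamma(2\nu)$, not $\mathcal{S}_\gamma(\nu)$, which is precisely why both $r\leq D$ \emph{and} $r\leq \nu/G$ are hypothesized (the paper packages this as an auxiliary lemma), and the decay constants in the truncation step should accordingly be computed at level $2\nu$; second, the ``$+1$'' in the factor $(d^2+1)$ does not arise from the pathwise bound $|J_{\gamma,x_0}(K^+)-J_{\gamma,x_0}(K^-)|\leq \tilde G\,\|x_0\|^2\,r$ itself (Lemma~\ref{lem1}(3) gives only the $\|x_0\|^2$ factor) but from the centering, since the martingale increments are $Y_i=g_i-\nabla J_\gamma^r(K)$ and the paper separately bounds $\|\nabla J_\gamma^r(K)\|_F\leq mpG$ via the expected-cost Lipschitz constant.
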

\textbf{Proof Sketch.} We first introduce an auxiliary function $J^r_{\gamma}(K)$:
    \begin{equation*}
        J^r_{\gamma}(K)=\mathbb{E}_{\widetilde{U}}[J_{\gamma}(K+r\widetilde{U})],~K\in\mathcal{S}_{\gamma}(\nu),
    \end{equation*}
     where the vectorized $\widetilde{U}$ is uniformly distributed from a unit ball. Accordingly, let the infinite-horizon version $\widetilde{\nabla}J_\gamma(K)$ of $\widehat{\nabla}J_\gamma(K)$ be given by
\begin{equation}
    \begin{aligned}
      \widetilde{\nabla}J_\gamma(K)=&\frac{mp}{2rN^e}\sum_{i=1}^{N^e}\left(J_{\gamma,x_0^i}(K+rU_i)-J_{\gamma,x_0^i}(K-rU_i)\right)U_i,
    \end{aligned}
    \nonumber
\end{equation}
where $J_{\gamma,x_0^i}(K+rU_i)$ corresponds to the cost function $J^{\tau^e}_{\gamma,x_0}(K)$ in~\eqref{eq-value-function} when $\tau^e \to \infty$, and $U_i\in\mathbb{R}^{m\times p}$ are random i.i.d. matrices and uniformly distributed from the set $\{U\mid\|U\|_F=1\}$. 
The proof proceeds by decomposing the estimation error into the following three components:
\begin{equation}
\|\widehat{\nabla}J_{\gamma}(K)-\nabla J_{\gamma}(K)\|_F
        \leq \|\widehat{\nabla}J_{\gamma}(K)-\widetilde{\nabla}J_{\gamma}(K)\|_F
        +\|\widetilde{\nabla}J_{\gamma}(K)-\nabla J_{\gamma}^r(K)\|_F
        +\|\nabla J_{\gamma}^r(K)-\nabla J_{\gamma}(K)\|_F.
\nonumber
\end{equation}
In Lemma~\ref{lem5} in Appendix~\ref{proof of lem-gradient estimation error}, we establish conditions guaranteeing that $\nabla J_{\gamma}^r(K)$ and $\widetilde{\nabla}J_{\gamma}(K)$ are well-defined. We then derive upper bounds for each of the three error terms separately. More details are included in Appendix~\ref{proof of lem-gradient estimation error}. 

\subsection{Convergence Properties of PG Method}
\label{section3.2}
Based on the high-probability error bound for the gradient estimate derived in the previous subsection, we are now ready to analyze the convergence properties of the PG method in~\eqref{eq-pg} for a fixed discount factor $\gamma$. Specifically, by ensuring that the gradient estimation error $\|\widehat{\nabla} J_{\gamma}(K)-\nabla J_{\gamma}(K)\|_F$ is controlled within $\epsilon/3$, the policy $K^{j}$ obtained by~\eqref{eq-pg} is iteratively refined until the estimated gradient norm $\|\widehat{\nabla} J_{\gamma}(K)\|_F$ falls below the threshold $2\epsilon/3$. Formally, we prove the following result. 
\begin{theorem}\label{theo1}
     Suppose that $K^0\in\mathcal{S}_{\gamma}(\nu)$. For a desired accuracy $\epsilon>0$, let $\tau^e$, $N^e$, $r$, and $\eta$ satisfy 
    \begin{equation}
    \begin{aligned}
       \tau^e \geq& \frac{2\nu}{\ell_0} \log\left(\frac{36d^2\nu^2mp}{r\epsilon\ell_0}\right),~r\leq\min\Big\{D,~\frac{\nu}{G},~\frac{\epsilon}{9L}\Big\},\\
       N^e\geq&\frac{81(mpG)^2(d^2+1)^2}{\epsilon^2}\log\frac{15}{\delta_0},~ \eta\leq \min\Big\{\frac{D}{G^0+\epsilon},~\frac{1}{2 L}\Big\}.
    \end{aligned}
    \nonumber
    \end{equation}
 Then, with probability at least $1-(M+1)\delta_0$, the zeroth-order PG method in \eqref{eq-pg} based on $\widehat{\nabla}J_{\gamma}(K^{j})$ obtained by Algorithm~\ref{alg:two_point_grad} will return a policy $K^{j}$ satisfying $\|\nabla J_{\gamma}(K^{j})\|_F\leq \epsilon$ within at most
  \begin{equation}
   M\geq \frac{9J_{\gamma}(K^0)}{\eta \epsilon^2}
   \label{eq-iteration-PG}
  \end{equation}
iterations.
Here, $D$, $G$ and $L$ are given by Lemma~\ref{lem1} and $G^0$ is given by Lemma~\ref{bound of gradient}.
\end{theorem}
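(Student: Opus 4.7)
I would prove Theorem~\ref{theo1} by the classical ``noisy descent'' argument for nonconvex smooth optimization, specialized to the local smoothness/Lipschitz structure of $J_\gamma$ established in Lemma~\ref{lem1}. The main ingredients are: (i) Lemma~\ref{lem-gradient estimation error}, which controls the per-iteration estimation error $\|\widehat{\nabla}J_\gamma(K^j)-\nabla J_\gamma(K^j)\|_F\le \epsilon/3$ with probability at least $1-\delta_0$; (ii) a careful induction that keeps every iterate inside the sublevel set $\mathcal{S}_\gamma(\nu)$ and keeps consecutive iterates within the local-smoothness radius $D$; (iii) a telescoped descent bound that forces termination within $M$ steps; and (iv) a union bound over the iterations.

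\textbf{Step 1 (good event and induction).} Condition on the event $\mathcal{E}_j=\{\|\widehat{\nabla}J_\gamma(K^j)-\nabla J_\gamma(K^j)\|_F\le \epsilon/3\}$ holding for all $j=0,1,\dots,M$; by Lemma~\ref{lem-gradient estimation error} and a union bound this event has probability at least $1-(M+1)\delta_0$ under the stated choices of $\tau^e,N^e,r$. Proceed by induction on $j$ to show (a) $K^j\in\mathcal{S}_\gamma(\nu)$ and (b) $\|K^{j+1}-K^j\|_F\le D$. The base case is the hypothesis $K^0\in\mathcal{S}_\gamma(\nu)$. For the inductive step, combine Lemma~\ref{bound of gradient} with the error bound to get
\[
\|\widehat{\nabla}J_\gamma(K^j)\|_F\le \|\nabla J_\gamma(K^j)\|_F+\epsilon/3\le G^0+\epsilon,
\]
so that $\|K^{j+1}-K^j\|_F=\eta\|\widehat{\nabla}J_\gamma(K^j)\|_F\le \eta(G^0+\epsilon)\le D$ by the choice of $\eta$. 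This verifies (b) and makes the local $(L,D)$-smoothness of Lemma~\ref{lem1} applicable at $K^j$.

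\textbf{Step 2 (noisy descent lemma).} Writing $e_j=\widehat{\nabla}J_\gamma(K^j)-\nabla J_\gamma(K^j)$ and using smoothness,
\[
J_\gamma(K^{j+1})\le J_\gamma(K^j)-\eta\langle\nabla J_\gamma(K^j),\widehat{\nabla}J_\gamma(K^j)\rangle+\tfrac{L\eta^2}{2}\|\widehat{\nabla}J_\gamma(K^j)\|_F^2.
\]
Substituting $\nabla J_\gamma(K^j)=\widehat{\nabla}J_\gamma(K^j)-e_j$ and using $\eta\le 1/(2L)$ yields
\[
J_\gamma(K^{j+1})\le J_\gamma(K^j)-\tfrac{3\eta}{4}\|\widehat{\nabla}J_\gamma(K^j)\|_F^2+\eta\|e_j\|_F\|\widehat{\nabla}J_\gamma(K^j)\|_F.
\]
As long as $j$ is a non-terminating iteration we have $\|\widehat{\nabla}J_\gamma(K^j)\|_F>2\epsilon/3$, hence $\|e_j\|_F\le\epsilon/3<\tfrac12\|\widehat{\nabla}J_\gamma(K^j)\|_F$, which gives the clean one-step decrease
\[
J_\gamma(K^{j+1})\le J_\gamma(K^j)-\tfrac{\eta}{4}\|\widehat{\nabla}J_\gamma(K^j)\|_F^2\le J_\gamma(K^j)-\tfrac{\eta\epsilon^2}{9}.
\]
In particular $J_\gamma(K^{j+1})\le J_\gamma(K^j)\le\nu$, closing part (a) of the induction.

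\textbf{Step 3 (iteration count and termination).} Telescoping the one-step decrease gives $M\cdot\tfrac{\eta\epsilon^2}{9}\le J_\gamma(K^0)-J_\gamma(K^M)\le J_\gamma(K^0)$, so the algorithm cannot produce $M> 9J_\gamma(K^0)/(\eta\epsilon^2)$ non-terminating iterations on the good event. Hence within the stated $M$ iterations some $K^j$ satisfies $\|\widehat{\nabla}J_\gamma(K^j)\|_F\le 2\epsilon/3$, and the triangle inequality with $\|e_j\|_F\le\epsilon/3$ yields $\|\nabla J_\gamma(K^j)\|_F\le\epsilon$. Combining with the union-bound probability $1-(M+1)\delta_0$ of the good event completes the proof.

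\textbf{Main obstacle.} The delicate point is not the descent estimate itself but coupling the induction maintaining $K^j\in\mathcal{S}_\gamma(\nu)$ with the high-probability gradient-estimation event: Lemma~\ref{lem-gradient estimation error} presupposes $K\in\mathcal{S}_\gamma(\nu)$, while the descent argument that keeps $K^{j+1}$ in $\mathcal{S}_\gamma(\nu)$ itself relies on the error bound holding at $K^j$. Resolving this requires fixing the high-probability event over all $M+1$ gradient-estimation calls upfront via a union bound and only then running the deterministic induction; this is the source of the $(M+1)\delta_0$ probability factor and the reason the stepsize cap must incorporate both $G^0$ (from Lemma~\ref{bound of gradient}) and the noise level $\epsilon$, so that $\|K^{j+1}-K^j\|_F\le D$ holds unconditionally on the good event.
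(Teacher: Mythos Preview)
Your proposal is correct and follows essentially the same approach as the paper's proof: both condition on the high-probability gradient-estimation event via a union bound, run an induction keeping iterates in $\mathcal{S}_\gamma(\nu)$ and within the local-smoothness radius $D$, derive the same per-step decrease $-\eta\epsilon^2/9$, and telescope to bound the iteration count. The only cosmetic difference is that the paper bounds the cross term $\langle w_j,\widehat{\nabla}J_\gamma(K^j)\rangle$ via a completion-of-squares identity (writing $\|\tfrac12\widehat{\nabla}J_\gamma(K^j)+w_j\|_F^2$ and discarding it) rather than your direct Cauchy--Schwarz bound, but both routes land on the identical inequality $J_\gamma(K^{j+1})-J_\gamma(K^j)\le -\eta\epsilon^2/9$.
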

\begin{proof}
For notational convenience, define $w_j=\nabla J_{\gamma}(K^{j})-\widehat{\nabla}J_{\gamma}(K^{j})$ and $M=\min\{j:\|\widehat{\nabla}J_{\gamma}(K^{j})\|_F\leq 2\epsilon/3\}$. 
Suppose that $K^{j}\in\mathcal{S}_{\gamma}(\nu)$, which implies $J_{\gamma}(K^{j})\leq \nu$. It follows that
\begin{equation*}
        \|K^{j+1}-K^{j}\|_F=\eta\|\widehat{\nabla}J_{\gamma}(K^{j})\|_F
                         \leq \eta\left(\|\nabla J_{\gamma}(K^{j})\|_F+\|w_j\|_F\right)
                         \overset{(a)}{\leq}\eta(G^0+\epsilon/3)\overset{(b)}{\leq}D,
\end{equation*}
where (a) follows from Lemma~\ref{bound of gradient} based on $K^{j}\in\mathcal{S}_{\gamma}(\nu)$, and Lemma~\ref{lem-gradient estimation error}, which guarantees $\|w_j\|_F \leq \epsilon/3$ under the chosen parameters $\tau^e$, $r$, and $N^e$; (b) holds by the choice of $\eta$. Thus, we conclude that $K^{j+1}$, $K^{j}$ satisfy the conditions of local smoothness described in Definition~\ref{defi-smoothness}, which further implies that
\begin{equation}\label{eq-29}
    \begin{aligned}
            J_{\gamma}(K^{j+1})-J_{\gamma}(K^{j})
            &\leq \langle\nabla J_{\gamma}(K^{j}),K^{j+1}-K^{j}\rangle+\frac{L}{2}\|K^{j+1}-K^{j}\|^2_F\\
            &=-\eta\langle\nabla J_{\gamma}(K^{j}),\widehat{\nabla} J_{\gamma}(K^{j})\rangle+\frac{L\eta^2}{2}\|\widehat{\nabla} J_{\gamma}(K^{j})\|_F^2\\
            &=-\eta\left[\|\widehat{\nabla} J_{\gamma}(K^{j})\|_F^2+\langle w_j,\widehat{\nabla} J_{\gamma}(K^{j}) \rangle-\frac{L\eta}{2}\|\widehat{\nabla} J_{\gamma}(K^{j})\|_F^2\right]\\
            &\leq-\eta\left[\left\|\frac{1}{2}\widehat{\nabla}J_{\gamma}(K^{j})+w_j\right\|_F^2+\left(\frac{3}{4}-\frac{L\eta}{2}\right)\|\widehat{\nabla}J_{\gamma}(K^{j})\|_F^2)-\|w_j\|_F^2\right]\\
            &\leq-\eta\left(\frac{3-2L\eta}{4}\|\widehat{\nabla} J_{\gamma}(K^{j})\|_F^2-\|w_j\|_F^2\right)\\
            &\overset{(a)}{\leq}-\eta\left(\frac{1}{2}\|\widehat{\nabla} J_{\gamma}(K^{j})\|_F^2-\frac{\epsilon^2}{9}\right),\\
    \end{aligned}
\end{equation}
where (a) follows from the choice of $\eta$ such that $L\eta\leq1/2$, and $\|w_j\|_F\leq \epsilon/3$ which is obtained by applying Lemma~\ref{lem-gradient estimation error} based on the choice of $\tau^e$, $r$ and $N^e$.
For $0\leq j\leq M-1$, the definition of $M$ implies $ \|\widehat{\nabla}J_{\gamma}(K^{j})\|_F > 2\epsilon/3$, which, upon substitution into~\eqref{eq-29}, yields
\begin{equation}\label{eq-29-1}
     J_{\gamma}(K^{j+1})- J_{\gamma}(K^{j})\leq-\frac{\eta \epsilon^2}{9}\leq 0.
\end{equation}
Consequently, we get $K^{j+1}\in\mathcal{S}_{\gamma}(\nu)$. 

Now, given that $K^0\in\mathcal{S}_{\gamma}(\nu)$, the iterate $K^1$ generated by~\eqref{eq-pg} remains within $\mathcal{S}_{\gamma}(\nu)$. By mathematical induction, we can establish that for all $j=0,\ldots, M-1$, \eqref{eq-29-1} holds.
Summing~\eqref{eq-29-1} for $j=0,\ldots,M-1$, and utilizing the fact that $\|\widehat{\nabla} J_{\gamma}(K^{j})\|_F > 2\epsilon/3$ for all $j < M$, we obtain
\begin{equation}
    \begin{aligned}
        J_{\gamma}(K^{M}) 
        \leq J_{\gamma}(K^{0}) - \sum_{j=0}^{M-1} \frac{\eta\epsilon^2}{9} 
        = J_{\gamma}(K^{0}) - \frac{M\eta\epsilon^2}{9}.
    \end{aligned}
    \nonumber
\end{equation}
Rearranging the terms and noting that $J_{\gamma}(K^{M}) \geq 0$, we have
\begin{equation}
    \frac{M\eta\epsilon^2}{9} \leq J_{\gamma}(K^{0}) - J_{\gamma}(K^{M}) \leq J_{\gamma}(K^{0}).
    \nonumber
\end{equation}
Hence, the condition $\|\widehat{\nabla}J_{\gamma}(K^M)\|_F \leq 2\epsilon/3$ is guaranteed to be satisfied within at most
\begin{equation}
    \frac{9 J_{\gamma}(K^0)}{\eta \epsilon^2}
    \nonumber
\end{equation}
iterations.

Moreover, recall that under the choice of $\tau^e$, $r$ and $N^e$ in Theorem~\ref{theo1}, $\|\widehat{\nabla}J_{\gamma}(K^{j})-\nabla J_{\gamma}(K^{j})\|_F\leq \epsilon/3$ holds for any single iteration with probability at least $1-\delta_0$ according to Lemma~\ref{lem-gradient estimation error}. 
Taking the union bound for $j=0,\ldots,M$, we obtain that $\|\widehat{\nabla}J_{\gamma}(K^M)-\nabla J_{\gamma}(K^M)\|_F\leq \epsilon/3$ holds with probability at least $1-(M+1)\delta_0$. Combining $\|\nabla J_{\gamma}(K^M)\|_F\leq\|\widehat{\nabla}J_{\gamma}(K^M)\|_F +\|\widehat{\nabla}J_{\gamma}(K^M)-\nabla J_{\gamma}(K^M)\|_F$, we conclude that with probability at least $1-M\delta_0$, $\|\nabla J_{\gamma}(K^M)\|_F\leq \epsilon$ holds. 
Since $K^0$ is a stabilizing policy and $J_{\gamma}(K^{j})$ is monotone decreasing for $j=0,\ldots,M-1$, $K^M$ generated by the zeroth-order PG method~\eqref{eq-pg}  is also a stabilizing policy with $\|\nabla J_{\gamma}(K^M)\|_F\leq \epsilon$. 
Given the non-convex landscape, we define $\overline{J}_{\gamma}=\max_{K\in\mathcal{K}_{\epsilon,\gamma}} J_{\gamma} (K)$ where $\mathcal{K}_{\epsilon,\gamma} = \{ K \mid \|\nabla J_{\gamma}(K)\| \leq \epsilon \}$. Therefore, it follows immediately from the definition above that
$J_{\gamma}(K^{M})\leq \overline{J}_{\gamma}$. The proof is now completed.
\end{proof}

{\bf Discussions.} As mentioned at the beginning of Section~\ref{section 3}, Theorem~\ref{theo1} guarantees convergence to a $\epsilon$-stationary point satisfying $\|\nabla J_{\gamma}(K^{j})\|_F \leq \epsilon$, which is different from the results in~\cite{perdomo2021stabilizing,zhao2024convergence} that allow convergence of cost function to a specific cost level. Such a difference is caused by the facts that the set of stabilizing SOF controllers may be disconnected and the cost landscape is riddled with saddle points~\cite{fatkhullin2021optimizing}. 
Nonetheless, our algorithmic framework hinges on establishing that, despite these optimization landscape hurdles, the zeroth-order PG method can reach a region of SOFs with small gradient norm, which ensures that the cost remains monotonically decreasing along the way (i.e., $J_{\gamma}(K^{j}) \leq J_{\gamma}(K_0)$), and in turn facilitates a progressive increase of the discount factor $\gamma$.

\subsection{Learning to Stabilize Systems by Discount Methods}
\label{Learning to Stabilize Systems by Discount Methods}
Building upon the convergence of the PG method in Section~\ref{section3.2}, we now elaborate on the evolution of the discount factor in~\eqref{eq-19b}. We propose an adaptive updating rule $\gamma_{k+1}=(1+\zeta \alpha_k)\gamma_k$, which is central to navigating the policy from the initial discounted regime toward the stability region $\mathcal{S}_1$ of the original system~\eqref{eq-original system}. To conclude the design phase, we present an algorithmic framework for learning a stabilizing SOF policy through the discount method.

For a policy $K\in\mathcal{S}_{\gamma}$, we now show how to find a larger discount factor $\gamma'>\gamma$ such that $K$ remains in the stabilizing set $\mathcal{S}_{\gamma'}$. 

\begin{lemma}\label{lem8}
    Let $K\in\mathcal{S}_\gamma$. If $\gamma'<(1+\alpha)\gamma$ with
    \begin{equation}\label{eq-updating a}
        \alpha=\frac{\ell_0}{J_\gamma(K)-\ell_0},
    \end{equation}
    then $K\in\mathcal{S}_{\gamma'}$.
\end{lemma}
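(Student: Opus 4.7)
The key idea is to convert the strong-stability condition defining $\mathcal{S}_{\gamma'}$ into a quantitative bound on $\gamma \rho(A-BKC)^2$ that is controlled by $J_\gamma(K)$. Write $A_{cl} = A - BKC$. Since $K \in \mathcal{S}_\gamma$, the Lyapunov equation~\eqref{eq-Lya1} admits the closed-form series solution $P_K(\gamma) = \sum_{t=0}^{\infty} \gamma^t (A_{cl}^\top)^t (Q + C^\top K^\top R K C) A_{cl}^t$. My plan is to push all the structure down to the lower bound $Q \succeq \ell_0 I$ from~\eqref{eq-sys-para-norm} and then extract a geometric series in $\gamma \rho(A_{cl})^2$.

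Concretely, I would first drop the control-cost term $C^\top K^\top R K C \succeq 0$ to get $P_K(\gamma) \succeq \ell_0 \sum_{t=0}^{\infty} \gamma^t (A_{cl}^\top)^t A_{cl}^t$. Taking traces and using the elementary chain $\Tr((A_{cl}^\top)^t A_{cl}^t) = \|A_{cl}^t\|_F^2 \geq \|A_{cl}^t\|^2 \geq \rho(A_{cl}^t)^2 = \rho(A_{cl})^{2t}$ yields
\begin{equation*}
J_\gamma(K) = \Tr(P_K(\gamma)) \;\geq\; \ell_0 \sum_{t=0}^{\infty} \gamma^t \rho(A_{cl})^{2t} \;=\; \frac{\ell_0}{1-\gamma\rho(A_{cl})^2},
\end{equation*}
where the geometric sum converges precisely because $\gamma \rho(A_{cl})^2 < 1$ by the assumption $K \in \mathcal{S}_\gamma$. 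Rearranging gives $\gamma \rho(A_{cl})^2 \leq (J_\gamma(K) - \ell_0)/J_\gamma(K)$.

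It then suffices to combine this bound with the hypothesis $\gamma' < (1+\alpha)\gamma$. Plugging in $\alpha = \ell_0/(J_\gamma(K)-\ell_0)$ yields $1+\alpha = J_\gamma(K)/(J_\gamma(K)-\ell_0)$, so
\begin{equation*}
\gamma' \rho(A_{cl})^2 \;<\; (1+\alpha)\gamma \rho(A_{cl})^2 \;\leq\; \frac{J_\gamma(K)}{J_\gamma(K)-\ell_0}\cdot\frac{J_\gamma(K)-\ell_0}{J_\gamma(K)} \;=\; 1,
\end{equation*}
which is exactly $\sqrt{\gamma'}\rho(A-BKC) < 1$, i.e., $K \in \mathcal{S}_{\gamma'}$.

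The only slightly non-routine step is the lower bound $\Tr((A_{cl}^\top)^t A_{cl}^t) \geq \rho(A_{cl})^{2t}$, which I expect to be the main conceptual obstacle; it follows by chaining $\|M\|_F^2 \geq \|M\|^2 \geq \rho(M)^2$ and then the spectral mapping $\rho(A_{cl}^t) = \rho(A_{cl})^t$. Once this is in hand, everything else is algebraic manipulation of the geometric series and the definition of $\alpha$.
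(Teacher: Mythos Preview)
Your proof is correct, but the route is genuinely different from the paper's. Both arguments establish the same core inequality $\gamma\rho(A-BKC)^2 \le 1-\ell_0/J_\gamma(K)$ and then multiply by $1+\alpha$. You get there by expanding the $P_K$-Lyapunov series \eqref{eq-Lya1}, dropping the $C^\top K^\top R K C$ term, using $Q\succeq\ell_0 I$, and lower bounding $\Tr\big((A_{cl}^\top)^tA_{cl}^t\big)=\|A_{cl}^t\|_F^2\geq\rho(A_{cl})^{2t}$ to obtain a geometric series. The paper instead works with the $\Sigma_K$-Lyapunov equation \eqref{eq-Lya2}: from $J_\gamma(K)\geq\ell_0\Tr(\Sigma_K)$ it bounds $\Sigma_K^{-1}\succeq(\ell_0/J_\gamma(K))I$, then pre- and post-multiplies \eqref{eq-Lya2} by $\Sigma_K^{-1/2}$ to exhibit a similar matrix $W=\Sigma_K^{-1/2}\sqrt{\gamma}(A-BKC)\Sigma_K^{1/2}$ with $\|W\|\leq\sqrt{1-\ell_0/J_\gamma(K)}$, so that $\sqrt{\gamma}\rho(A-BKC)=\rho(W)\leq\|W\|$. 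Your argument is the more elementary of the two, needing only trace and norm monotonicity. What the paper's similarity argument buys is the additional decay bound $\big\|(\sqrt{\gamma}(A-BKC))^t\big\|\leq\sqrt{J_\gamma(K)/\ell_0}\,(1-\ell_0/J_\gamma(K))^{t/2}$, which is reused in the proof of Lemma~\ref{lem7}; your approach does not directly furnish that operator-norm decay.
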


The condition derived in Lemma~\ref{lem8}, whose proof is deferred to Appendix~\ref{proof of lem8}, provides a theoretical basis for updating the discount factor. 
However, the calculation of the scalar $\alpha$ in \eqref{eq-updating a} requires the exact value of the cost $J_\gamma(K)$, which is typically unknown in a model-free setting. 
Note that $\ell_0$ is a known constant satisfying $\ell_0 I\preceq Q$. 
Therefore, to implement the update rule, we replace $J_\gamma(K)$ with an estimate obtained from system trajectories. 
Define the empirical cost estimator $\widehat{J}_{\gamma}^{\tau,N}(K)$ as the average of the truncated costs over $N$ independent trajectories:
\begin{equation}\label{eq-estimator-def}
\widehat{J}_{\gamma}^{\tau,N}(K) = \frac{1}{N} \sum_{i=1}^N J_{\gamma,x_0^i}^{\tau}(K),
\end{equation}
where $J_{\gamma,x_0^i}^{\tau}(K)$ is defined in~\eqref{eq-value-function} and $N\in \mathbb{N}_+$. The following lemma characterizes the estimation accuracy of~\eqref{eq-estimator-def}.

\begin{lemma}~\cite[Lemma 5]{zhao2024convergence}
\label{lem4}
    Let $K\in \mathcal{S}_\gamma$ and $0<\delta_1<1$ be a constant. If the simulation time $\tau$ and the number of system trajectory $N$ satisfy
    \begin{equation}\label{eq-tau-N}
        \tau\geq \frac{2J_\gamma(K)}{\ell_0}\log\left(\frac{J_\gamma(K)d^2}{\ell_0}\right),
        N\geq 8d^4\log\frac{2}{\delta_1},
    \end{equation}
    we obtain that the following holds with probability at least $1-\delta_1$:
    \begin{equation}\label{eq-iteration-PG-hatJ}
        |J_\gamma(K)-\widehat{J}_\gamma^{\tau,N}(K)|\leq \frac{1}{2}J_\gamma(K),
    \end{equation}
    where $\widehat{J}_\gamma^{\tau,N}(K)$ is given by~\eqref{eq-estimator-def}.
\end{lemma}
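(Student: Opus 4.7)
The plan is a standard bias/variance split: control the truncation bias $|J_\gamma(K)-\mathbb{E}_{x_0}[J_{\gamma,x_0}^\tau(K)]|$ by a geometric tail argument, and the sampling error $|\mathbb{E}_{x_0}[J_{\gamma,x_0}^\tau(K)]-\widehat{J}_\gamma^{\tau,N}(K)|$ via Hoeffding's inequality, each pushed below $J_\gamma(K)/4$ so the triangle inequality delivers the stated $J_\gamma(K)/2$ bound.

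Since $K\in\mathcal{S}_\gamma$, trivially $K\in\mathcal{S}_\gamma(J_\gamma(K))$, so Lemma~\ref{lem-strong stability} with $\nu=J_\gamma(K)$ supplies a $(\kappa,\varrho)$-strong-stability decomposition of the damped closed loop~\eqref{eq-equal system} with $\kappa=\sqrt{J_\gamma(K)/\ell_0}$ and $\varrho=1/(2\kappa^2)=\ell_0/(2J_\gamma(K))$. This yields the per-trajectory geometric bound $\|\tilde{x}_t\|^2\leq \kappa^2 d^2(1-\varrho)^{2t}$ using Assumption~\ref{ass1}, and together with $\Sigma_0=I_n$ also the operator bound $\|P_K\|\leq \Tr(P_K)=J_\gamma(K)$.

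For the truncation bias, iterating the Lyapunov identity~\eqref{eq-Lya1} along a single damped trajectory produces the clean tail formula $J_{\gamma,x_0}^\infty(K)-J_{\gamma,x_0}^\tau(K)=\tilde{x}_\tau^\top P_K\tilde{x}_\tau$. Taking expectation over $x_0$ and substituting the strong-stability decay and $\|P_K\|\leq J_\gamma(K)$ gives
\begin{equation*}
0\leq J_\gamma(K)-\mathbb{E}_{x_0}[J_{\gamma,x_0}^\tau(K)] \leq J_\gamma(K)\cdot \kappa^2 d^2 \exp(-2\varrho\tau),
\end{equation*}
and substituting the values of $\kappa,\varrho$ together with the hypothesis $\tau\geq (2J_\gamma(K)/\ell_0)\log(J_\gamma(K)d^2/\ell_0)$ makes the right-hand side at most $J_\gamma(K)/4$.

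For the sampling error, the same Lyapunov identity gives the deterministic ceiling $0\leq J_{\gamma,x_0^i}^\tau(K)\leq (x_0^i)^\top P_K x_0^i\leq J_\gamma(K)d^2$ almost surely, so the summands in $\widehat{J}_\gamma^{\tau,N}(K)$ are bounded i.i.d.\ random variables. Hoeffding's inequality then yields
\begin{equation*}
\P\!\left(\big|\mathbb{E}_{x_0}[J_{\gamma,x_0}^\tau(K)]-\widehat{J}_\gamma^{\tau,N}(K)\big|>J_\gamma(K)/4\right)\leq 2\exp(-N/(8d^4)),
\end{equation*}
and solving this $\leq \delta_1$ reproduces $N\geq 8d^4\log(2/\delta_1)$. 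The main obstacle is picking the right per-sample ceiling for Hoeffding: bounding $J_{\gamma,x_0}^\tau(K)$ naively via $\|Q+C^\top K^\top RKC\|\sum_t\|\tilde{x}_t\|^2$ would inject extra $\kappa^2\sim J_\gamma(K)/\ell_0$ factors and blow up the required $N$, whereas reading the infinite sum as $x_0^\top P_K x_0$ and invoking $\|P_K\|\leq J_\gamma(K)$ is what yields the clean $d^4$ dependence; everything else is bookkeeping around this observation.
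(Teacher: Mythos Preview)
The paper does not supply its own proof of this lemma; it is imported verbatim as \cite[Lemma~5]{zhao2024convergence} and used as a black box. Your bias--variance split, controlling the truncation error via the Lyapunov tail identity $J_{\gamma,x_0}^\infty(K)-J_{\gamma,x_0}^\tau(K)=\tilde{x}_\tau^\top P_K\tilde{x}_\tau$ and the sampling error via Hoeffding with the per-sample ceiling $J_{\gamma,x_0^i}^\tau(K)\le (x_0^i)^\top P_K x_0^i\le J_\gamma(K)d^2$, is exactly the standard argument and recovers the stated constants.

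One small stylistic difference from the surrounding machinery in the paper: where you invoke the strong-stability decomposition (Lemma~\ref{lem-strong stability}) to get $\|\tilde{x}_t\|^2\le\kappa^2 d^2(1-\varrho)^{2t}$, the paper's own analogous truncation bounds (e.g.\ \eqref{eq-error of finite time} in the proof of Lemma~\ref{lem-gradient estimation error}) instead use Lemma~\ref{norm(x_t)^2}, which derives the decay $\|\tilde{x}_t\|^2\le(1-\ell_0/\nu)^t(\nu d^2/\ell_0)$ directly from the Lyapunov function $V(x)=x^\top P_K x$. The two routes give matching rates after substituting $\kappa^2=J_\gamma(K)/\ell_0$ and $\varrho=\ell_0/(2J_\gamma(K))$, so this is cosmetic. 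Your observation that bounding the per-sample cost through $x_0^\top P_K x_0$ rather than a naive $\|Q+C^\top K^\top RKC\|\sum_t\|\tilde{x}_t\|^2$ is what yields the clean $d^4$ dependence in $N$ is the correct insight.
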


Lemma~\ref{lem4} shows that the cost $J_{\gamma}(K)$ can be estimated by $\widehat{J}_{\gamma}^{\tau,N}(K)$ with high probability. Specifically, the inequality~\eqref{eq-iteration-PG-hatJ} implies that $J_{\gamma}(K) \leq 2\widehat{J}_{\gamma}^{\tau,N}(K)$, and we have $\ell_0/(2\widehat{J}_{\gamma}^{\tau,N}(K)-\ell_0)\leq \ell_0/(J_\gamma(K)-\ell_0)$. 
Therefore, to design a practical update rule for the discount factor $\gamma$, we substitute the cost $J_{\gamma}(K)$ with its upper bound $2\widehat{J}_{\gamma}^{\tau, N}(K)$. Furthermore, we introduce a scaling parameter $\zeta \in (0, 1)$ to ensure the condition $\gamma'<(1+\alpha)\gamma$ as discussed in Lemma~\ref{lem8}. 
Combining these, the update rate for $\gamma_k$ is designed as $\gamma_{k+1}=(1+\zeta \alpha_k)\gamma_k$ with
\begin{equation}\label{eq-updating a-new}
    \alpha_k = \frac{\ell_0}{2\widehat{J}_{\gamma_k}^{\tau_k,N}(K_{k+1}) - \ell_0},
\end{equation}
where $K_{k+1}\in\mathcal{S}_{\gamma_k}$ is obtained by the PG method in~\eqref{eq-pg}, which satisfies $\|\widehat{\nabla} J_{\gamma_{k}}(K_{k+1})\|_F\leq 2\epsilon/3$.

Based on the update rate for the discount factor $\gamma_k$, we present our algorithmic framework for learning a stabilizing SOF policy in Algorithm~\ref{alg:learning_stabilize}. Note that for any $k=0,1,\dots$, the gradient estimate $\widehat{\nabla}J_{\gamma_k}(K^j)$ in lines~3-4 of Algorithm~\ref{alg:learning_stabilize} is obtained by Algorithm~\ref{alg:two_point_grad}, where the discount factor $\gamma_k$ is used and the parameters $\tau^e$, $N_e$ and $r$ of Algorithm~\ref{alg:two_point_grad} also becomes dependent on the index $k$ in Algorithm~\ref{alg:two_point_grad}.

\begin{algorithm2e}[h]
\SetNoFillComment
\caption{Learning SOF}
\label{alg:learning_stabilize}
\KwIn{Initial policy $K_{0}=0$, discount factor $\gamma_0<1/\rho^2(A)$, desired accuracy $\epsilon>0$, decay factor $0<\zeta<1$, simulation time $\tau_k$, number of system trajectories $N$, stepsize $\eta_k$.}
\For{$k=0,1,\dots$}{
    Let $j=0$ and $K_0=K_{k}$.\\
    \While{$\|\widehat{\nabla}J_{\gamma_k}(K^{j})\|_F > 2\epsilon/3$}{
        Update $K^{j+1} = K^{j} - \eta_k \widehat{\nabla}J_{\gamma_k}(K^{j})$.\\
        Set $j = j+1$.\\
    }
    Set $K_{k+1}=K^{j}$.\\
    Update $\gamma_{k+1}=(1+\zeta \alpha_k)\gamma_k$ with $\alpha_{k}=\frac{\ell_0}{2\widehat{J}_{\gamma_k}^{\tau_k,N}(K_{k+1})-\ell_0}$.\\
    \If{$\gamma_{k+1}\geq 1$}{
        \Return $K_{k+1}$.
    }
}
\KwOut{policy $K_{k+1}$.}
\end{algorithm2e}

\section{Algorithm Analysis}
We now analyze the convergence and sample complexity of Algorithm~\ref{alg:learning_stabilize}, which are the main results of this work. We begin by identifying the conditions in Section~\ref{Convergence Analysis} under which  Algorithm~\ref{alg:learning_stabilize} can stabilize the original system~\eqref{eq-original system} with high probability. In section~\ref{Sample Complexity}, we characterize the overall sample complexity of learning a stabilizing SOF policy via the discount method.

\subsection{Convergence Analysis}
\label{Convergence Analysis}
The convergence analysis of Algorithm~\ref{alg:learning_stabilize} will be built upon the analysis of gradient estimation error in Section~\ref{subsection gradient estimation}, the convergence properties of PG methods in Section~\ref{section3.2}, and the update strategy for the discount factor $\gamma_k$ in Section~\ref{Learning to Stabilize Systems by Discount Methods}. Specifically, the following theorem specifies sufficient conditions on various parameters in Algorithm~\ref{alg:two_point_grad}-\ref{alg:learning_stabilize} to satisfy so that Algorithm~\ref{alg:learning_stabilize} returns a stabilizing SOF policy with high probability.

\begin{theorem}\label{theorem2}
     Let constants $0<\delta_0<e^{-2}/2$ and $0<\delta_1<1$, and consider any $\epsilon>0$. For any $k\geq0$,
     let the parameters $\tau_k,N,\eta_k$ of Algorithm~\ref{alg:learning_stabilize} satisfy
     \begin{equation}
    \begin{aligned}\label{eq-theo2-1}
              \tau_k\geq\frac{2\nu_k}{\ell_0}\log\frac{\nu_kd^2}{\ell_0},
              N\geq 8d^4\log\frac{2}{\delta_1},
              \eta_k= \min\Big\{\frac {D_k}{G_k^0+\epsilon},\frac{1}{2L_k}\Big\},
         \end{aligned}
     \end{equation}
     and let the parameters $\tau_k^e,N_k^e,r_k$ of Algorithm~\ref{alg:two_point_grad} satisfy
     \begin{equation}
         \begin{aligned} \label{eq-theo2-2}
           \tau_k^e \geq \frac{2\nu_k}{\ell_0} \log\left(\frac{36d^2\nu_k^2mp}{r_k\epsilon\ell_0}\right),
           N_k^e\geq\frac{81(mpG_k)^2(d^2+1)^2}{\epsilon^2}\log\frac{15}{\delta_0},
           r_k\leq\min\Big\{D_k,~\frac{\nu_k}{G_k},~\frac{\epsilon}{9L_k}\Big\},
         \end{aligned}
     \end{equation}   
     where $\nu_0=J_{\gamma_0}(K_0)$ and $\nu_k={8\big(\widehat{J}_{\gamma_{k-1}}^{\tau_{k-1},N}(K_{k})\big)^3}/((1-\zeta)\ell_0^2)$ if $k\ge1$ with $\widehat{J}_{\gamma_{k-1}}^{\tau_{k-1},N}(K_{k})$ given by line~7 of Algorithm~\ref{alg:learning_stabilize}; 
     $D_k$, $L_k$, $G_k$ and $G^0_k$ are given by Lemmas~\ref{lem1}-\ref{bound of gradient} with $\nu$ replaced by $\nu_k$. 
     If the iteration $k$ in Algorithm~\ref{alg:learning_stabilize} exceeds $k'$ with
    \begin{equation*}
        k'=\frac{\log(1/\gamma_0)}{\log(1+\zeta\ell_0/(3\overline{J}-\ell_0))},
    \end{equation*}
where $\overline{J}$ satisfies that $\overline{J}\ge \max\{J_{\gamma}(K)\mid K\in\mathbb{R}^{m\times p},\|\nabla J_{\gamma}(K)\|_F\leq \epsilon,\gamma \in (0, 1]\}$, then Algorithm~\ref{alg:learning_stabilize} returns a stabilizing policy $K_{k'+1}$ 
    with probability at least $1-k'\delta_1-\sum_{k=0}^{k'}(M_k+1)\delta_0$, where 
    $M_k=9\nu_k/(\eta_k\epsilon^2)$.
\end{theorem}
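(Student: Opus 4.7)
The plan is to proceed by induction on the outer iteration index $k$, showing that with high probability the following invariant is maintained: at the start of iteration $k$, we have $K_k\in\mathcal{S}_{\gamma_k}(\nu_k)$, so all parameter choices in~\eqref{eq-theo2-1}--\eqref{eq-theo2-2} are admissible for that iteration. The argument stitches together three independent building blocks already available: (i) Theorem~\ref{theo1} for the inner zeroth-order PG loop, (ii) Lemma~\ref{lem4} for the empirical estimate $\widehat{J}_{\gamma_k}^{\tau_k,N}(K_{k+1})$, and (iii) Lemma~\ref{lem8} for the admissibility of the discount-factor update. At the final step, I will bound how many outer iterations $k'$ are required to drive $\gamma_k$ above $1$, and close by a union bound.

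\textbf{Base case and inner-loop step.} For $k=0$, the policy $K_0=0$ is stabilizing for the damped system since $\gamma_0<1/\rho^2(A)$ gives $\sqrt{\gamma_0}\rho(A)<1$, so $J_{\gamma_0}(K_0)=\nu_0$ is finite and $K_0\in\mathcal{S}_{\gamma_0}(\nu_0)$. Given $K_k\in\mathcal{S}_{\gamma_k}(\nu_k)$, I would invoke Theorem~\ref{theo1} with the $\nu_k$-dependent constants $D_k,G_k,L_k,G_k^0$: on an event of probability at least $1-(M_k+1)\delta_0$, the inner PG loop terminates after at most $M_k=9\nu_k/(\eta_k\epsilon^2)$ steps producing $K_{k+1}$ with $\|\widehat{\nabla}J_{\gamma_k}(K_{k+1})\|_F\le 2\epsilon/3$, $\|\nabla J_{\gamma_k}(K_{k+1})\|_F\le\epsilon$, and monotone cost $J_{\gamma_k}(K_{k+1})\le J_{\gamma_k}(K_k)\le\nu_k$. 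Because $K_{k+1}$ satisfies the $\epsilon$-stationarity condition with respect to $J_{\gamma_k}$, the definition of $\overline{J}$ yields $J_{\gamma_k}(K_{k+1})\le\overline{J}$.

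\textbf{Discount-factor update and sublevel invariant.} I would then apply Lemma~\ref{lem4} to $K_{k+1}$ with $\tau_k,N$ chosen as in~\eqref{eq-theo2-1}: on an event of probability at least $1-\delta_1$ we get $\tfrac12 J_{\gamma_k}(K_{k+1})\le \widehat{J}_{\gamma_k}^{\tau_k,N}(K_{k+1})\le \tfrac32 J_{\gamma_k}(K_{k+1})$. Hence the empirical $\alpha_k$ in~\eqref{eq-updating a-new} satisfies $\alpha_k\le \ell_0/(J_{\gamma_k}(K_{k+1})-\ell_0)$, so $\gamma_{k+1}<(1+\alpha_k^\star)\gamma_k$ with $\alpha_k^\star$ the true Lemma~\ref{lem8} threshold; this ensures $K_{k+1}\in\mathcal{S}_{\gamma_{k+1}}$. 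To upgrade this to the \emph{quantitative} statement $K_{k+1}\in\mathcal{S}_{\gamma_{k+1}}(\nu_{k+1})$, I would express $J_{\gamma_{k+1}}(K_{k+1})$ via the Lyapunov equation~\eqref{eq-Lya1}, use the factor of slack $1-\zeta$ introduced by replacing $\alpha_k^\star$ with $\zeta\alpha_k$, and bound the resulting perturbation of $P_{K_{k+1}}$ in terms of $J_{\gamma_k}(K_{k+1})$ and $\ell_0$. The cubic form $\nu_{k+1}=8(\widehat{J}_{\gamma_k}^{\tau_k,N}(K_{k+1}))^3/((1-\zeta)\ell_0^2)$ is exactly what a careful tracking of these factors produces (one power of the cost from $J_{\gamma_k}(K_{k+1})\le 2\widehat{J}$, and two more from the $\ell_0^{-2}$ scaling inherent in the Lyapunov perturbation). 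This is the main obstacle of the proof; everything else is bookkeeping.

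\textbf{Termination and union bound.} Once the invariant is carried across $k$, termination is straightforward. Using $\widehat{J}_{\gamma_k}^{\tau_k,N}(K_{k+1})\le\tfrac32 J_{\gamma_k}(K_{k+1})\le \tfrac32\overline{J}$, I obtain the uniform lower bound $\zeta\alpha_k\ge \zeta\ell_0/(3\overline{J}-\ell_0)$, so the update $\gamma_{k+1}\ge(1+\zeta\ell_0/(3\overline{J}-\ell_0))\gamma_k$ gives geometric growth and forces $\gamma_{k'+1}\ge 1$ for
\begin{equation*}
k'=\frac{\log(1/\gamma_0)}{\log(1+\zeta\ell_0/(3\overline{J}-\ell_0))},
\end{equation*}
which matches the bound in the statement. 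Finally, the failure event is the union over $k=0,\dots,k'$ of the PG inner-loop failure (probability $(M_k+1)\delta_0$ from Theorem~\ref{theo1}) and, for $k=0,\dots,k'-1$, the cost-estimation failure (probability $\delta_1$ from Lemma~\ref{lem4}); the union bound yields the claimed success probability $1-k'\delta_1-\sum_{k=0}^{k'}(M_k+1)\delta_0$.
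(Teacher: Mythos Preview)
Your proposal is correct and follows essentially the same route as the paper: induction on the outer index with the invariant $K_k\in\mathcal{S}_{\gamma_k}(\nu_k)$, invoking Theorem~\ref{theo1} for the inner loop, Lemma~\ref{lem4} for the cost estimate, Lemma~\ref{lem8} for admissibility of the discount update, a uniform lower bound $\alpha_k\ge\ell_0/(3\overline{J}-\ell_0)$ for termination, and a union bound. The paper packages the inductive step into auxiliary Lemmas~\ref{lem6}--\ref{lem12}, but the content is the same as what you sketch.

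One small discrepancy worth flagging: for the quantitative step $J_{\gamma_{k+1}}(K_{k+1})\le\nu_{k+1}$, the paper does not perturb $P_{K_{k+1}}$ via~\eqref{eq-Lya1} as you suggest, but instead works with the dual quantity $\Sigma_{K_{k+1}}(\gamma_{k+1})$ through~\eqref{eq-Lya2}--\eqref{eq-Lya3}. Concretely, it uses the strong-stability bound $\|(\sqrt{\gamma_k}(A-BK_{k+1}C))^t\|\le\sqrt{J_{\gamma_k}(K_{k+1})/\ell_0}\,(1-\ell_0/J_{\gamma_k}(K_{k+1}))^{t/2}$ (Lemma~\ref{lem-bound of rho}), rescales by the factor $(1+\zeta\alpha_k)/(1+\alpha_k)$ to exhibit the $1-\zeta$ slack, and sums the resulting geometric series to obtain $\|\Sigma_{K_{k+1}}(\gamma_{k+1})\|\le 4(\widehat{J})^2/((1-\zeta)\ell_0)$; combining with $J_{\gamma_{k+1}}(K_{k+1})\le J_{\gamma_k}(K_{k+1})\|\Sigma_{K_{k+1}}(\gamma_{k+1})\|$ gives the cubic form. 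Your high-level accounting of the exponents is right, but when you fill in the details, the $\Sigma_K$ route is the cleaner one.
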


{\bf Prior knowledge required by the algorithms.} According to Theorem~\ref{theorem2}, 
the parameters $\tau_k,\eta_k,\tau_k^e,N_k^e,r_k$ of Algorithms~\ref{alg:two_point_grad}-\ref{alg:learning_stabilize} are set at the beginning of each iteration $k$ of the for~loop from lines~1-9 in Algorithm~\ref{alg:learning_stabilize}, using $\nu_k$ in Lemmas~\ref{lem1}-\ref{bound of gradient}, where $v_k$ is computed based on $\widehat{J}_{\gamma_{k-1}}^{\tau_{k-1},N}(K_{k})$ obtained from the last iteration (i.e., $(k-1)$-th iteration). Note that for $k=0$, we let $\nu_0 = J_{\gamma_0}(0)$ which is assumed to be known. Hence, we see from the above arguments (and also Lemmas~\ref{lem1}-\ref{bound of gradient}) that to set the parameters required by Algorithms~\ref{alg:two_point_grad}-\ref{alg:learning_stabilize}, we only need the knowledge of the norm bounds $\psi,\varphi$ on the unknown system matrices $B$ and $C$ specified in \eqref{eq-sys-para-norm}, and the value of $J_{\gamma_0}(0)$ which can also be estimated using system rollouts from \eqref{eq-lq-cost}. 

{\bf The value of $\overline{J}$.}  
We now provide a constructive way to determine the value of $\overline{J}$, which is used in the lower bound on $k$ (i.e., the number of iterations of the for~loop in Algorithm~\ref{alg:learning_stabilize}) given in Theorem~\ref{theorem2}. Define the set of stationary points as $\mathcal{K}^s = \{ K \mid \nabla J_{\gamma}(K) = 0, \gamma \in (0,1] \}$. Let $J^s = \sup_{K \in \mathcal{K}^s} J_{\gamma}(K)$ be the maximum cost among all stationary points. Consider the set of $\epsilon$-stationary points $\mathcal{K}_{\epsilon}=\{ K \mid \|\nabla J_{\gamma}(K)\|_F \leq \epsilon, \gamma \in (0,1] \}$. 
For any $K' \in \mathcal{K}_{\epsilon} \setminus \mathcal{K}^s$, let $K^s$ be the projection of $K'$ onto the set of stationary points, i.e., $K^s = \arg\min_{Z \in \mathcal{K}^s} \|K' - Z\|_F$. We see that  there exists  $\epsilon>0$, such that if $\|K'-K^s\|_F>D$, $\|\nabla J_{\gamma}(K')-\nabla J_{\gamma}(K^s)\|_F >\epsilon$ holds.
Thus, there exists $\epsilon>0$ such that if $\|\nabla J_{\gamma}(K')-\nabla J_{\gamma}(K^s)\|_F\leq \epsilon$, then $\|K'-K^s\|_F\leq D$. Now, since $J_\gamma(K)$ is locally Lipschitz (as per Lemma~\ref{lem1}), it follows from Definition~\ref{defi-Lipschitz} that $|J_{\gamma}(K')-J_{\gamma}(K^s)| \leq G \|K' - K^s\|_F$, which yields $J_{\gamma}(K')\leq J_{\gamma}(K^s)+GD$. Therefore, if $\epsilon$ is chosen sufficiently small, we have $\overline{J}= J^s+GD$.

{\bf High probability convergence.} The success probability characterized in Theorem~\ref{theorem2} depends on $M_k$, which in turn depends on $\eta_k$ and $\nu_k$ computed during the execution of the algorithm (as we discussed above), for all $k=0,\dots,k'$. We now argue a uniform bound on $M_k$ over all $k=0,\dots,k'$, which will yield the success probability that can be computed before we run the algorithm.
Note that in Theorem~\ref{theorem2}, we set $\nu_k=8\big(\widehat{J}_{\gamma_{k-1}}^{\tau_{k-1},N}(K_{k})\big)^3/((1-\zeta)\ell_0^2)$ for all $k\geq1$. By Lemma~\ref{lem12} that will be shown in the proof of Theorem~\ref{theorem2} below, we obtain $\widehat{J}_{\gamma_{k-1}}^{\tau_{k-1},N}(K_{k})\leq 3\overline{J}/2$, which yields that $\nu_k\leq \widetilde{\nu}=27\overline{J}^3/((1-\zeta)\ell_0^2)$ for $k\geq1$. Combined with $\nu_0=J_{\gamma_0}(K_0)$, we have $\nu_k\leq \overline{\nu}=\max\{J_{\gamma_0}(K_0), \widetilde{\nu}\}$ for all $k\geq0$.
It follows from Lemma~\ref{lem1} that there exist $\underline{D}$, $\overline{G}^0$ and $\overline{L}$ such that $\underline{D}\leq D_k$, $\overline{G}^0\geq G^0_k$ and $\overline{L}\geq L_k$ for all $k=0,\dots, k'$. Now, we have $\eta_k\geq \underline{\eta}=\min\{\underline{D}/(\overline{G}^0+\epsilon),1/2\overline{L}\}$ holds for all $k=0,\dots, k'$. Consequently, the upper bound
\begin{equation}\label{eq-add1}
    M_k\leq \overline{M}=9\overline{\nu}/(\underline{\eta}\epsilon^2)
\end{equation}
holds for all $k=0,\dots, k'$. As a result, Algorithm~\ref{alg:learning_stabilize} returns a stabilizing policy $K_{k'+1}$ with probability at least $1-k'(\delta_1+(\overline{M}+1)\delta_0)$. 

{\bf Proof of Theorem~\ref{theorem2}.} The proof relies on the following three key technical lemmas proved in Appendix~\ref{Proofs Omitted in Theorem 2}.
\begin{lemma} \label{lem6}
    Suppose that $J_{\gamma_{k}}(K_{k+1})\leq \overline{J}$ for $\overline{J}>0$. Let $\zeta\in(0,1)$ be a decay factor and $\gamma_{k+1}=(1+\zeta \alpha_k)\gamma_k$, where $\alpha_k$ is given in~\eqref{eq-updating a-new}. If $\tau_k$ and $N$ satisfy the conditions of Lemma~\ref{lem4}, then $\sqrt{\gamma_{k+1}}\rho(A-BK_{k+1}C)<1$.
\end{lemma}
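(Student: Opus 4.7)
My plan is to chain Lemma~\ref{lem4} with Lemma~\ref{lem8}: the sufficient condition $\gamma' < (1+\alpha)\gamma$ with $\alpha = \ell_0/(J_{\gamma}(K) - \ell_0)$ from Lemma~\ref{lem8} already delivers $K \in \mathcal{S}_{\gamma'}$, which by the definition of $\mathcal{S}_{\gamma'}$ is exactly $\sqrt{\gamma'}\rho(A-BKC)<1$. Thus it suffices to show, with high probability, that the data-driven update $\gamma_{k+1}=(1+\zeta\alpha_k)\gamma_k$ satisfies this condition with $K=K_{k+1}$ and $\gamma=\gamma_k$.

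First, I would observe that $J_{\gamma_k}(K_{k+1})\leq \overline{J}<\infty$ forces $K_{k+1}\in\mathcal{S}_{\gamma_k}$, so the hypotheses of both Lemmas~\ref{lem4} and~\ref{lem8} are met. Invoking Lemma~\ref{lem4} with the prescribed $\tau_k,N$ yields, with probability at least $1-\delta_1$, the two-sided bound $\tfrac{1}{2}J_{\gamma_k}(K_{k+1}) \leq \widehat{J}_{\gamma_k}^{\tau_k,N}(K_{k+1}) \leq \tfrac{3}{2}J_{\gamma_k}(K_{k+1})$; in particular $J_{\gamma_k}(K_{k+1}) \leq 2\widehat{J}_{\gamma_k}^{\tau_k,N}(K_{k+1})$. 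Using the Lyapunov expression~\eqref{eq-Lya3} together with $\Sigma_{K_{k+1}}\succeq I_n$, I would then note $J_{\gamma_k}(K_{k+1})\geq \ell_0\Tr(\Sigma_{K_{k+1}})\geq n\ell_0\geq \ell_0$, and consequently $2\widehat{J}_{\gamma_k}^{\tau_k,N}(K_{k+1})\geq J_{\gamma_k}(K_{k+1})\geq \ell_0$, so both denominators in the definitions of $\alpha_k$ and of its ``ideal'' counterpart $\alpha_k^{\star} := \ell_0/(J_{\gamma_k}(K_{k+1})-\ell_0)$ are positive. The monotonicity of $t\mapsto \ell_0/(t-\ell_0)$ on $(\ell_0,\infty)$ then yields
\[
\alpha_k \;=\; \frac{\ell_0}{2\widehat{J}_{\gamma_k}^{\tau_k,N}(K_{k+1})-\ell_0} \;\leq\; \frac{\ell_0}{J_{\gamma_k}(K_{k+1})-\ell_0} \;=\; \alpha_k^{\star}.
\]

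Finally, since $\zeta\in(0,1)$ and $\alpha_k>0$, I get $\zeta\alpha_k < \alpha_k \leq \alpha_k^{\star}$, whence $\gamma_{k+1}=(1+\zeta\alpha_k)\gamma_k < (1+\alpha_k^{\star})\gamma_k$. Lemma~\ref{lem8} applied with $K=K_{k+1}$, $\gamma=\gamma_k$, $\gamma'=\gamma_{k+1}$ then gives $K_{k+1}\in \mathcal{S}_{\gamma_{k+1}}$, i.e., $\sqrt{\gamma_{k+1}}\,\rho(A-BK_{k+1}C)<1$. The only mildly delicate part is the positivity bookkeeping that makes $\alpha_k$ and $\alpha_k^{\star}$ both well defined and justifies the monotonicity comparison; once that is in place, the conclusion is essentially a two-step composition of Lemmas~\ref{lem4} and~\ref{lem8} together with the strict inequality $\zeta<1$ used to convert $\alpha_k\leq\alpha_k^{\star}$ into the strict bound $\gamma_{k+1}<(1+\alpha_k^{\star})\gamma_k$ needed to apply Lemma~\ref{lem8}.
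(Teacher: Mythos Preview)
Your proposal is correct and matches the paper's own proof essentially step for step: invoke Lemma~\ref{lem4} to obtain $J_{\gamma_k}(K_{k+1})\leq 2\widehat{J}_{\gamma_k}^{\tau_k,N}(K_{k+1})$, use this together with $\zeta<1$ to derive $\gamma_{k+1}<(1+\alpha_k^{\star})\gamma_k$, and conclude via Lemma~\ref{lem8}. Your extra care in verifying that the denominators in $\alpha_k$ and $\alpha_k^{\star}$ are positive is a welcome addition that the paper's proof leaves implicit.
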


\begin{lemma} \label{lem7}
      Suppose that $J_{\gamma_{k}}(K_{k+1})\leq \overline{J}$ for $\overline{J}>0$ and $|J_{\gamma_k}(K_{k+1})-\widehat{J}_{\gamma_k}^{\tau_k,N}(K_{k+1})|\leq J_{\gamma_k}(K_{k+1})/2$. Then,
        \begin{equation}
            \alpha_ k\geq \frac{\ell_0}{3\overline{J}-\ell_0},~
            J_{\gamma_{k+1}}(K_{k+1})\leq \frac{8\left(\widehat{J}_{\gamma_k}^{\tau_k,N}(K_{k+1})\right)^3}{(1-\zeta)\ell_0^2}.
        \end{equation}
\end{lemma}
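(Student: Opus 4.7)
The plan is to prove the two inequalities in turn, with the second requiring the bulk of the work. For the first claim, $\alpha_k \geq \ell_0/(3\overline{J}-\ell_0)$, I would simply observe that the hypothesis $|J_{\gamma_k}(K_{k+1})-\widehat{J}_{\gamma_k}^{\tau_k,N}(K_{k+1})| \leq J_{\gamma_k}(K_{k+1})/2$ yields $\widehat{J}_{\gamma_k}^{\tau_k,N}(K_{k+1}) \leq (3/2)\, J_{\gamma_k}(K_{k+1}) \leq (3/2)\overline{J}$, and substituting this into the definition~\eqref{eq-updating a-new} of $\alpha_k$ gives the required lower bound. The same estimate also shows $\alpha_k \leq \ell_0/(J_{\gamma_k}(K_{k+1})-\ell_0)$, so Lemma~\ref{lem8} applies to $K_{k+1}$ with the pair $(\gamma_k,\gamma_{k+1})$; hence $K_{k+1}\in\mathcal{S}_{\gamma_{k+1}}$ and $P_{K_{k+1}}(\gamma_{k+1})$ is well-defined.

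For the cost bound, my strategy is to derive a geometric-decay estimate for the iterated pushforward of $P_{K_{k+1}}(\gamma_k)$ under $M := A-BK_{k+1}C$, and then sum it against the series representation of $P_{K_{k+1}}(\gamma_{k+1})$. Abbreviating $S := Q + C^\top K_{k+1}^\top R K_{k+1} C$, $J := J_{\gamma_k}(K_{k+1})$, and $P := P_{K_{k+1}}(\gamma_k)$, the Lyapunov equation~\eqref{eq-Lya1} gives $\gamma_k M^\top P M = P - S$. Since $S \succeq \ell_0 I$ and $\|P\| \leq \Tr(P) = J$, the chain $P - \ell_0 I \preceq (1-\ell_0/\|P\|) P \preceq (1-\ell_0/J) P$ yields $M^\top P M \preceq [(1-\ell_0/J)/\gamma_k]\,P$; iterating in the PSD order I then get $(M^\top)^t P M^t \preceq [(1-\ell_0/J)/\gamma_k]^t P$ for every $t \geq 0$. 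Using $S \preceq P$ and the series $P_{K_{k+1}}(\gamma_{k+1}) = \sum_{t\geq 0} \gamma_{k+1}^t (M^\top)^t S M^t$ produces the scalar-factored bound
\begin{equation*}
P_{K_{k+1}}(\gamma_{k+1}) \preceq \frac{P}{1 - (\gamma_{k+1}/\gamma_k)(1-\ell_0/J)}.
\end{equation*}

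The remaining algebra is to verify that the denominator is at least $(1-\zeta)\ell_0/J$. Substituting $\gamma_{k+1}/\gamma_k = 1 + \zeta\alpha_k$, this reduces to showing $\alpha_k(1-\ell_0/J) \leq \ell_0/J$, which after clearing denominators is exactly the hypothesis $\widehat{J}_{\gamma_k}^{\tau_k,N}(K_{k+1}) \geq J/2$. Taking traces then yields the intermediate quadratic bound $J_{\gamma_{k+1}}(K_{k+1}) \leq J^2/((1-\zeta)\ell_0)$, and converting to the stated cubic form is direct: $J \leq 2\widehat{J}_{\gamma_k}^{\tau_k,N}(K_{k+1})$ from the hypothesis, together with $\widehat{J}_{\gamma_k}^{\tau_k,N}(K_{k+1}) \geq J/2 \geq \ell_0/2$ (where $J \geq \ell_0$ follows from $P \succeq S \succeq \ell_0 I$ and $\mathbb{E}[x_0 x_0^\top] = I_n$), gives $J^2\ell_0 \leq 8\,\widehat{J}_{\gamma_k}^{\tau_k,N}(K_{k+1})^3$, as claimed. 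I expect the main obstacle to be the iterated-Lyapunov step: the trick of converting the coercivity $S \succeq \ell_0 I$ into a quantitative geometric contraction factor $(1-\ell_0/\|P\|)/\gamma_k$ in PSD order is the key insight; without it one is stuck working with strong-stability constants $\kappa, \varrho$ which do not connect $P_{K_{k+1}}(\gamma_{k+1})$ to $J_{\gamma_k}(K_{k+1})$ with the required sharpness.
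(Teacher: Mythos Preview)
Your proof is correct. The approach differs from the paper's in a useful way: where the paper bounds $J_{\gamma_{k+1}}(K_{k+1})$ via the state covariance $\Sigma_{K_{k+1}}(\gamma_{k+1})$ and invokes Lemma~\ref{lem-bound of rho} to control $\big\|[\sqrt{\gamma_{k+1}}(A-BK_{k+1}C)]^t\big\|$ through the similarity decomposition $\sqrt{\gamma_k}M=HWH^{-1}$, you instead work directly on the value matrix $P_{K_{k+1}}(\gamma_{k+1})$ in the PSD order, iterating the one-step Lyapunov contraction $M^\top P M \preceq [(1-\ell_0/J)/\gamma_k]\,P$. This bypasses the auxiliary norm lemma entirely and avoids the condition-number factor $\sqrt{J/\ell_0}$ that enters when translating PSD bounds to operator-norm bounds. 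As a result your intermediate estimate $J_{\gamma_{k+1}}(K_{k+1}) \le J^2/((1-\zeta)\ell_0)$ is in fact slightly sharper than the paper's $2J^2\widehat{J}/((1-\zeta)\ell_0^2)$ (they agree once both are relaxed to the stated cubic form). The paper's route has the advantage of reusing Lemma~\ref{lem-bound of rho} for other purposes; yours is more self-contained for this lemma and makes the mechanism (coercivity $S\succeq \ell_0 I$ $\Rightarrow$ PSD contraction) more transparent.
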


\begin{lemma}\label{lem12}
    Let $\overline{J}>0$. Under the parameters in Theorem~\ref{theorem2}, if at iteration $k$ the following conditions hold
    \begin{subequations}
    \label{eq-lem12}
      \begin{align}
        &J_{\gamma_{k}}(K_{k+1})\leq \overline{J},\label{eq-lem12-1}\\
        &|J_{\gamma_{k}}(K_{k+1})-\widehat{J}_{\gamma_{k}}^{\tau_{k},N}(K_{k+1})|\leq J_{\gamma_{k}}(K_{k+1})/2,\label{eq-lem12-2}
      \end{align}
    \end{subequations}
    then they still hold at iteration $k+1$ with probability at least $1-\delta_1-(M_{k+1}+1)\delta_0$, where $M_{k+1}$ is given in Theorem~\ref{theorem2}.
\end{lemma}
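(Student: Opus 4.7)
\textbf{Proof proposal for Lemma~\ref{lem12}.}
The plan is to carry the inductive assumptions~\eqref{eq-lem12-1}--\eqref{eq-lem12-2} at iteration $k$ forward to iteration $k+1$ by chaining the three ``building-block'' results in the order they are applied inside the algorithm: (i) the discount update, (ii) the inner zeroth-order PG loop, and (iii) the cost-estimator concentration bound. Throughout, I will work on the joint high-probability event on which both Theorem~\ref{theo1} (for iteration $k+1$) and Lemma~\ref{lem4} (applied to $(\gamma_{k+1},K_{k+2})$) succeed, and then take a union bound at the end to get the advertised probability $1-\delta_1-(M_{k+1}+1)\delta_0$.

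First I would use the hypotheses~\eqref{eq-lem12-1}--\eqref{eq-lem12-2} to certify the starting point of iteration $k+1$. Condition~\eqref{eq-lem12-2} is exactly the accuracy guarantee Lemma~\ref{lem7} needs, so Lemma~\ref{lem7} gives $J_{\gamma_{k+1}}(K_{k+1})\le 8\bigl(\widehat{J}_{\gamma_k}^{\tau_k,N}(K_{k+1})\bigr)^3/((1-\zeta)\ell_0^2)=\nu_{k+1}$, i.e.\ $K_{k+1}\in\mathcal{S}_{\gamma_{k+1}}(\nu_{k+1})$. (Lemma~\ref{lem6} additionally certifies $\sqrt{\gamma_{k+1}}\rho(A-BK_{k+1}C)<1$, so this is a legitimate element of the stabilizing set at the new discount.) This is precisely the hypothesis $K^0\in\mathcal{S}_{\gamma_{k+1}}(\nu_{k+1})$ required by Theorem~\ref{theo1} applied at iteration $k+1$. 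The parameters $\tau_{k+1}^e,\, N_{k+1}^e,\, r_{k+1},\,\eta_{k+1}$ chosen in~\eqref{eq-theo2-1}--\eqref{eq-theo2-2} of Theorem~\ref{theorem2} match (with $\nu$ replaced by $\nu_{k+1}$) the parameter requirements in Theorem~\ref{theo1}, so Theorem~\ref{theo1} applies and yields, within at most $M_{k+1}=9\nu_{k+1}/(\eta_{k+1}\epsilon^2)$ iterations and with probability at least $1-(M_{k+1}+1)\delta_0$, a policy $K_{k+2}$ with $\|\nabla J_{\gamma_{k+1}}(K_{k+2})\|_F\le\epsilon$; moreover the monotonicity of the cost along the inner PG loop (inequality~\eqref{eq-29-1}) gives $J_{\gamma_{k+1}}(K_{k+2})\le J_{\gamma_{k+1}}(K_{k+1})\le\nu_{k+1}$.

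Next, the gradient-norm bound $\|\nabla J_{\gamma_{k+1}}(K_{k+2})\|_F\le\epsilon$ means $K_{k+2}\in\mathcal{K}_{\epsilon,\gamma_{k+1}}$, so by the very definition of $\overline{J}$ I obtain $J_{\gamma_{k+1}}(K_{k+2})\le\overline{J}$, which is precisely condition~\eqref{eq-lem12-1} at iteration $k+1$. For condition~\eqref{eq-lem12-2} at iteration $k+1$, I invoke Lemma~\ref{lem4} with the pair $(\gamma_{k+1},K_{k+2})$: its hypotheses reduce to $\tau_{k+1}\ge (2J_{\gamma_{k+1}}(K_{k+2})/\ell_0)\log(J_{\gamma_{k+1}}(K_{k+2})d^2/\ell_0)$ and $N\ge 8d^4\log(2/\delta_1)$, both of which follow from the $\tau_{k+1},N$ prescribed in~\eqref{eq-theo2-1} once we observe that $J_{\gamma_{k+1}}(K_{k+2})\le\nu_{k+1}$ and the map $x\mapsto x\log(xd^2/\ell_0)$ is monotone in the relevant range. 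Lemma~\ref{lem4} then yields $|J_{\gamma_{k+1}}(K_{k+2})-\widehat{J}_{\gamma_{k+1}}^{\tau_{k+1},N}(K_{k+2})|\le J_{\gamma_{k+1}}(K_{k+2})/2$ with probability at least $1-\delta_1$. A single union bound over the Theorem~\ref{theo1} event and the Lemma~\ref{lem4} event delivers the claimed probability $1-\delta_1-(M_{k+1}+1)\delta_0$.

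I expect the only non-routine piece to be the bookkeeping needed to check that the inductive hypotheses at iteration $k$ actually supply all the preconditions of Lemmas~\ref{lem6}--\ref{lem7} and Theorem~\ref{theo1} with the iteration-dependent constants $\nu_{k+1},D_{k+1},G_{k+1},G^0_{k+1},L_{k+1}$ that Theorem~\ref{theorem2} implicitly rebuilds at each step. In particular, the subtle point is that $\nu_{k+1}$ is defined \emph{after} $K_{k+1}$ is computed (through $\widehat{J}_{\gamma_k}^{\tau_k,N}(K_{k+1})$), so Lemma~\ref{lem7} is doing double duty: it both determines $\nu_{k+1}$ and certifies that $K_{k+1}\in\mathcal{S}_{\gamma_{k+1}}(\nu_{k+1})$ with exactly the value just chosen. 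Once this self-consistent choice of $\nu_{k+1}$ is pinned down, the rest of the argument is a direct application of the already-proved results.
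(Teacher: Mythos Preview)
Your proposal is correct and follows essentially the same route as the paper's proof: invoke Lemma~\ref{lem7} (using both hypotheses~\eqref{eq-lem12-1}--\eqref{eq-lem12-2}) to place $K_{k+1}\in\mathcal{S}_{\gamma_{k+1}}(\nu_{k+1})$, apply Theorem~\ref{theo1} with the iteration-$(k+1)$ parameters to obtain $K_{k+2}$ with $\|\nabla J_{\gamma_{k+1}}(K_{k+2})\|_F\le\epsilon$ and $J_{\gamma_{k+1}}(K_{k+2})\le\nu_{k+1}$, deduce $J_{\gamma_{k+1}}(K_{k+2})\le\overline{J}$ from the definition of $\overline{J}$, then apply Lemma~\ref{lem4} and take a union bound. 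Your commentary on the self-consistency of $\nu_{k+1}$ is a helpful clarification of a point the paper leaves implicit.
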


Based on the lemmas above, we are now ready to prove Theorem~\ref{theorem2}.

\begin{proof}
The proof of proceeds by induction. The overall idea is to show that if the condition $\sqrt{\gamma_{k}}\rho(A-BK_{k}C)<1$ holds at iteration $k$, then it continues to hold at iteration $k+1$ with high probability. Moreover, a stabilizing policy $K_{k+1}$ for the system \eqref{eq-original system} is successfully obtained once the discount factor $\gamma_{k+1}$ satisfies the criterion $\gamma_{k+1}\geq1$. Additionally, one can verify that the bound $\overline{J}_{\gamma_k}\leq\overline{J}$ holds for all $\gamma_k\in(0,1]$, which is a direct consequence of the fact $\mathcal{K}_{\epsilon,\gamma_k}\subseteq \mathcal{K}_{\epsilon}$, where $\mathcal{K}_{\epsilon,\gamma_k}$ is defined as $\mathcal{K}_{\epsilon,\gamma_k} = \{ K \mid \|\nabla J_{\gamma_k}(K)\| \leq \epsilon \}$.

  \textbf{Base Case:} At iteration $0$, it follows from $K_{0}=0$ and $\gamma_0<1/\rho^2(A)$ as specified in Algorithm~\ref{alg:learning_stabilize} that $\sqrt{\gamma_0}\rho(A-BK_{0}C)=\sqrt{\gamma_0}\rho(A)<1$. 
   We next show that
    \begin{subequations}
       \label{eq-theo2-base}
      \begin{align}
        &J_{\gamma_{0}}(K_{1})\leq \overline{J}, \label{eq-theo2-base1}\\
        &|J_{\gamma_{0}}(K_{1})-\widehat{J}_{\gamma_{0}}^{\tau_0,N}(K_{1})|\leq  J_{\gamma_{0}}(K_{1})/2, \label{eq-theo2-base2}
       \end{align}
    \end{subequations}
    hold with probability at least $1-\delta_1-(M_0+1)\delta_0$, while ensuring $\sqrt{\gamma_1}\rho(A-BK_1C)<1$. 
 
  Note that the parameters $\tau_0^e$, $N_0^e$, $r_0$ and $\eta_0$ chosen in Theorem~\ref{theorem2} also satisfy the conditions required by Theorem~\ref{theo1}, and $M_0$ satisfies the sufficient condition in~\eqref{eq-iteration-PG} if we choose 
  \begin{equation*}
      M_0=\frac{9\nu_0}{\eta_0\epsilon^2}=\frac{9J_{\gamma_0}(0)}{\eta_0\epsilon^2}.
  \end{equation*}
  Thus, by Theorem~\ref{theo1}, with probability at least $1-(M_0+1)\delta_0$, the zeroth-order PG method in~\eqref{eq-pg} will return a stabilizing policy $K_1$ satisfying $\|\nabla J_{\gamma_0}(K_1)\|_F\leq \epsilon$ within at most $M_0$ iterations. In other words, we can conclude that~\eqref{eq-theo2-base1} holds with probability at least $1-(M_0+1)\delta_0$.
  Moreover, since $J_{\gamma_0}(K_1)\leq J_{\gamma_0}(K_0)=\nu_0$, we have
  \begin{equation}
      \tau_0\geq \frac{2\nu_0}{\ell_0}\log\frac{\nu_0d^2}{\ell_0}
      \geq \frac{2J_{\gamma_0}(K_1)}{\ell_0}\log\frac{J_{\gamma_0}(K_1)d^2}{\ell_0}.
      \nonumber
  \end{equation}
  The above inequality combined with the choice of $N$ in Theorem~\ref{theorem2} satisfy the conditions in~\eqref{eq-tau-N} of Lemma~\ref{lem4}.
  Therefore, by Lemma~\ref{lem4}, we obtain that~\eqref{eq-theo2-base2} holds with probability at least $1-\delta_1$.
  Applying a union bound, we deduce that~\eqref{eq-theo2-base} holds with probability at least $1-\delta_1-(M_0+1)\delta_0$. 
  Finally, we can invoke Lemma~\ref{lem6} based on~\eqref{eq-theo2-base} to show that $\sqrt{\gamma_{1}}\rho(A-BK_{1}C)<1$.

    \textbf{Inductive Step:} Suppose that at iteration $k$, the following conditions hold,
    \begin{subequations}
        \label{eq-proof-theo2-1}
        \begin{align}
            &J_{\gamma_{k}}(K_{k+1})\leq \overline{J},\\
            &|J_{\gamma_{k}}(K_{k+1})-\widehat{J}_{\gamma_{k}}^{\tau_{k},N}(K_{k+1})|\leq J_{\gamma_{k}}(K_{k+1})/2.
        \end{align}
    \end{subequations}
    Due to $\overline{J}_{\gamma_{k}}\leq \overline{J}$ for all $\gamma_{k}\in(0,1]$, it follows from Lemma~\ref{lem12} that the conditions in~\eqref{eq-proof-theo2-1} still hold at iteration $k+1$ with probability at least $1-\delta_1-(M_{k+1}+1)\delta_0$.
    Furthermore, given that the conditions in~\eqref{eq-proof-theo2-1} are satisfied, invoking Lemma~\ref{lem6} guarantees that $\sqrt{\gamma_{k+1}}\rho(A-BK_{k+1}C)<1$ holds using Lemma~\ref{lem6}.
    Applying induction, we conclude that if $\sqrt{\gamma_{k}}\rho(A-BK_{k}C)<1$ holds, then it still holds at iteration $k+1$ with probability at least $1-\delta_1-(M_{k+1}+1)\delta_0$.
    
    Moreover, the validity of~\eqref{eq-proof-theo2-1} allows us to apply Lemma~\ref{lem7}, which implies $\alpha_k\geq\underline{\alpha}$ with $\underline{\alpha}=\ell_0/(3\overline{J}-\ell_0)$; this serves as a uniform lower bound for all $k$. Thus, 
    \begin{equation}
    \gamma_k = \gamma_0 \prod_{j=0}^{k-1} (1 + \zeta \alpha_k) \geq  \gamma_0 \prod_{j=0}^{k-1} (1 + \zeta \underline{\alpha}) = \gamma_0 (1 + \zeta \underline{\alpha})^k.
    \nonumber
    \end{equation}
    Letting $\gamma_0 (1 + \zeta \underline{\alpha})^k\geq1$, we obtain that with probability at least $1-k'\delta_1-\sum_{k=0}^{k'}(M_k+1)\delta_0$, Algorithm~\ref{alg:learning_stabilize} will return a stabilizing policy $K_{k'+1}$ provided that the number of iterations satisfies $k\geq k'$ where
    \begin{equation*}
        k'=\frac{\log(1/\gamma_0)}{\log(1+\zeta\underline{\alpha})}
        =\frac{\log(1/\gamma_0)}{\log(1+\zeta\ell_0/(3\overline{J}-\ell_0))}.
    \end{equation*}
    The proof is completed.    
\end{proof}

\subsection{Sample Complexity}
\label{Sample Complexity}
Theorem \ref{theorem2} shows that Algorithm~\ref{alg:learning_stabilize} in tandem with Algorithm~\ref{alg:two_point_grad} converges to a stabilizing SOF policy for an unstable system within a finite number of iterations.
We now explicitly characterize the sample complexity of Algorithm~\ref{alg:learning_stabilize}, denoted by $\mathcal{N}_{\text{total}}$, which represents the total number of {\it system trajectories} required to achieve the result in Theorem~\ref{theorem2}.

The system trajectories are generated in two phases: the gradient estimation in Algorithm~\ref{alg:two_point_grad} and the cost evaluation for the discount factor update in Algorithm~\ref{alg:learning_stabilize}. As we argued in Section~\ref{Convergence Analysis}, since $\nu_k\leq \overline{\nu}$ holds for all $k=0,\ldots,k'$, there 
exists uniform bound $\overline{G}$ such that $G_k\leq\overline{G}$ for all $k=0,\ldots,k'$. 
Under the parameter settings of Algorithm~\ref{alg:two_point_grad} and Algorithm~\ref{alg:learning_stabilize}, we obtain that the total number of trajectories is
\begin{equation}
  \begin{aligned}
     \mathcal{N}_{\text{total}}
     &=2\sum_{k=0}^{k'}M_kN_k^e+k'N\\
     &\overset{(a)}{\leq}\frac{\log(1/\gamma_0)}{\log(1+\zeta\ell_0/(3\overline{J}-\ell_0))}\left(  \frac{162\overline{M}(mp\overline{G})^2(d^2+1)^2}{\epsilon^2}\log\frac{15}{\delta_0} +8d^4\log\frac{2}{\delta_1}\right)\\
     &\overset{(b)}{\leq}\frac{(3\overline{J}-\ell_0)\log(2\rho(A)^2)}{\zeta\ell_0}\left(  \frac{162\overline{M}(mp\overline{G})^2(d^2+1)^2}{\epsilon^2}\log\frac{15}{\delta_0}+8d^4\log\frac{2}{\delta_1}\right).
  \end{aligned}
  \nonumber
\end{equation}
For (a), we apply $M_k\leq \overline{M}$ in~\eqref{eq-add1} for all $k=0,\ldots,k'$.
For (b), we use the inequality $\log(1+x)\leq x$, and without loss of generality, we set the initial discount factor $\gamma_0 = 1/(2\rho^2(A))$.
To elucidate the dependence on key system parameters $m, p, \rho(A)$, we suppress the dependence on user-specified parameters $\zeta$, $\delta_0$, and $\delta_1$. Then, the sample complexity can be expressed in the following compact form:
 \begin{equation}
     \mathcal{N}_{\text{total}} =\log(\rho(A))\cdot \mathcal{O}\left( \frac{m^2p^2}{\epsilon^4} \right)\cdot \text{poly}(\|A\|,\|B\|,\|Q\|,\|R\|, \overline{J}),
     \nonumber
 \end{equation}
where we use the fact that $\overline{G}$ and $\overline{M}$ are polynomials in the system parameters.

\section{Simulation Results}
To validate our theoretical results regarding learning a stabilizing SOF policy for a discrete-time system, we perform two experiments.

\subsection{Numerical Example}
 Consider an unstable discrete-time system defined as
\begin{equation}\label{eq-simulation}
    x_{t+1}=\begin{bmatrix}
    4.5 & 2.8 & 0 & 0\\
     3 & 2 & 0 & 0\\
     2 & 0 & 1.4 & 0\\
     1.5 & 0 & 2 & 0.4
    \end{bmatrix}x_t+
    \begin{bmatrix}
     2 \\ 2 \\ 1 \\ 0
    \end{bmatrix}u_t,
    ~y_t=\begin{bmatrix}
        1 & 0 & 0.3 & 0\\
        0 & 1 & 0 & 0
    \end{bmatrix}x_t.
\end{equation}
The cost matrices are set to be $Q=I_4$ and $R=1$. The initial state is drawn from a standard normal distribution.
The implementation of Algorithm~\ref{alg:learning_stabilize} uses the following parameter settings. The number of rollouts for the evaluation of the cost function is set to $N = 20$, with the horizon established as $\tau_k = 100$ for all $k\geq0$. The initial policy is chosen as $K_{0}=0$. Let the decay factor $\zeta = 0.9$ and the discount factor $\gamma_0 = 10^{-2}$, which satisfies $\gamma_0 < 1/\rho^2(A) = 0.024$. The desired accuracy is set to $\epsilon=1$, and the constant stepsize for the PG method in~\eqref{eq-pg} is chosen as $\eta_k = 10^{-3}$ for all $k\geq 0$. Furthermore, the parameters for the gradient estimation in Algorithm~\ref{alg:two_point_grad} are specified as $\tau^e_k=100$,  $r_k =  10^{-3}$ and $N^e_k = 60$ for all $k\geq0$.

\begin{figure}[htbp]
    \centering
    \subfloat[a][Closed-loop spectral radius of the system~\eqref{eq-simulation}.]{    
    \includegraphics[width=0.4\textwidth]{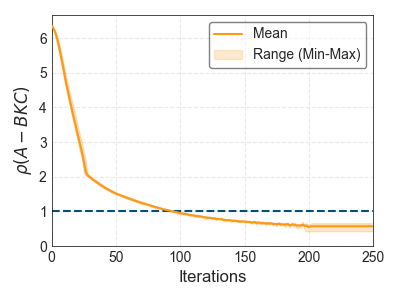}}
    \hfil
    \subfloat[b][Convergence of the discount factor $\gamma$.]{    
    \includegraphics[width=0.4\textwidth]{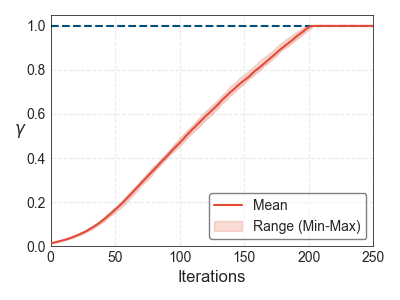}}
    \caption{Simulation results of Algorithm~\ref{alg:learning_stabilize} applied to the system~\eqref{eq-simulation} versus iterations.}
    \label{Fig1}
\end{figure}

The solid lines in Figure~\ref{Fig1} represent the mean values across 20 runs, while the shaded regions depict the range between the maximum and minimum values. Specifically, Figure~\ref{Fig1}(a) shows the evolution of the spectral radius $\rho(A-BKC)$ with respect to iterations, while
Figure~\ref{Fig1}(b) shows the convergence process of the discount factor $\gamma$ under the condition that the matrices $A$, $B$, and $C$ are unknown.

\subsection{Application to Cart-Pole System}
Let us consider the cart-pole dynamics~\cite{slotine1991applied} as a classic nonlinear and unstable  system, with its model given by:
\begin{equation}
\begin{aligned}
   &(m_c + m_p)\ddot{x}_c + m_p l_p \ddot{\theta} \cos\theta - m_p l_p \dot{\theta}^2 \sin\theta = u,\\
   & m_p l_p \ddot{x}_c \cos\theta + m_p l_p^2 \ddot{\theta} - m_p g l_p \sin\theta = 0,
\end{aligned}
    \nonumber
\end{equation}
where $x_c$ denotes the position of the cart, $\theta$ denotes the angle of the pendulum, and $u$ is the control input, representing the horizontal force applied to the cart.
The system parameters are set as follows: cart mass $m_c=0.5\,\text{kg}$, pendulum mass $m_p=2.0\,\text{kg}$, pole length $l_p=0.5\,\text{m}$, and gravitational acceleration $g=1.0\,\text{m}\cdot\text{s}^{-2}$.
By defining the state vector $x = [x_c, \theta, \dot{x}_c, \dot{\theta}]^\top$ and linearizing the system around the upright equilibrium, we obtain the continuous-time state-space model $\dot{x} = A_c x + B_c u$. Subsequently, applying Zero-Order Hold discretization with $T_s=0.1\,\text{s}$ yields the discrete-time model:
\begin{equation}\label{eq:discrete cartpole}
   x_{t+1}= 
    \begin{bmatrix}
       1 & 0.02 & 0.1 & 0\\
        0  & 1.05 & 0  &  0.1\\
        0  & 0.41 & 1 & 0.02\\
        0  & 1.02 & 0  &  1.05
    \end{bmatrix}x_t+ 
    \begin{bmatrix}
       0.01\\ 0.02\\ 0.2\\ 0.41
    \end{bmatrix}u_t.
\end{equation}
The system output is given by 
\begin{equation}
    y_t =  
    \begin{bmatrix}
       1 & 0 & 2 & 1 \\
       0 & 2 & 1 & 2 
    \end{bmatrix} x_t.
    \nonumber
\end{equation}
In this experiment, we set the cost matrices $Q=2\times I_2$ and $R=1$. The parameters in Algorithm~\ref{alg:two_point_grad} and Algorithm~\ref{alg:learning_stabilize} are chosen as follows: $\gamma_0=0.1<1/\rho(A)^2=0.533$, $N=20$, $\zeta=0.8$, $\epsilon=1$,  $\tau_k^e=\tau_k=100$, $r_k=10^{-2}$, $\eta_k=10^{-3}$, and $N^e_k=40$ for all $k\geq 0$. We perform $10$ independent runs and present the mean trajectory along with the minimum-maximum range in Figure~\ref{Fig2}.
Figure~\ref{Fig2}(a) illustrates the evolution of the closed-loop spectral radius $\rho(A-BKC)$ with respect to iterations. 
Figure~\ref{Fig2}(b) demonstrates the convergence of the discount factor $\gamma$. As can be observed, the discount factor increases from its initial value $\gamma_0=0.1$ to $1$ within 150 iterations.

\begin{figure}[htbp]
    \centering
    \subfloat[a][Closed-loop spectral radius of the system~\eqref{eq:discrete cartpole}.\label{Fig2(a)}]{    
    \includegraphics[width=0.4\textwidth]{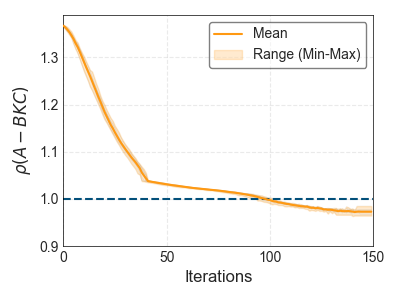}}
    \hfil
    \subfloat[b][Convergence of the discount factor $\gamma$.\label{Fig2(b)}]{    
    \includegraphics[width=0.4\textwidth]{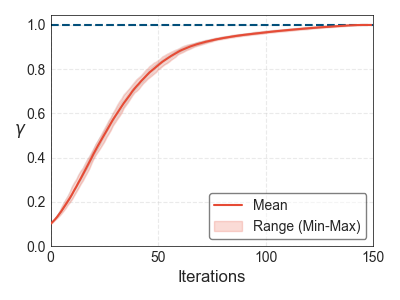}}
    \caption{Simulation results of Algorithm~\ref{alg:learning_stabilize} applied to the cart-pole system~\eqref{eq:discrete cartpole} versus iterations.}
    \label{Fig2}
\end{figure}

\section{Conclusion}\label{sec:conclusion}
In this paper, we have addressed the problem of learning stabilizing SOF controllers for partially observable discrete-time linear systems with unknown dynamics. While the optimization landscape of output feedback control is more complex than that of state feedback control, we propose a novel algorithmic framework based on the zero-order PG method, which only requires convergence to stationary points of the problem.
Theoretically, we established rigorous convergence guarantees of the proposed algorithm to an SOF controller for the unknown partially observable linear system, and provided a comprehensive analysis of the sample complexity. Numerical and application simulations further validated the efficacy and practicality of our framework. Future research will explore extensions to studying the stabilization problem with dynamic output feedback controllers and stabilization of unknown nonlinear systems.

\bibliography{references}

@article{zhao2024convergence,
  title={Convergence and sample complexity of policy gradient methods for stabilizing linear systems},
  author={Zhao, Feiran and Fu, Xingyun and You, Keyou},
  journal={IEEE Transactions on Automatic Control},
  volume={70},
  number={3},
  year={2024},
  pages={1455-1466},
  publisher={IEEE}
}

@inproceedings{fazel2018global,
  title={Global convergence of policy gradient methods for the linear quadratic regulator},
  author={Fazel, Maryam and Ge, Rong and Kakade, Sham and Mesbahi, Mehran},
  booktitle={Proc. International Conference on Machine Learning},
  pages={1467--1476},
  year={2018},
}

@article{mohammadi2020linear,
  title={On the linear convergence of random search for discrete-time {LQR}},
  author={Mohammadi, Hesameddin and Soltanolkotabi, Mahdi and Jovanovi{\'c}, Mihailo R},
  journal={IEEE Control Systems Letters},
  volume={5},
  number={3},
  pages={989--994},
  year={2020},
  publisher={IEEE}
}

@inproceedings{duan2022optimization,
  title={Optimization landscape of gradient descent for discrete-time static output feedback},
  author={Duan, Jingliang and Li, Jie and Li, Shengbo Eben and Zhao, Lin},
  booktitle={Proc. American Control Conference},
  pages={2932--2937},
  year={2022}
}

@inproceedings{cohen2019learning,
  title={Learning Linear-Quadratic Regulators Efficiently with only $\sqrt{T} $ Regret},
  author={Cohen, Alon and Koren, Tomer and Mansour, Yishay},
  booktitle={Proc. International Conference on Machine Learning},
  pages={1300--1309},
  year={2019}
}

@article{recht2019tour,
  title={A tour of reinforcement learning: The view from continuous control},
  author={Recht, Benjamin},
  journal={Annual Review of Control, Robotics, and Autonomous Systems},
  volume={2},
  number={1},
  pages={253--279},
  year={2019},
  publisher={Annual Reviews}
}

@article{ye2024model,
  title={Model-Free Learning for the Linear Quadratic Regulator over Rate-Limited Channels},
  author={Ye, Lintao and Mitra, Aritra and Gupta, Vijay},
  journal={arXiv preprint arXiv:2401.01258},
  year={2024}
}

@article{fatkhullin2021optimizing,
  title={Optimizing static linear feedback: Gradient method},
  author={Fatkhullin, Ilyas and Polyak, Boris},
  journal={SIAM Journal on Control and Optimization},
  volume={59},
  number={5},
  pages={3887--3911},
  year={2021},
  publisher={SIAM}
}

@inproceedings{mohammadi2019global,
  title={Global exponential convergence of gradient methods over the nonconvex landscape of the linear quadratic regulator},
  author={Mohammadi, Hesameddin and Zare, Armin and Soltanolkotabi, Mahdi and Jovanovi{\'c}, Mihailo R},
  booktitle={Proc. IEEE Conference on Decision and Control},
  pages={7474--7479},
  year={2019},
}

@article{mohammadi2021convergence,
  title={Convergence and sample complexity of gradient methods for the model-free linear--quadratic regulator problem},
  author={Mohammadi, Hesameddin and Zare, Armin and Soltanolkotabi, Mahdi and Jovanovi{\'c}, Mihailo R},
  journal={IEEE Transactions on Automatic Control},
  volume={67},
  number={5},
  pages={2435--2450},
  year={2021},
  publisher={IEEE}
}

@article{zhao2023global,
  title={Global convergence of policy gradient primal--dual methods for risk-constrained {LQRs}},
  author={Zhao, Feiran and You, Keyou and Ba{\c{s}}ar, Tamer},
  journal={IEEE Transactions on Automatic Control},
  volume={68},
  number={5},
  pages={2934--2949},
  year={2023},
  publisher={IEEE}
}

@article{ozaslan2022computing,
  title={Computing stabilizing feedback gains via a model-free policy gradient method},
  author={Ozaslan, Ibrahim K and Mohammadi, Hesameddin and Jovanovi{\'c}, Mihailo R},
  journal={IEEE Control Systems Letters},
  volume={7},
  pages={407--412},
  year={2022},
  publisher={IEEE}
}

@inproceedings{perdomo2021stabilizing,
  title={Stabilizing dynamical systems via policy gradient methods},
  author={Perdomo, Juan and Umenberger, Jack and Simchowitz, Max},
  booktitle={Proc. Advances in Neural Information Processing Systems},
  volume={34},
  pages={29274--29286},
  year={2021}
}

@article{duan2023optimization1,
  title={Optimization landscape of policy gradient methods for discrete-time static output feedback},
  author={Duan, Jingliang and Li, Jie and Chen, Xuyang and Zhao, Kai and Li, Shengbo Eben and Zhao, Lin},
  journal={IEEE Transactions on Cybernetics},
  volume={54},
  number={6},
  pages={3588--3601},
  year={2023},
  publisher={IEEE}
}

@article{jansch2022policy,
  title={Policy optimization for {Markovian} jump linear quadratic control: Gradient method and global convergence},
  author={Jansch-Porto, Joao Paulo and Hu, Bin and Dullerud, Geir E},
  journal={IEEE Transactions on Automatic Control},
  volume={68},
  number={4},
  pages={2475--2482},
  year={2022},
  publisher={IEEE}
}

@inproceedings{flaxman2005online,
  title={Online convex optimization in the bandit setting: gradient descent without a gradient},
  author={Flaxman, Abraham D and Kalai, Adam Tauman and McMahan, H Brendan},
  booktitle={Proc. Annual ACM-SIAM Symposium on Discrete Algorithms},
  pages={385--394},
  year={2005}
}

@book{slotine1991applied,
  title={Applied nonlinear control},
  author={Slotine, Jean-Jacques E and Li, Weiping and others},
  year={1991},
  publisher={Prentice hall Englewood Cliffs, NJ}
}

@inproceedings{lamperski2020computing,
  title={Computing stabilizing linear controllers via policy iteration},
  author={Lamperski, Andrew},
  booktitle={Proc. IEEE Conference on Decision and Control},
  pages={1902--1907},
  year={2020},
}

@article{talebi2021regularizability,
  title={On regularizability and its application to online control of unstable {LTI} systems},
  author={Talebi, Shahriar and Alemzadeh, Siavash and Rahimi, Niyousha and Mesbahi, Mehran},
  journal={IEEE Transactions on Automatic Control},
  volume={67},
  number={12},
  pages={6413--6428},
  year={2021},
  publisher={IEEE}
}

@INPROCEEDINGS{toso2025learning,
  author={Toso, Leonardo F. and Ye, Lintao and Anderson, James},
  booktitle={Proc. IEEE Conference on Decision and Control}, 
  title={Learning Stabilizing Policies via an Unstable Subspace Representation}, 
  volume={},
  number={},
  pages={7543-7550},
  year={2025},
  doi={10.1109/CDC57313.2025.11312355}}

@article{zhangstabilizing,
  title={Stabilizing Linear Systems under Partial Observability: Sample Complexity and Fundamental Limits},
  author={Zhang, Ziyi and Nakahira, Yorie and Qu, Guannan},
  journal={arXiv preprint arXiv:2503.16756},
  year={2025}
}

@INPROCEEDINGS{hu2022sample,
  title={On the sample complexity of stabilizing {LTI} systems on a single trajectory},
  author={Hu, Yang and Wierman, Adam and Qu, Guannan},
  booktitle={Proc. Advances in Neural Information Processing Systems},
  volume={35},
  pages={16989-17002},
  year={2022}
}

@inproceedings{cassel2021online,
  title={Online Policy Gradient for Model Free Learning of Linear Quadratic Regulators with $\sqrt{T}$ Regret},
  author={Cassel, Asaf B and Koren, Tomer},
  booktitle={Proc. International Conference on Machine Learning},
  pages={1304--1313},
  year={2021}
}

@article{hayes2005large,
  title={A large-deviation inequality for vector-valued martingales},
  author={Hayes, Thomas P},
  journal={Combinatorics, Probability and Computing},
  volume={37},
  year={2005}
}

@article{dean2020sample,
  title={On the sample complexity of the linear quadratic regulator},
  author={Dean, Sarah and Mania, Horia and Matni, Nikolai and Recht, Benjamin and Tu, Stephen},
  journal={Foundations of Computational Mathematics},
  volume={20},
  number={4},
  pages={633--679},
  year={2020},
  publisher={Springer}
}

@inproceedings{karimi2016linear,
  title={Linear convergence of gradient and proximal-gradient methods under the polyak-{\l}ojasiewicz condition},
  author={Karimi, Hamed and Nutini, Julie and Schmidt, Mark},
  booktitle={Proc. Joint European Conference on Machine Learning and Knowledge Discovery in Databases},
  pages={795--811},
  year={2016},
  organization={Springer}
}

@ARTICLE{Chi10935305,
  author={Chi, Ming and Zhang, Ankang and Wang, Xiaoling and Ye, Lintao},
  journal={IEEE Transactions on Fuzzy Systems}, 
  title={Interval Observer-Based Control of {Takagi}--{Sugeno} Fuzzy Systems With Uncertainties}, 
  year={2025},
  volume={33},
  number={7},
  pages={2136-2147},
  keywords={Fuzzy systems;Observers;Noise;Noise measurement;Uncertainty;Symmetric matrices;Pollution measurement;Nonlinear systems;Measurement uncertainty;Linear matrix inequalities;Fuzzy system;interval observer;observer-based control;uncertainties},
  doi={10.1109/TFUZZ.2025.3553300}}

@article{qian2024robust,
  title={Robust cooperative tracking of multiagent systems with asymmetric saturation actuator via output feedback},
  author={Qian, Juan and Wang, Xiaoling and Su, Housheng and Jiang, Guo-Ping},
  journal={International Journal of Robust and Nonlinear Control},
  volume={34},
  number={1},
  pages={506--529},
  year={2024},
  publisher={Wiley Online Library}
}

@article{hou2013model,
  title={From model-based control to data-driven control: Survey, classification and perspective},
  author={Hou, Zhong-Sheng and Wang, Zhuo},
  journal={Information Sciences},
  volume={235},
  pages={3--35},
  year={2013},
  publisher={Elsevier}
}

@article{bu2020topological,
  title={On topological properties of the set of stabilizing feedback gains},
  author={Bu, Jingjing and Mesbahi, Afshin and Mesbahi, Mehran},
  journal={IEEE Transactions on Automatic Control},
  volume={66},
  number={2},
  pages={730--744},
  year={2020},
  publisher={IEEE}
}

@article{tang2023analysis,
  title={Analysis of the optimization landscape of Linear Quadratic Gaussian ({LQG}) control},
  author={Tang, Yujie and Zheng, Yang and Li, Na},
  journal={Mathematical Programming},
  volume={202},
  number={1},
  pages={399--444},
  year={2023},
  publisher={Springer}
}

@article{malik2020derivative,
  title={Derivative-free methods for policy optimization: Guarantees for linear quadratic systems},
  author={Malik, Dhruv and Pananjady, Ashwin and Bhatia, Kush and Khamaru, Koulik and Bartlett, Peter L and Wainwright, Martin J},
  journal={Journal of Machine Learning Research},
  volume={21},
  number={21},
  pages={1--51},
  year={2020}
}

@book{bertsekas2012dynamic,
  title={Dynamic Programming and Optimal Control, Vol. II},
  author={Bertsekas, Dimitri P.},
  edition={4th},
  year={2012},
  publisher={Athena Scientific}
}

@article{duan2023optimization2,
  title={On the optimization landscape of dynamic output feedback linear quadratic control},
  author={Duan, Jingliang and Cao, Wenhan and Zheng, Yang and Zhao, Lin},
  journal={IEEE Transactions on Automatic Control},
  volume={69},
  number={2},
  pages={920--935},
  year={2023},
  publisher={IEEE}
}
\bibliographystyle{unsrt}

\appendix
\counterwithin{lemma}{section}
\counterwithin{theorem}{section}
\counterwithin{proposition}{section}
\counterwithin{corollary}{section}
\counterwithin{definition}{section}
\counterwithin{equation}{section}
\onecolumn

\section{Proofs of Preliminary Results}
\label{app:preliminary proofs}
\subsection{Proof of Lemma~\ref{lem-pg-derivation}}
\label{Proof of lem-pg-derivation}
\begin{proof}
    We begin by considering the localized performance of policy $K$ via the quadratic value function $V_K(x_t) \coloneqq x_t^\top P_K x_t$, where $P_K$ is given by~\eqref{eq-Lya1}. Exploiting the recursive structure of the discounted Lyapunov equation \eqref{eq-Lya1}, the value function at the initial state can be expanded as:
    \begin{equation*}
        \begin{aligned}
            V_K(x_0) &= x_0^\top (Q + C^\top K^\top R K C) x_0 + \gamma V_K(x_1) \\
            &= x_0^\top (Q + C^\top K^\top R K C) x_0 + \gamma V_K((A - BKC)x_0).
        \end{aligned}
    \end{equation*}
    By taking the gradient of the above expression w.r.t.  $K$ (i.e., $\nabla_K V_K(x_0)$), we obtain the following recursive gradient relation:
    \begin{equation*}
        \begin{aligned}
            \nabla_K V_K(x_0) &= 2 R K C x_0 x_0^\top C^\top - 2 \gamma B^\top P_K (A - BKC) x_0 x_0^\top C^\top + \gamma \nabla_K V_K(x_1) \\
            &= 2 \left[ (R + \gamma B^\top P_K B)KC - \gamma B^\top P_K A \right] x_0 x_0^\top C^\top + \gamma \nabla_K V_K(x_1)\\
            &=2E_Kx_0 x_0^\top C^\top + \gamma \nabla_K V_K(x_1).
        \end{aligned}
    \end{equation*}
    Performing an infinite-horizon recursive expansion, the gradient at $x_0$ is expressed as follows:
    \begin{equation*}
        \nabla_K V_K(x_0) = 2 E_K \sum_{t=0}^{\infty} \gamma^t (x_t x_t^\top) C^\top.
    \end{equation*}
    Finally, by taking the expectation over the initial state distribution $\mathcal{D}$ and leveraging the fact that $J_{\gamma}(K) = \mathbb{E}_{x_0} V_K(x_0)$, we arrive at the closed-form expression $\nabla J_{\gamma}(K) = 2 E_K \Sigma_K C^\top$.
\end{proof}

\subsection{Proof of Lemma~\ref{lem1}}
\begin{proof}
The proof is divided into five parts corresponding to the statements in Lemma~\ref{lem1}.

\noindent\textbf{Part 1:}
    Since $K\in\mathcal{S}_{\gamma}(\nu)$, it follows from~\eqref{eq-Lya3} that $J_{\gamma}(K)=\Tr(P_K)\leq \nu$, and $J_{\gamma}(K)=\Tr((Q+C^\top K^\top RKC)\Sigma_K)\geq \underline{\sigma}(Q)\|\Sigma_{K}\|\geq \ell_0\|\Sigma_K\|$.

 \noindent\textbf{Part 2:}
    Let $F=KC$ and $F'=K'C$, and define two linear operators on a symmetric matrix $X$ as follows:
    \begin{equation*}
        \mathcal{T}_F(X):=\sum_{t=0}^{\infty}\gamma^t(A-BF)^tX[(A-BF)^\top ]^t,~ \mathcal{F}_F(X):=\gamma(A-BF)X(A-BF)^\top .
    \end{equation*}
 Meanwhile, define the induced norm of these operators as follows:
 \begin{equation*}
     \|\mathcal{T}_F\|=\sup_{X}\frac{\|\mathcal{T}_F(X)\|}{\|X\|},~\|\mathcal{F}_F\|=\sup_X\frac{\|\mathcal{F}_{F}(X)\|}{\|X\|}.
 \end{equation*}
Letting $X_0:=\mathbb{E}[x_0x_0^T]$, it follows from Assumption~\ref{ass1} that $X_0=I_n$. Notice that $\mathcal{T}_F(X_0)=\Sigma_{K}$, and so we get that
 $\|\mathcal{T}_F(X_0)\|=\|\Sigma_K\|\leq \nu/\ell_0=\kappa^2$.
From \cite[Lemma 17]{fazel2018global}, we obtain
$\|\mathcal{T}_F\|\leq J_{\gamma}(K)/(\underline{\sigma}(Q)\underline{\sigma}(X_0))\leq \nu/\ell_0=\kappa^2$,
and by \cite[Lemma 17]{fazel2018global}, we have that
\begin{equation} \label{eq-F-F'}
 \begin{aligned}
  \|\mathcal{F}_F-\mathcal{F}_{F'}\|\leq& 2\gamma\|B\|\|A-BF\|\|F-F'\|+\gamma\|B\|^2\|F-F'\|^2\\
   =&2\gamma\|B\|\|A-BKC\|\|KC-K'C\|+\gamma\|B\|^2\|KC-K'C\|^2\\
   \leq&2\gamma\|B\|\|C\|\|A-BKC\|\|K-K'\|+\gamma\|B\|^2\|C\|^2\|K-K'\|^2\\
   \leq&(2\gamma\psi\varphi\kappa+\frac{\gamma\psi^2\varphi^2}{8\kappa^3\psi\varphi})\|K-K'\|\\
   \leq&3\kappa\psi\varphi\|K-K'\|.
 \end{aligned}
\end{equation}
where the last inequality follows from $\gamma<1$ and $\kappa\geq1$.
By the proof of~\cite[Lemma 17]{fazel2018global}, if the following condition holds
\begin{equation}\label{eq-condition-1}
    \|F-F'\|\leq \frac{\underline{\sigma}(Q)\underline{\sigma}(X_0)}{4J_{\gamma}(K)\|B\|(\|A-BF\|+1)},
\end{equation}
then one has $ \|\Sigma_{K}-\Sigma_{K'}\|
    \leq2\|\mathcal{T}_F\|\|\mathcal{T}_F(X_0)\|\|\mathcal{F}_F-\mathcal{F}_{F'}\|.$
Since 
\begin{equation*}
    \frac{\underline{\sigma}(Q)\underline{\sigma}(X_0)}{4\nu\|B\|\|C\|(\|A-BKC\|+1)}
    \geq\frac{\ell_0}{8\nu\kappa\psi\varphi}
    =\frac{1}{8\kappa^3\psi\varphi}=D,
\end{equation*}
it follows from $\|K-K'\|\leq D$ and $F=KC$ that
\begin{equation}
 \begin{aligned}
    \|F-F'\|\leq\|K-K'\|\|C\|\leq \frac{\underline{\sigma}(Q)\underline{\sigma}(X_0)}{4\nu\|B\|(\|A-BF\|+1)},
 \end{aligned}
\end{equation}
which satisfies the condition~\eqref{eq-condition-1}. Thus, we get
\begin{equation}\label{eq-sigmaK-sigmaK'}
  \begin{aligned}
    \|\Sigma_K-\Sigma_{K'}\|
    \leq2\|\mathcal{T}_F\|\|\mathcal{T}_F(X_0)\|\|\mathcal{F}_F-\mathcal{F}_{F'}\|
    \leq6\kappa^5 \psi\varphi\|K-K'\|,
  \end{aligned}
\end{equation}
where the last inequality follows from $\|\mathcal{T}_F\|\leq \kappa^2$, $\|\mathcal{T}_F(X_0)\|\leq \kappa^2$ and~\eqref{eq-F-F'}.

\noindent\textbf{Part 3:}
From~\eqref{eq-Lya1} we get that
\begin{equation}\label{eq-bound-P-P'}
    \begin{aligned}
        \|P_K-P_{K'}\|
        =&\|\mathcal{T}_F(Q+F^\top RF)-\mathcal{T}_{F'}(Q+F'^\top RF')\|\\
        =&\|\mathcal{T}_F(Q+F^\top RF)-\mathcal{T}_{F}(Q+F'^\top RF')-(\mathcal{T}_{F'}(Q+F'^\top RF')-\mathcal{T}_F(Q+F'^\top RF'))\|\\
        \leq&\|\mathcal{T}_F(F^\top RF-F'^\top RF')\|+\|\mathcal{T}_{F'}(Q+F'^\top RF')-\mathcal{T}_F(Q+F'^\top RF'))\|\\
        \overset{(a)}{\leq}&\|\mathcal{T}_F\|\|F^\top RF-F'^\top RF'\|+2\|\mathcal{T}_{F}\|^2\|\mathcal{F}_F-\mathcal{F}_{F'}\|\|Q+F'^\top RF'\|\\
        \leq&\|\mathcal{T}_F\|\|R\|\|F-F'\|^2+2\|\mathcal{T}_F\|\|R\|\|F\|\|F-F'\|
        +2\|\mathcal{T}_{F}\|^2\|\mathcal{F}_F-\mathcal{F}_{F'}\|(\|Q\|+\|R\|\|F'\|^2)\\
        \overset{(b)}{\leq}&\|\mathcal{T}_F\|\|R\|\|C\|^2\|K-K'\|^2+2\|\mathcal{T}_F\|\|R\|\|F\|\|C\|\|K-K'\|\\
        &+2\|\mathcal{T}_{F}\|^2\|\mathcal{F}_F-\mathcal{F}_{F'}\|
        \left[\|Q\|+\|R\|(\|D\|C\|+\|F\|)^2\right]\\
        \leq&\left[\kappa^2\ell_1\varphi^2D+2\kappa^2\ell_1\kappa\varphi
        +6\kappa^5\psi\varphi\ell_1(1+(D\varphi+\kappa)^2) \right]\|K-K'\|\\
        \leq&\kappa^5\ell_1\psi\varphi\left(8 D^2\varphi^2+16D\varphi\kappa+6+6D^2\varphi^2+12D\varphi+6\kappa^2\right)\|K-K'\|\\
        \overset{(c)}{\leq}&16\kappa^7\ell_1\psi\varphi\|K-K'\|,
    \end{aligned}
\end{equation}
where (a) follows from~\cite[Lemma 20]{fazel2018global}, (b) follows from the fact that $F=KC$, $F'=K'C$ and $\|F'\|\leq\|F-F'\|+\|F\|\leq D\|C\|+\|F\|$, and (c) is due to the fact that $\psi,\varphi,\kappa\geq1$, $\gamma<1$ and $D\varphi=1/8\kappa^3\psi\leq 1/8$. 

\noindent\textbf{Part 4:}
First, observing that $P_K \succeq Q \succeq \ell_0 I$, we deduce that $\Tr(X_0) \leq \ell_0^{-1}\Tr(P_K X_0) \leq \nu/\ell_0$. Combining this with the bound on $\|P_K-P_{K'}\|$ in~\eqref{eq-bound-P-P'}, we have
\begin{equation}
    \begin{aligned}
        |J_\gamma(K)-J_\gamma(K')|
        &= \left|\Tr(P_KX_0)-\Tr(P_{K'}X_0)\right|\\
        &\leq \|P_K-P_{K'}\| \Tr(X_0)\\
        &\leq \left(16\kappa^7\ell_1\psi\varphi\|K-K'\|\right) \frac{\nu}{\ell_0}\\
        &= 16\kappa^9\ell_1\psi\varphi\|K-K'\|.
        \nonumber
    \end{aligned}
\end{equation}

\noindent\textbf{Part 5:}
Recall that $\nabla J_{\gamma}(K)=2E_K\Sigma_KC^\top $, where $E_K=(R+\gamma B^\top P_KB)KC-\gamma B^\top P_KA$. To upper bound $\|\nabla J_{\gamma}(K)-\nabla J_{\gamma}(K')\|_F$, we first get 
\begin{equation}
    \nabla J_{\gamma}(K)-\nabla J_{\gamma}(K')=2E_K\Sigma_KC^\top -2E_{K'}\Sigma_{K'}C^\top =2\left[(E_K-E_{K'})\Sigma_{K'}+E_K(\Sigma_{K}-\Sigma_{K'})\right]C^\top.
    \nonumber
\end{equation}
Noticing that $\|P_{K'}\|\leq \|P_K\|+\|P_K-P_{K'}\|\leq 3\kappa^4\ell_1 $, we obtain
\begin{equation}\label{eq-E-E'}
    \begin{aligned}
        \|E_K-E_{K'}\|
        &\leq\|(R+\gamma B^\top P_KB)KC-\gamma B^\top P_KA-(R+\gamma B^\top P_{K'}B)K'C+\gamma B^\top P_{K'}A\|\\
        &=\|R(K-K')C-\gamma B^\top (P_K-P_{K'})(A-BKC)-\gamma B^\top P_{K'}B(K-K')C\|\\
        &\leq \|R\|\|C\|\|K-K'\|+\gamma\|B\|\|A-BKC\|\|P_K-P_{K'}\|+\gamma\|B\|^2\|C\|\|P_{K'}\|\|K-K'\|\\
        &\leq \left(\ell_1\varphi+16\kappa^8\ell_1\psi\varphi+3\kappa^4\ell_1\psi^2\varphi\right)\|K-K'\|\\
        &\leq20\kappa^8\ell_1\psi^2\varphi\|K-K'\|,
    \end{aligned}
\end{equation}
where the last inequality follows from $\kappa,\psi\geq1$.
Moreover, based on the first two statements in Lemma~\ref{lem1}, we have
\begin{equation} \label{eq-sigmak-E}
    \begin{aligned}
        &\|\Sigma_{K'}\|\leq\|\Sigma_{K}\|+\|\Sigma_K-\Sigma_{K'}\|\leq\nu/\ell_0+3\kappa^2/4\leq2\kappa^2,\\
        &\|E_{K}\|\leq\|R\|\|KC\|+\|B\|\|P_K\|\|A-BKC\|\leq\ell_1\kappa+\psi\nu\kappa\leq2\kappa^3\ell_1\psi.
    \end{aligned}
\end{equation}
Combining the above inequalities~\eqref{eq-sigmaK-sigmaK'},~\eqref{eq-E-E'} and~\eqref{eq-sigmak-E}, we obtain 
\begin{equation}
    \begin{aligned}
        \|\nabla J_{\gamma}(K)-\nabla J_{\gamma}(K')\|_F\leq&\sqrt{\min\{m,p\}} \|\nabla J_{\gamma}(K)-\nabla J_{\gamma}(K')\|\\
        \leq& 2\sqrt{\min\{m,p\}}(\|E_K-E_{K'}\|\|\Sigma_{K'}\|+\|E_K\|\|\Sigma_{K}-\Sigma_{K'}\|)\|C\|\\
        \leq&2\sqrt{\min\{m,p\}}(40\kappa^{10}\ell_1\psi^2\varphi+12\kappa^8\ell_1\psi^2\varphi)\|K-K'\| \\
        \leq&104\kappa^{10}\ell_1\psi^2\varphi\sqrt{\min\{m,p\}}\|K-K'\|.
    \end{aligned}
    \nonumber
\end{equation}
 This completes the proof.
\end{proof}

\subsection{Proof of Lemma~\ref{bound of gradient}}
\begin{proof}
     It immediately follows from $\nabla J_{\gamma}(K)=2E_K\Sigma_KC^\top $ that 
    \begin{equation}
    \begin{aligned}
         \|\nabla J_{\gamma}(K)\|_F
        &\leq\sqrt{\min\{m,p\}}\|\nabla J_{\gamma}(K)\|\\
        &\leq2\sqrt{\min\{m,p\}}\|E_K\|\|\Sigma_K\|\|C\|\\
        &\overset{(a)}{\leq}2\kappa^3\varphi(\ell_1+\psi\nu)\sqrt{\min\{m,p\}}=G^0
    \end{aligned}
    \nonumber
    \end{equation}
  where (a) is due to the bound on $\|E_K\|$ in~\eqref{eq-sigmak-E}, the bound on $\Sigma_K$ in Lemma~\ref{lem1} and the assumption that $\|C\|\leq\varphi$.
  This completes the proof.
\end{proof}

\section{Proofs omitted in Section~\ref{section 3}}
\subsection{Proof of Lemma~\ref{lem-gradient estimation error}}
\label{proof of lem-gradient estimation error}
In this section, we present the proof of Lemma~\ref{lem-gradient estimation error}. To this end, we first introduce an auxiliary function
    \begin{equation*}
        J^r_{\gamma}(K)=\mathbb{E}_{\widetilde{U}}[J_{\gamma}(K+r\widetilde{U})],~K\in\mathcal{S}_{\gamma}(\nu),
    \end{equation*}
     where the vectorized $\widetilde{U}$ is uniformly distributed from a unit ball. Accordingly, let the infinite-horizon version $\widetilde{\nabla}J_\gamma(K)$ of $\widehat{\nabla}J_\gamma(K)$ be given by
\begin{equation}
    \begin{aligned}
      \widetilde{\nabla}J_\gamma(K):=&\frac{mp}{2rN^e}\sum_{i=1}^{N^e}\left(J_{\gamma,x_0^i}(K+rU_i)-J_{\gamma,x_0^i}(K-rU_i)\right)U_i,
    \end{aligned}
\end{equation}
where $U_i\in\mathbb{R}^{m\times p}$ are random i.i.d. matrices and uniformly distributed from the set $\{U\mid\|U\|_F=1\}$. 
Then, we can decompose the gradient estimation error of $\widehat{\nabla}J_{\gamma}(K)$ as
    \begin{equation} \label{eq-error-all}
        \|\widehat{\nabla}J_{\gamma}(K)-\nabla J_{\gamma}(K)\|_F
        \leq \|\widehat{\nabla}J_{\gamma}(K)-\widetilde{\nabla}J_{\gamma}(K)\|_F
        +\|\widetilde{\nabla}J_{\gamma}(K)-\nabla J_{\gamma}^r(K)\|_F
        +\|\nabla J_{\gamma}^r(K)-\nabla J_{\gamma}(K)\|_F.
    \end{equation}
A crucial prerequisite for the two-point gradient estimation in Algorithm~\ref{alg:two_point_grad} is ensuring the stability of the perturbed closed-loop systems, i.e., $K\pm rU_i\in\mathcal{S}_{\gamma}$. Failure to satisfy this condition results in the exponential divergence of state and control signals. 
Moreover, $\nabla J_{\gamma}^r(K)$ and $\widetilde{\nabla}J_{\gamma}(K)$ are well-defined if and only if $K\pm rU_i\in\mathcal{S}_{\gamma}$. In Lemma~\ref{lem5}, we derive a condition on $r\leq r_0$, under which all perturbed policies $K\pm rU_i$ remain stabilizing.

\begin{lemma}\label{lem5}
    Let $K\in \mathcal{S}_{\gamma}(\nu)$. For any $K'\in\mathbb{R}^{m\times n}$ satisfying $\|K'-K\|\leq r_0$ with $r_0=\min\{D,\nu/G\}$, where $D$ and $G$ are defined in Lemma~\ref{lem1}, it holds that $K'\in\mathcal{S}_{\gamma}(2\nu)$.
\end{lemma}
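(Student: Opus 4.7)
The goal is twofold: to show that every $K'$ with $\|K'-K\|\le r_0$ is itself stabilizing (i.e., $K'\in\mathcal{S}_\gamma$), and that its cost satisfies $J_\gamma(K')\le 2\nu$. Once stabilization is established, the cost bound follows immediately from the local Lipschitz property in Lemma~\ref{lem1}, Part~4: since $r_0\le D$ and $r_0\le\nu/G$, we obtain
\[
   J_\gamma(K')\le J_\gamma(K)+G\|K'-K\|\le \nu + G\cdot(\nu/G) = 2\nu.
\]
So the crux is to rule out the degenerate possibility $K'\notin\mathcal{S}_\gamma$, in which case $J_\gamma(K')=\infty$ and the Lipschitz estimate loses its meaning.

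The plan is a continuity/bootstrap argument along the line segment $K(t)=K+t(K'-K)$, $t\in[0,1]$. Define $t^{*}=\sup\{t\in[0,1]:K(s)\in\mathcal{S}_\gamma\text{ for all }s\in[0,t]\}$. Openness of $\mathcal{S}_\gamma$ (via continuity of the spectral radius applied to $K\mapsto A-BKC$) together with $K(0)=K\in\mathcal{S}_\gamma$ gives $t^{*}>0$, and for every $s\in[0,t^{*})$ the Lipschitz bound above yields $J_\gamma(K(s))\le 2\nu$, using $\|K(s)-K\|=s\|K'-K\|\le r_0\le D$. If $K(t^{*})\in\mathcal{S}_\gamma$, then openness of $\mathcal{S}_\gamma$ extends the path slightly past $t^{*}$, forcing $t^{*}=1$; applying the Lipschitz estimate one more time at $s=1$ then yields $K'\in\mathcal{S}_\gamma(2\nu)$. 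Otherwise $\sqrt{\gamma}\rho(A-BK(t^{*})C)=1$, and the key is to contradict this: because $Q\succeq\ell_0 I\succ0$, the discounted Lyapunov recursion~\eqref{eq-Lya1} forces $\|P_{K(s)}\|\to\infty$, and hence $J_\gamma(K(s))=\Tr(P_{K(s)})\to\infty$, as $s\uparrow t^{*}$, contradicting the uniform bound $J_\gamma(K(s))\le 2\nu$ just derived.

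The main technical hurdle I anticipate is this ``blow-up at the boundary'' step, i.e., rigorously showing that $J_\gamma$ diverges as the closed-loop spectral radius approaches $1/\sqrt{\gamma}$. This is where the assumption $Q\succ0$ is essential: it guarantees that every mode contributes to the cost through the Lyapunov recursion, so no cancellation can keep $P_{K(s)}$ bounded as stability is lost. This is a standard fact in discounted linear-quadratic theory, but it is the only place the proof reaches beyond the tools already set up in Lemma~\ref{lem1}; everything else reduces to openness of $\mathcal{S}_\gamma$, continuity of $\rho$, and the two arithmetic inequalities $r_0\le D$ and $G r_0\le\nu$ that are built into the definition of $r_0$.
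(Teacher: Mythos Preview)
Your argument is correct, but it does substantially more work than the paper's own proof, which is a two-line application of Lemma~\ref{lem1}, Part~4. The paper simply notes that since $r_0\le D$, the local Lipschitz property (Definition~\ref{defi-Lipschitz}, which by its wording applies to \emph{any} $K'\in\mathbb{R}^{m\times p}$ with $\|K'-K\|\le D$, not just stabilizing ones) gives $|J_\gamma(K')-J_\gamma(K)|\le G r_0\le\nu$, hence $J_\gamma(K')\le 2\nu$; finiteness of $J_\gamma(K')$ already forces $K'\in\mathcal{S}_\gamma$, so no separate stabilization step is needed.

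The real difference is \emph{where} the claim ``$K'$ is stabilizing'' is proved. In the paper this is folded into the proof of Lemma~\ref{lem1} (Parts~2--4), which rests on the perturbation lemmas of Fazel et al.\ \cite{fazel2018global}; those already guarantee that any $K'$ within distance $D$ of $K\in\mathcal{S}_\gamma(\nu)$ keeps $\mathcal{T}_{F'}$ (and hence $P_{K'}$) well-defined. Your continuation argument along the segment $K(t)$, closed off by coercivity of $J_\gamma$ at $\partial\mathcal{S}_\gamma$, is a self-contained alternative that avoids leaning on that fine print. It buys independence from the Fazel et al.\ machinery at the cost of a longer proof; the paper's route is shorter because it takes Lemma~\ref{lem1} at face value as already covering all $K'$ in the $D$-ball.
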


\begin{proof}
    Since $r_0 \leq D$, invoking the local Lipschitz property of $J_{\gamma}(\cdot)$ from Lemma~\ref{lem1}, we have $|J_\gamma(K')-J_{\gamma}(K)|\leq G\|K'-K\|\leq Gr_0 \leq \nu$. This together with $J_{\gamma}(K)\leq\nu$ imply $J_{\gamma}(K')\leq 2\nu$, which completes the proof.
\end{proof}
    
Another crucial component of the proof is Lemma~\ref{norm(x_t)^2}, which establishes the exponential decay of $\|x_t\|^2$. This result serves as the SOF counterpart to~\cite[Lemma 9]{zhao2024convergence}.
\begin{lemma} \label{norm(x_t)^2}
    Consider the system described in~\eqref{eq-equal system}. Let $K\in \mathcal{S}_\gamma(\nu)$ and $P_K$ be a positive definite matrix that satisfies~\eqref{eq-Lya1}. For all $t\in\mathbb{N}$, it holds that $\|x_t\|^2\leq\left(1-\ell_0/\nu\right)^t\nu d^2/\ell_0$.
\end{lemma}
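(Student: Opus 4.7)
The plan is to use $V_t := x_t^\top P_K x_t$ as a Lyapunov function for the damped dynamics $x_{t+1} = \sqrt{\gamma}(A-BKC)x_t$ and to exploit the Lyapunov identity \eqref{eq-Lya1} together with the sublevel-set bound $\|P_K\|\le \nu$ from Lemma~\ref{lem1}, Part~1.

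First, I would rewrite \eqref{eq-Lya1} as
\begin{equation*}
\gamma(A-BKC)^\top P_K (A-BKC) \;=\; P_K - Q - C^\top K^\top R K C \;\preceq\; P_K - \ell_0 I,
\end{equation*}
using $Q \succeq \ell_0 I$ and $C^\top K^\top R K C \succeq 0$. Applying this to $V_{t+1}$ gives
\begin{equation*}
V_{t+1} \;=\; x_t^\top \bigl[\gamma(A-BKC)^\top P_K (A-BKC)\bigr] x_t \;\leq\; V_t - \ell_0 \|x_t\|^2 .
\end{equation*}

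Second, I would turn the additive decay into a multiplicative contraction by noting that $V_t \le \|P_K\|\,\|x_t\|^2 \le \nu\,\|x_t\|^2$, so $\|x_t\|^2 \ge V_t/\nu$, and hence
\begin{equation*}
V_{t+1} \;\leq\; \bigl(1 - \tfrac{\ell_0}{\nu}\bigr) V_t ,
\end{equation*}
which iterates to $V_t \le (1-\ell_0/\nu)^t V_0$. Finally, since $P_K \succeq Q \succeq \ell_0 I$, we have $\|x_t\|^2 \le V_t/\ell_0$, and since $\|P_K\|\le \nu$ we have $V_0 \le \nu\,\|x_0\|^2 \le \nu d^2$ by Assumption~\ref{ass1}. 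Combining these yields
\begin{equation*}
\|x_t\|^2 \;\leq\; \frac{V_0}{\ell_0}\Bigl(1-\tfrac{\ell_0}{\nu}\Bigr)^t \;\leq\; \Bigl(1-\tfrac{\ell_0}{\nu}\Bigr)^t \frac{\nu d^2}{\ell_0},
\end{equation*}
as claimed.

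The argument is essentially routine once one recognizes the discount factor $\gamma$ has been absorbed into the damped dynamics in \eqref{eq-equal system}, so no real obstacle is expected; the only subtlety worth double-checking is that $\ell_0/\nu \le 1$ (so that the contraction factor is nonnegative), which follows from $\nu \ge J_\gamma(K) = \Tr(P_K) \ge \lambda_{\min}(P_K) \ge \ell_0$.
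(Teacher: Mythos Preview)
Your proposal is correct and follows essentially the same Lyapunov argument as the paper: both define $V_t = x_t^\top P_K x_t$, use the identity \eqref{eq-Lya1} to get $V_{t+1}-V_t \le -\ell_0\|x_t\|^2$, convert this to the geometric contraction $V_{t+1}\le (1-\ell_0/\nu)V_t$ via $\|x_t\|^2 \ge V_t/\|P_K\| \ge V_t/\nu$, and then extract $\|x_t\|^2$ using $P_K\succeq \ell_0 I$ and $V_0\le \nu d^2$. The only cosmetic difference is that the paper carries $\underline{\sigma}(Q)$, $\|P_K\|$, $\underline{\sigma}(P_K)$, and $\Tr(P_K)$ symbolically and substitutes the bounds $\ell_0$ and $\nu$ at the very end, whereas you substitute them immediately.
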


\begin{proof}
    Define the Lyapunov function $V(x_t)=x_t^\top P_Kx_t$. For the system $x_{t+1}=\sqrt{\gamma}(A-BKC)x_t$, it follows that
    \begin{equation}
    \begin{aligned}
         V(x_{t+1})-V(x_t)
         =& x_t^\top \gamma(A-BKC)^\top P_K(A-BKC)x_t-x_t^\top P_Kx_t \\
         \overset{(a)}{=}&-x_t^\top (Q+C^\top K^\top RKC)x_t\\
         \leq& -\frac{\underline{\sigma}(Q)}{\|P_K\|}V(x_t).
         \nonumber
    \end{aligned}
    \nonumber
    \end{equation}
    where (a) uses the Lyapunov equation~\eqref{eq-Lya1}. Since $V(x_t)\geq\underline{\sigma}(P_K)\|x_t\|^2$, we have 
    \begin{equation*}
        \|x_t\|^2\leq \frac{V(x_t)}{\underline{\sigma}(P_K)}\leq\left(1-\frac{\underline{\sigma}(Q)}{\|P_K\|}\right)^t\frac{V(x_0)}{\underline{\sigma}(P_K)}\leq\left(1-\frac{\underline{\sigma}(Q)}{\|P_K\|}\right)^t\frac{\text{Tr}(P_K)}{\underline{\sigma}(P_K)}\|x_0\|^2\leq \left(1-\frac{\ell_0}{\nu}\right)^t\frac{\nu d^2}{\ell_0},
    \end{equation*}
    where the last inequality is due to $\ell_0\leq \underline{\sigma}(Q)\leq\underline{\sigma}(P_K) $, $\|P_K\|\leq\Tr(P_K)\leq\nu$ and the assumption that $\|x_0\|\leq d$.
    This completes the proof.
\end{proof}

Based on this, the proof is divided into the following three parts.

\noindent\textbf{Bound on $\|\widehat{\nabla}J_{\gamma}(K)-\widetilde{\nabla}J_{\gamma}(K)\|_F$:}
    To upper bound $\|\widehat{\nabla}J_{\gamma}(K)-\widetilde{\nabla}J_{\gamma}(K)\|_F$ resulting from the finite simulation time, we first show that for $K\in\mathcal{S}_{\gamma}(\nu)$,
    \begin{equation}\label{eq-error of finite time}
       \begin{aligned}
           J_{\gamma,x_0}(K)-J_{\gamma,x_0}^{\tau^e}(K)
           =&\sum_{t=0}^{\infty}(x_t^\top Qx_t+u_t^\top Rx_t)-\sum_{t=0}^{\tau^e-1}(x_t^\top Qx_t+u_t^\top Rx_t)\\
           =&\sum_{t=\tau^e}^{\infty}x_t^\top (Q+C^\top K^\top RKC)x_t\\
           =&x_{\tau^e}^\top P_Kx_{\tau^e}\\
          \overset{(a)}{\leq}&\left(1-\frac{\ell_0}{\nu}\right)^{\tau^e}\frac{\nu d^2}{\ell_0}\Tr(P_K)\\
           \overset{(b)}{\leq}&\left(1-\frac{\ell_0}{\nu}\right)^{\tau^e} \frac{ \nu^2 d^2}{\ell_0},
       \end{aligned} 
    \end{equation}
    where (a) follows from Lemma~\ref{norm(x_t)^2}, and (b) uses the fact that $\Tr(P_K)=J_\gamma(K)\leq \nu$. 
    Given $\|U_i\|_F\leq1$ and the choice of $r$ satisfying $r\|U_i\|_F\leq r_0$, Lemma~\ref{lem5} implies that $K\pm rU_i\in\mathcal{S}_{\gamma}(2\nu)$.
    Therefore, we have the following upper bound on $\|\widehat{\nabla} J_\gamma(K)-\widetilde{\nabla} J_\gamma(K)\|_F$:
    \begin{equation}\label{eq-error-1}
        \begin{aligned}
            &\|\widehat{\nabla} J_\gamma(K)-\widetilde{\nabla} J_\gamma(K)\|_F\\
            =&\frac{mp}{2rN^e}\sum_{i=1}^{N^e}\left\|\left(J_{\gamma,x_0^i}(K+rU_i)-J_{\gamma,x_0^i}^{\tau^e}(K+rU_i)\right)U_i-\left(J_{\gamma,x_0^i}(K-rU_i)-J_{\gamma,x_0^i}^{\tau^e}(K-rU_i)\right)U_i\right\|_F\\
            \overset{(a)}{\leq}&\frac{mp}{2rN^e}\sum_{i=1}^{N^e}\left(\left|J_{\gamma,x_0^i}(K+rU_i)-J_{\gamma,x_0^i}^{\tau^e}(K+rU_i)\right|+\left|J_{\gamma,x_0^i}(K-rU_i)-J_{\gamma,x_0^i}^{\tau^e}(K-rU_i)\right|\right)\|U_i\|_F\\
            \overset{(b)}{\leq}&\frac{mp}{2rN^e}\cdot 2N^e\left(1-\frac{\ell_0}{2\nu}\right)^{\tau^e} \frac{ 4\nu^2 d^2}{\ell_0}\\
            =&\frac{4\nu^2 d^2mp}{r\ell_0}\left(1-\frac{\ell_0}{2\nu}\right)^{\tau^e} ,
        \end{aligned}
    \end{equation}
    where (a) follows from the triangle inequality, and (b) follows from  $\|U_i\|_F= 1$ and~\eqref{eq-error of finite time}, utilizing the fact that $K\pm rU_i\in\mathcal{S}_{\gamma}(2\nu)$.

    \noindent\textbf{Bound on $\|\widetilde{\nabla}J_{\gamma}(K)-\nabla J_{\gamma}^r(K)\|_F$:} To upper bound $\|\widetilde{\nabla}J_{\gamma}(K)-\nabla J_{\gamma}^r(K)\|_F$,
    we invoke \cite[Lemma 2.1]{flaxman2005online} to obtain
     \begin{equation*}
          \nabla J_{\gamma}^r(K)=\mathbb{E}_{U}[\frac{mp}{r}J_{\gamma}(K+rU)U],
     \end{equation*}
     where $U\in\mathbb{R}^{m\times p}$ are random matrices and uniformly distributed from the set $\{U\mid\|U\|_F=1\}$. Thus, we have 
     \begin{equation}
         \begin{aligned}
             \nabla J_{\gamma}^r(K)
              &=\mathbb{E}_{U}\left[\frac{mp}{r}J_{\gamma}(K+rU)U\right]\\
              &=\frac{mp}{2r}\left(\mathbb{E}_{U}[J_{\gamma}(K+rU)U]-\mathbb{E}_{U}[J_{\gamma}(K-rU)(-U)]\right)\\
              &\overset{(a)}{=}\frac{mp}{2r}\mathbb{E}_{U}\left[J_{\gamma}(K+rU)U-J_{\gamma}(K-rU)U\right],
              \nonumber
         \end{aligned}
         \nonumber
     \end{equation}
     where (a) follows from the symmetry of the uniform distribution on the unit sphere (i.e., $U$ and $-U$ share the same distribution). Furthermore, noting that $\mathbb{E}_U[J_{\gamma}(K)U] = J_{\gamma}(K)\mathbb{E}_U[U] = 0$, we can upper bound $\|\nabla J^r(K)\|$ as follows:
     \begin{equation}\label{eq-bound of nabla J^r}
         \begin{aligned}
             \|\nabla J_{\gamma}^r(K)\|&\leq\frac{mp}{2r}\left\|\mathbb{E}_{U}\left[J_{\gamma}(K+rU)U-J_{\gamma}(K-rU)U\right]\right\|\\
             &=\frac{mp}{2r}\left\|\mathbb{E}_{U}\left[\left(J_{\gamma}(K+rU)-J_{\gamma}(K)\right)U+(J_{\gamma}(K)-J_{\gamma}(K-rU))U\right]\right\|\\
             &\overset{(a)}{\leq}\frac{mp}{2r}\mathbb{E}_{U}\left[\left\|\left(J_{\gamma}(K+rU)-J_{\gamma}(K)\right)U+(J_{\gamma}(K)-J_{\gamma}(K-rU))U\right\|\right]\\
             &\overset{(b)}{\leq}\frac{mp}{2r}\mathbb{E}_{U}\left[\big(\left|J_{\gamma}(K+rU)-J_{\gamma}(K)\right|+\left|J_{\gamma}(K)-J_{\gamma}(K-rU)\right|\big)\|U\|_F\right]\\
             &\overset{(c)}{\leq}\frac{mp}{2r}\mathbb{E}_{U}\left[(Gr\|U\|_F + Gr\|U\|_F)\|U\|_F\right]\\
             &= mpG,
         \end{aligned}
     \end{equation}
     where (a) follows from Jensen’s inequality and the convexity of $\|\cdot\|$; (b) applies the triangle inequality and the positive homogeneity of $\|\cdot\|$; and (c) follows from the local Lipschitz property of $J_{\gamma}(\cdot)$ as established in Lemma~\ref{lem1}, given that $\|U\|_F=1$ and $r\|U\|_F = r \leq D$.

 With the shorthand notation $K^+=K+rU_i$ and $K^-=K-rU_i$, we define
 \begin{equation*}
     Y_i=\frac{mp}{2r}\left(J_{\gamma,x_0^i}(K^+)-J_{\gamma,x_0^i}(K^-)\right)U_i-\nabla J^r_{\gamma}(K).
 \end{equation*}
 Then, we can upper bound $\|Y_i\|_F$ as
  \begin{equation}
    \begin{aligned}
        \|Y_i\|_F &= \left\|\frac{mp}{2r}\left(J_{\gamma,x_0^i}(K^+)-J_{\gamma,x_0^i}(K^-)\right)U_i-\nabla J^r_{\gamma}(K)\right\|_F\\
        &\leq \frac{mp}{2r}\left\|\left(J_{\gamma,x_0^i}(K^+)-J_{\gamma,x_0^i}(K)\right)U_i+\left(J_{\gamma,x_0^i}(K)-J_{\gamma,x_0^i}(K^-)\right)U_i\right\|_F+\|\nabla J^r_{\gamma}(K)\|_F\\
        &\leq \frac{mp}{2r}\left(\left|{x_0^i}^\top P_{K^+}x_0^i-{x_0^i}^\top P_{K}x_0^i\right|+\left|{x_0^i}^\top P_{K}x_0^i-{x_0^i}^\top P_{K^-}x_0^i\right|\right)\|U_i\|_F + \|\nabla J^r_{\gamma}(K)\|_F\\
        &\leq \frac{mp}{2r}\|x_0^i\|^2\left(\|P_{K^+}-P_{K}\|+\|P_{K}-P_{K^-}\|\right)\|U_i\|_F + \|\nabla J^r_{\gamma}(K)\|_F\\
        &\overset{(a)}{\leq} \frac{mpd^2}{2r}\cdot 16\kappa^7\ell_1\psi\varphi \left(\|K^+-K\|_F+\|K-K^-\|_F\right)\|U_i\|_F + \|\nabla J^r_{\gamma}(K)\|_F\\
        &\overset{(b)}{\leq} mpd^2\cdot16\kappa^7\ell_1\psi\varphi + mpG\\
        &\overset{(c)}{\leq}mpG(d^2+1),
    \end{aligned}
    \nonumber
  \end{equation}
  For (a), we utilize the assumption that $\|x_0\|\leq d$, and invoke the third part of Lemma~\ref{lem1} based on $K\in\mathcal{S}_{\gamma}(\nu)$ and $\|K^{+}-K\|_F=\|K-K^{-}\|_F=r\|U_i\|_F\leq D$. 
  Inequality (b) utilizes the upper bound $\|\nabla J^r_{\gamma}(K)\|_F\leq mpG$ from \eqref{eq-bound of nabla J^r}, while (c) follows from the fact that $16\kappa^7\ell_1\psi\varphi\leq 16\kappa^9\ell_1\psi\varphi=G$ given $\kappa\geq1$.
 Consequently, noting that $\mathbb{E}[Y_i|K]=0$, we apply Lemma~\ref{lem-Azuma} to deduce that
    \begin{equation}\label{eq-error-3}
        \|\widetilde{\nabla}J_{\gamma}(K)-\nabla J_{\gamma}^r(K)\|_F
        =\frac{1}{N^e}\left\|\sum_{i=1}^{N^e}Y_i\right\|
        \leq \frac{mpG(d^2+1)}{\sqrt{N^e}}\sqrt{2\log\frac{15}{\delta_0}}
    \end{equation}
    holds with probability at least $1-\delta_0$. 
    
     \noindent\textbf{Bound on $\|\nabla J_{\gamma}^r(K)-\nabla J_{\gamma}(K)\|_F$:}
    Next, we derive the upper bound on $\|\nabla J_{\gamma}^r(K)-\nabla J_{\gamma}(K)\|_F$. To this end, we show that
    \begin{equation}\label{eq-error-2}
      \begin{aligned}
        \|\nabla J_{\gamma}^r(K)-\nabla J_{\gamma}(K)\|_F
        &=\|\mathbb{E}_{\widetilde{U}}[\nabla J_{\gamma}(K+r\widetilde{U})]-\nabla J_{\gamma}(K)\|_F\\
        &=\|\mathbb{E}_{\widetilde{U}}[\nabla J_{\gamma}(K+r\widetilde{U})-\nabla J_{\gamma}(K)]\|_F\\
        &\overset{(a)}{\leq}\mathbb{E}_{\widetilde{U}}[\|\nabla J_{\gamma}(K+r\widetilde{U})-\nabla J_{\gamma}(K)\|_F]\\
        &\overset{(b)}{\leq}Lr,
      \end{aligned}
   \end{equation}
    where (a) follows from Jensen's inequality and the convexity of the norm function. For (b), due to $\|\widetilde{U}\|_F\leq 1$ and $r\|\widetilde{U}\|_F \leq r \leq D$, it follows from the local smoothness property of $J_{\gamma}(\cdot)$, which is established in Lemma~\ref{lem1}.
    
    Combining the results from the three parts of our analysis, we are now positioned to present the upper bound on the gradient estimation error $\|\widehat{\nabla}J_{\gamma}(K)-\nabla J_{\gamma}(K)\|_F$.
    Substituting \eqref{eq-error-1}, \eqref{eq-error-3}, and \eqref{eq-error-2} into \eqref{eq-error-all}, we get that
    \begin{equation}\label{eq-error-all2}
        \begin{aligned}
         \|\widehat{\nabla}J_{\gamma}(K)-\nabla J_{\gamma}(K)\|_F
        \leq \frac{4\nu^2 d^2mp}{r\ell_0}\left(1-\frac{\ell_0}{2\nu}\right)^{\tau^e} +\frac{mpG(d^2+1)}{\sqrt{N^e}}\sqrt{2\log\frac{15}{\delta_0}}+Lr
        \end{aligned}
        \nonumber
    \end{equation}
    holds with probability at least $1-\delta_0$. To ensure $\|\widehat{\nabla}J_{\gamma}(K)-\nabla J_{\gamma}(K)\|_F\leq\epsilon/3$, it suffices to choose $\tau^e$, $r$, and $N^e$ such that
    \begin{equation*}
        \tau^e\geq\frac{1}{\log(1-\ell_0/2\nu))}\log\frac{r\epsilon\ell_0}{36d^2\nu^2mp},~N^e\geq\frac{81(mpG)^2(d^2+1)^2}{\epsilon^2}\log\frac{15}{\delta_0},~r\leq\frac{\epsilon}{9L}.
    \end{equation*}
    Using the numerical inequality $\log(1+x) \leq x$, a sufficient condition for the trajectory length $\tau^e$ is $\tau^e\geq\frac{2\nu}{\ell_0} \log\left(\frac{36d^2\nu^2mp}{r\epsilon\ell_0}\right)$. With these parameter choices, the desired error bound is guaranteed, which completes the proof of the lemma.$\hfill\blacksquare$

\subsection{Proof of Lemma~\ref{lem8}}
\label{proof of lem8}

\begin{lemma}\label{lem-bound of rho}
    For any $K\in\mathcal{S}_{\gamma}$, we have the spectral radius bound $\sqrt{\gamma}\rho(A-BKC)\leq\sqrt{1-\ell_0/J_{\gamma}(K)}$. Additionally, $\left\|( \sqrt{\gamma}(A-BKC))^t\right\|\leq\sqrt{J_{\gamma}(K)/\ell_0}\left(1-\ell_0/J_{\gamma}(K)\right)^{t/2}$ holds for all $t>0$.
\end{lemma}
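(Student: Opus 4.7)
\textbf{Proof proposal for Lemma~\ref{lem-bound of rho}.} The plan is to exploit the discounted Lyapunov equation~\eqref{eq-Lya1} to turn an additive lower bound on the ``Lyapunov drift'' into a multiplicative contraction, and then iterate. Write $M \coloneqq \sqrt{\gamma}(A - BKC)$ and rearrange \eqref{eq-Lya1} as
\[
P_K - M^\top P_K M \;=\; Q + C^\top K^\top R K C \;\succeq\; \ell_0 I,
\]
so that $M^\top P_K M \preceq P_K - \ell_0 I$. The crucial step is to convert this additive shift into a multiplicative one: since $P_K \succeq \ell_0 I$ and, using $\|P_K\| \leq \Tr(P_K) = J_\gamma(K)$, we have $P_K \preceq J_\gamma(K)\, I$, it follows that $\ell_0 I \succeq (\ell_0/J_\gamma(K))\, P_K$, and hence
\[
M^\top P_K M \;\preceq\; q\, P_K, \qquad q \coloneqq 1 - \frac{\ell_0}{J_\gamma(K)} \in [0,1).
\]

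Next, I would iterate this inequality: left- and right-multiplying by $M^\top$ and $M$ and substituting the bound recursively yields $(M^\top)^t P_K M^t \preceq q^t P_K$ for every $t \geq 1$. Combining this with the two-sided bound $\ell_0 I \preceq P_K \preceq J_\gamma(K)\, I$, one gets, for any unit vector $v$,
\[
\ell_0 \|M^t v\|^2 \;\leq\; v^\top (M^\top)^t P_K M^t v \;\leq\; q^t\, v^\top P_K v \;\leq\; q^t J_\gamma(K).
\]
Taking the square root and the supremum over unit $v$ gives the second claim $\|M^t\| \leq \sqrt{J_\gamma(K)/\ell_0}\, q^{t/2}$. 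The spectral radius bound is then an immediate consequence via Gelfand's formula $\rho(M) = \lim_{t\to\infty} \|M^t\|^{1/t}$, since the factor $(J_\gamma(K)/\ell_0)^{1/(2t)}$ tends to $1$ and $\sqrt{q}$ survives. (An alternative direct route is to plug an eigenvector $v$ of $M$ at a maximum-modulus eigenvalue $\lambda$ into $M^\top P_K M \preceq q P_K$, giving $|\lambda|^2 v^\ast P_K v \leq q\, v^\ast P_K v$, and therefore $|\lambda|^2 \leq q$ since $v^\ast P_K v > 0$.)

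The main obstacle is the contraction step $M^\top P_K M \preceq q P_K$: merely having $M^\top P_K M \preceq P_K - \ell_0 I$ is insufficient for exponential decay and would only give a uniform bound, so the argument must use both the lower bound $P_K \succeq \ell_0 I$ (to control $\|M^t v\|^2$ from below by a multiple of the quadratic form) and the upper bound $\|P_K\| \leq J_\gamma(K)$ (to express the additive $\ell_0 I$ as a fraction of $P_K$). Once this observation is made, the induction and the appeal to Gelfand's formula are routine.
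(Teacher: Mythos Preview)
Your proof is correct, but it takes a different route from the paper's. The paper works with the \emph{other} Lyapunov equation~\eqref{eq-Lya2} for $\Sigma_K$: from $\Sigma_K = I_n + M\Sigma_K M^\top$ it conjugates by $\Sigma_K^{-1/2}$ to obtain $I_n = WW^\top + \Sigma_K^{-1}$ with $W = \Sigma_K^{-1/2} M \Sigma_K^{1/2}$, and then reads off $\|W\| \leq \sqrt{1-\ell_0/J_\gamma(K)}$ from $\Sigma_K^{-1} \succeq (\ell_0/J_\gamma(K)) I_n$. Since $M$ is similar to $W$, the spectral radius bound is immediate, and the power bound follows from $\|M^t\| = \|H W^t H^{-1}\| \leq \|H\|\|H^{-1}\|\|W\|^t$ with $H = \Sigma_K^{1/2}$. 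By contrast, you use equation~\eqref{eq-Lya1} for $P_K$, convert the additive decrement $M^\top P_K M \preceq P_K - \ell_0 I$ into the multiplicative contraction $M^\top P_K M \preceq q\,P_K$ via the sandwich $\ell_0 I \preceq P_K \preceq J_\gamma(K)\,I$, iterate, and extract the norm bound from the quadratic form. Both arguments are standard and yield the identical constants; the paper's version has the advantage of explicitly producing the strong-stability factorization $M = HWH^{-1}$ of Definition~\ref{defi-Strong stability}, which connects to the $(\kappa,\varrho)$ framework used elsewhere in the paper, while your Lyapunov-contraction argument is slightly more elementary and avoids the similarity step altogether.
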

\begin{proof}
    Since $K\in\mathcal{S}_{\gamma}$, it follows from~\eqref{eq-Lya2} that $\Sigma_K\succeq I_n$. On the other hand, from~\eqref{eq-Lya3}, we obtain
    \begin{equation}
        J_{\gamma}(K)=\Tr((Q+C^\top K^\top RKC)\Sigma_K)
                     \geq \Tr(\Sigma_K)\underline{\sigma}(Q+C^\top K^\top RKC)
                     \geq \Tr(\Sigma_K)\underline{\sigma}(Q)
                     \overset{(a)}{\geq} \ell_0\Tr(\Sigma_K),
                     \nonumber
    \end{equation}
    where (a) follows from the assumption that $\ell_0 I_n\preceq Q$.
    It implies that $\Sigma_K\preceq (J_{\gamma}(K)/\ell_0) I_n$.
    Next, pre- and post-multiplying~\eqref{eq-Lya2} by $\Sigma_K^{-1/2}$ yields
    \begin{equation}
    \begin{aligned}
         I_n&=\Sigma_K^{-1/2}\sqrt{\gamma}(A-BKC)\Sigma_K^{1/2}\left[\Sigma_K^{-1/2}\sqrt{\gamma}(A-BKC)\Sigma_K^{1/2}\right]^\top+\Sigma_K^{-1}\\
         &\succeq WW^\top+(\ell_0/J_{\gamma}(K))I_n,
    \end{aligned}
    \nonumber
    \end{equation}
    where we define $W:=\Sigma_K^{-1/2}\sqrt{\gamma}(A-BKC)\Sigma_K^{1/2}$. 
    Consequently, we obtain $WW^\top\preceq (1-\ell_0/J_{\gamma}(K))I_n$, which implies $\|W\|\leq\sqrt{1-\ell_0/J_{\gamma}(K)}$.
    Letting $H=\Sigma_K^{1/2}$, notice that $\sqrt{\gamma}(A-BKC)$ is similar to $W$, i.e., $\sqrt{\gamma}(A-BKC) = HWH^{-1}$. Thus,  we obtain
    $\sqrt{\gamma}\rho(A-BKC)=\rho(W)\leq\|W\|\leq\sqrt{1-\ell_0/J_{\gamma}(K)}$. Furthermore, it follows that
    \begin{equation}
    \begin{aligned}
        \left\|(\sqrt{\gamma}(A-BKC))^t\right\|
        &=\left\|HW^tH^{-1}\right\|\\
        &\leq \|H\|\|H^{-1}\|\|W\|^t\\
        &\leq \sqrt{\frac{J_{\gamma}(K)}{\ell_0}}\left(1-\frac{\ell_0}{J_{\gamma}(K)}\right)^{t/2}.
        \nonumber
    \end{aligned}
    \end{equation}
    The proof is completed.
\end{proof}

\begin{proof}[Proof of Lemma~\ref{lem8}]
Since $K\in\mathcal{S}_{\gamma}$, we obtain that
\begin{equation}
    \sqrt{\gamma'}\rho(A-BKC)\overset{(a)}{<}\sqrt{1+\alpha}\sqrt{\gamma}\rho(A-BKC)
    \overset{(b)}{\leq} \sqrt{1+\frac{\ell_0}{J_{\gamma}(K)-\ell_0}}\sqrt{1-\frac{\ell_0}{J_{\gamma}(K)}}=1,
    \nonumber
\end{equation}
where (a) follows from $\gamma'<(1+a)\gamma$, (b) follows from~\eqref{eq-updating a} and Lemma~\ref{lem-bound of rho}.
It implies $K\in\mathcal{S}_{\gamma'}$.

\end{proof}

\section{Proofs Omitted in Theorem~\ref{theorem2}}
\label{Proofs Omitted in Theorem 2}
\subsection{Proof of Lemma~\ref{lem6}}
\begin{proof}
    Since $\tau_k$ and $N$ satisfy the conditions of Lemma~\ref{lem4}, the inequality $|J_{\gamma_k}(K_{k+1})-\widehat{J}_{\gamma_k}^{\tau_k,N}(K_{k+1})|\leq J_{\gamma_k}(K_{k+1})/2$ holds with probability at least $1-\delta_1$, which yields $J_{\gamma_k}(K_{k+1})\leq\widehat{J}_{\gamma_k}^{\tau_k,N}(K_{k+1})$.
    Due to $J_{\gamma_{k}}(K_{k+1})\leq \overline{J}$, we obtain
    $\sqrt{\gamma_k}\rho(A-BK_{k+1}C)<1$.
    Moreover, it follows from
    $\gamma_{k+1}=(1+\zeta \alpha_k)\gamma_k$ that
    \begin{equation}
        \gamma_{k+1}\overset{(a)}{<}(1+\alpha_k)\gamma_k
        = \left(1+\frac{\ell_0}{2\widehat{J}_{\gamma_k}^{\tau_k,N}(K_{k+1})-\ell_0}\right)\gamma_k
        \overset{(b)}{\leq} \left(1+\frac{\ell_0}{J_{\gamma_k}(K_{k+1})-\ell_0}\right)\gamma_k,
    \end{equation}
    where (a) uses the fact that $0<\zeta<1$, and (b) uses $J_{\gamma_k}(K_{k+1})\leq2\widehat{J}_{\gamma_k}^{\tau_k,N}(K_{k+1})$. Thus, we conclude that $\sqrt{\gamma_{k+1}}\rho(A-BK_{k+1}C)<1$ by Lemma~\ref{lem8}. 

\end{proof}

\subsection{Proof of Lemma~\ref{lem7}}
\begin{proof}
    We first prove the lower bound on $\alpha_k$. Since $|J_{\gamma_k}(K_{k+1})-\widehat{J}_{\gamma_k}^{\tau_k,N}(K_{k+1})|\leq J_{\gamma_k}(K_{k+1})/2$,
    we can derive an upper bound on $\widehat{J}_{\gamma_k}^{\tau_k,N}(K_{k+1})$ as follows:
    \begin{equation}
        \widehat{J}_{\gamma_k}^{\tau_k,N}(K_{k+1}) \leq |\widehat{J}_{\gamma_k}^{\tau_k,N}(K_{k+1})-J_{\gamma_k}(K_{k+1})|+ J_{\gamma_k}(K_{k+1}) \leq \frac{3}{2}J_{\gamma_k}(K_{k+1}).
        \nonumber
    \end{equation}
    Combined with $J_{\gamma_{k}}(K_{k+1})\leq \overline{J}$, we get $\widehat{J}_{\gamma_k}^{\tau_k,N}(K_{k+1}) \leq 3\overline{J}/2$, which yields
    \begin{equation}\label{eq-lem7-0}
        \alpha_{k}=\frac{\ell_0}{2\widehat{J}_{\gamma_k}^{\tau_k,N}(K_{k+1})-\ell_0}
        \geq \frac{\ell_0}{3\overline{J}-\ell_0}.
    \end{equation}
    
    To establish the upper bound on $J_{\gamma_{k+1}}(K_{k+1})$, we first recall its definition from~\eqref{eq-Lya3} and apply the properties of the trace
    \begin{equation}\label{eq-lem7-1}
       \begin{aligned}
        J_{\gamma_{k+1}}(K_{k+1})
        =&\Tr\left((Q+C^\top (K_{k+1})^\top RK_{k+1}C)\Sigma_{K_{k+1}}(\gamma_{k+1})\right)\\
        \leq &\Tr\left(Q+C^\top (K_{k+1})^\top RK_{k+1}C\right)\|\Sigma_{K_{k+1}}(\gamma_{k+1})\|,
       \end{aligned} 
    \end{equation}
    On the other hand, 
    we obtain
    \begin{equation}\label{eq-lem7-2}
        \begin{aligned}
            J_{\gamma_k}(K_{k+1})
            =&\Tr\left((Q+C^\top (K_{k+1})^\top RK_{k+1}C)\Sigma_{K_{k+1}}(\gamma_k)\right)\\
            \overset{(a)}{\geq}&\Tr(Q+C^\top (K_{k+1})^\top RK_{k+1}C),
        \end{aligned}
    \end{equation}
    where (a) follows from $\Sigma_{K_{k+1}}(\gamma_k) \succeq I_n$.
    Substituting \eqref{eq-lem7-2} into \eqref{eq-lem7-1}, one can show that
    \begin{equation}\label{eq-lem7-3}
         J_{\gamma_{k+1}}(K_{k+1}) \leq \Tr(Q+C^\top (K_{k+1})^\top RK_{k+1}C) \|\Sigma_{K_{k+1}}(\gamma_{k+1})\| 
        \leq J_{\gamma_{k}}(K_{k+1})  \|\Sigma_{K_{k+1}}(\gamma_{k+1})\|.
    \end{equation}
    Next, we establish the bound on $\left\|[\sqrt{\gamma_{k+1}}\left(A-BK_{k+1}C\right)]^t\right\|$ as follows:
    \begin{equation} \label{eq-lem7-4}
        \begin{aligned}
            &\left\|[\sqrt{\gamma_{k+1}}\left(A-BK_{k+1}C\right)]^t\right\|\\
            \overset{(a)}{=}&\left(\sqrt{1+\zeta\alpha_k}\right)^t
            \left\|[\sqrt{\gamma_k}(A-B_{k+1}C)]^t\right\|\\
            \overset{(b)}{\leq}&\left(\sqrt{1+\zeta\alpha_k}\right)^t
            \sqrt{\frac{J_{\gamma_{k}}(K_{k+1})}{\ell_0}}\left(\sqrt{1-\frac{\ell_0}{J_{\gamma_k}(K_{k+1})}}\right)^t\\
            =& \sqrt{\frac{J_{\gamma_{k}}(K_{k+1})}{\ell_0}}
            \left(\sqrt{\frac{1+\zeta\alpha_k}{1+\alpha_k}}\right)^t\left[\sqrt{(1+\alpha_k)\left(1-\frac{\ell_0}{J_{\gamma_k}(K_{k+1})}\right)}\right]^t\\
            \overset{(c)}{\leq}&\sqrt{\frac{J_{\gamma_{k}}(K_{k+1})}{\ell_0}}
            \left(\sqrt{1-\frac{1-\zeta}{1+(\alpha_k)^{-1}}}\right)^t,
        \end{aligned}
    \end{equation}
    where (a) follows from the update rate for $\gamma_k$ designed in~\eqref{eq-updating a-new}, (b) uses Lemma~\ref{lem-bound of rho} based on $K_{k+1}\in\mathcal{S}_{\gamma_k}$, and (c) follows from 
    $(1+\alpha_k)(1-\ell_0/J_{\gamma_k}(K_{k+1}))\leq(1+\alpha_k)(1-\ell_0/2\widehat{J}_{\gamma_k}^{\tau_k,N}(K_{k+1}))\leq 1$.
    Recalling the definition of $\|\Sigma_{K_{k+1}}(\gamma_{k+1})\|$ from~\eqref{eq-Lya2}, we can obtain that
    \begin{equation} \label{eq-lem7-5}
        \begin{aligned}
            \|\Sigma_{K_{k+1}}(\gamma_{k+1})\|=&\left\|\sum_{t=0}^{\infty}(\gamma_{k+1})^t\left(A-BK_{k+1}C\right)^t\left[\left(A-BK_{k+1}C\right)^t\right]^\top \right\|\\
            \overset{(a)}{\leq}&\sum_{t=0}^{\infty}\left\|\left[\sqrt{\gamma_{k+1}}\left(A-BK_{k+1}C\right)\right]^t\right\|^2\\
            \overset{(b)}{\leq}&\frac{J_{\gamma_{k}}(K_{k+1})}{\ell_0}\sum_{t=0}^{\infty}\left(1-\frac{1-\zeta}{1+(\alpha_k)^{-1}}\right)^t\\
            =&\frac{J_{\gamma_{k}}(K_{k+1}}{\ell_0}\frac{1+(\alpha_k)^{-1}}{1-\zeta}\\
            \overset{(c)}{\leq}&\frac{4\left(\widehat{J}_{\gamma_k}^{\tau_k,N}(K_{k+1})\right)^2}{\ell_0(1-\zeta)},
        \end{aligned}
    \end{equation}
    where (a) uses the triangle inequality and the sub-multiplicativity of matrix norms, (b) follows from~\eqref{eq-lem7-4}, and (c) uses $J_{\gamma_k}(K_{k+1})\leq 2\widehat{J}_{\gamma_k}^{\tau_k, N}(K_{k+1})$.
    Inserting~\eqref{eq-lem7-5} into~\eqref{eq-lem7-3} and using $J_{\gamma_k}(K_{k+1})\leq 2\widehat{J}_{\gamma_k}^{\tau_k,N}(K_{k+1})$ completes the proof.
\end{proof}

\subsection{Proof of Lemma~\ref{lem12}}
Suppose that at iteration $k$, the conditions~\eqref{eq-lem12-1} and~\eqref{eq-lem12-2} hold. It follows from Lemma~\ref{lem7} that
\begin{equation}\label{eq-lem12-proof-1}
 J_{\gamma_{k+1}}(K_{k+1})\leq \frac{8\left(\widehat{J}_{\gamma_k}^{\tau_k,N}(K_{k+1})\right)^3}{(1-\zeta)\ell_0^2}\overset{(a)}{=}\nu_{k+1},
\end{equation}
where (a) follows from the choice of $\nu_{k+1}$ in Theorem~\ref{theorem2}.
The inequality~\eqref{eq-lem12-proof-1} confirms that $K_{k+1} \in \mathcal{S}_{\gamma_{k+1}}(\nu_{k+1})$. Consequently, based on the parameters $L_{k+1}, D_{k+1}, G_{k+1}^0, G_{k+1}$ in Lemma~\ref{lem1} and Lemma~\ref{bound of gradient}, which are polynomial functions of $\nu_{k+1}$, the selection of $\tau_{k+1}^e$, $N_{k+1}^e$, $r_{k+1}$, and $\eta_{k+1}$ in Theorem~\ref{theorem2} constitutes a sufficient condition for Theorem~\ref{theo1}.

  In addition, noting that 
  \begin{equation*}
      M_{k+1}=\frac{9\nu_{k+1}}{\eta_{k+1}\epsilon^2}
      \overset{(a)}{\geq}\frac{9J_{\gamma_{k+1}}(K_{k+1})}{\eta_{k+1}\epsilon^2},
  \end{equation*}
  where (a) is due to~\eqref{eq-lem12-proof-1},  we obtain that $M_{k+1}$ given in Theorem~\ref{theorem2} satisfies the conditions in Theorem~\ref{theo1}. 
  it follows from Theorem~\ref{theo1} that the zeroth-order PG method~\eqref{eq-pg} will return a stabilizing policy $K_{k+2}$ satisfying $J_{\gamma_{k+1}}(K_{k+2})\leq \overline{J}$ with probability at least $1-(M_{k+1}+1)\delta_0$. It further implies that $J_{\gamma_{k+1}}(K_{k+2})\leq J_{\gamma_{k+1}}(K_{k+1})\leq\nu_{k+1}$.

  Subsequently, we derive the success probability associated with $|J_{\gamma_{k+1}}(K_{k+2})-\widehat{J}_{\gamma_{k+1}}^{\tau_{k+1},N}(K_{k+2})|\leq J_{\gamma_{k+1}}(K_{k+2})/2$ under the assumption $J_{\gamma_{k+1}}(K_{k+2})\leq \nu_{k+1}$. 
  Indeed, applying $J_{\gamma_{k+1}}(K_{k+2})\leq \nu_{k+1}$ yields that
  \begin{equation*}
      \tau_{k+1}\geq\frac{2\nu_{k+1}}{\ell_0}\log\frac{\nu_{k+1}d^2}{\ell_0}
      \geq\frac{2J_{\gamma_{k+1}}(K_{k+2})}{\ell_0}\log\frac{J_{\gamma_{k+1}}(K_{k+2})d^2}{\ell_0}.
  \end{equation*}
  It is evident that $\tau_{k+1}$ and $N$ in~\eqref{eq-theo2-1} satisfies the conditions in~\eqref{eq-tau-N}. Invoking Lemma~\ref{lem4}, we obtain that $|J_{\gamma_{k}}(K_{k+1})-\widehat{J}_{\gamma_{k}}^{\tau,N}(K_{k+1})|\leq J_{\gamma_{k}}(K_{k+1})/2$ holds with probability at least $1-\delta_1$.
  Finally, by applying the union bound to the probabilities derived above, we conclude that at iteration $k+1$ both conditions~\eqref{eq-lem12-1} and~\eqref{eq-lem12-2} hold with probability at least $1 - \delta_1-(M_{k+1}+1)\delta_0$. This completes the proof.$\hfill\blacksquare$

\section{Auxiliary Lemma}
\begin{lemma}\label{lem-Azuma}
    Let $\{Y_k\}_{k=1}^N$ be a martingale difference sequence with respect to a filtration $\{\mathcal{F}_k\}_{k=1}^N$, meaning that $\mathbb{E}[Y_k | \mathcal{F}_{k-1}] = 0$ for all $k$. Suppose that $\|Y_k\|_F\leq c$ for some constants $c$. Then, for any $N\geq1$ and any $\delta\in(0,e^{-2}/2)$, 
\begin{equation}
\Big\|\sum_{k=1}^NY_k\Big\|_F\leq c\sqrt{2N\log\frac{15}{\delta}}
\nonumber
\end{equation}
holds with probability at least $1-\delta$.
\end{lemma}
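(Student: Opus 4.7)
The plan is to recognize Lemma~\ref{lem-Azuma} as a Hilbert-space version of the Azuma--Hoeffding inequality, specialized to the finite-dimensional Hilbert space $(\mathbb{R}^{m\times p},\langle\cdot,\cdot\rangle_F)$. The Frobenius norm is precisely the norm induced by the Frobenius inner product, so $\{Y_k\}_{k=1}^N$ is a Hilbert-space-valued martingale difference sequence with uniformly bounded increments $\|Y_k\|_F\leq c$, and the conclusion is the standard dimension-free sub-Gaussian tail bound on the norm of its partial sum. I would follow the classical Pinelis--Hayes route rather than a covering argument in order to keep the constant inside the logarithm independent of $mp$.

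The proof would proceed in three steps. First, for any fixed unit element $U\in\mathbb{R}^{m\times p}$ with $\|U\|_F=1$, Cauchy--Schwarz gives $|\langle U,Y_k\rangle_F|\leq c$, so $\{\langle U,Y_k\rangle_F\}$ is a bounded scalar martingale difference sequence with respect to $\{\mathcal{F}_k\}$; Hoeffding's lemma then yields the conditional MGF estimate $\mathbb{E}[e^{\lambda\langle U,Y_k\rangle_F}\mid\mathcal{F}_{k-1}]\leq e^{\lambda^2c^2/2}$. Second, iterating this bound via the tower property and applying the Chernoff device gives, for each fixed unit $U$, the scalar tail $\mathbb{P}(|\langle U,\sum_kY_k\rangle_F|\geq s)\leq 2e^{-s^2/(2Nc^2)}$. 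Third, to pass from these one-dimensional marginals to a bound on the full Frobenius norm I would invoke Pinelis' Hilbert-space concentration theorem, which exploits the $2$-smoothness (parallelogram identity) of the norm to show that $e^{\lambda\|S_n\|}$-type functionals can be dominated by a supermartingale, yielding directly $\mathbb{P}(\|\sum_kY_k\|_F\geq t)\leq 2e^{-t^2/(2Nc^2)}$. Inverting the tail gives $t=c\sqrt{2N\log(2/\delta)}\leq c\sqrt{2N\log(15/\delta)}$, which establishes the claim.

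The main obstacle is the passage from the one-dimensional tail bounds to a bound on the Hilbert-space norm: a naive $\epsilon$-net over the unit Frobenius sphere in $\mathbb{R}^{m\times p}$ would inject a term of order $mp\cdot\log(1/\epsilon)$ inside the logarithm, contaminating what ought to be a dimension-free estimate. This matters because Lemma~\ref{lem-Azuma} is subsequently applied in the proof of Lemma~\ref{lem-gradient estimation error} with $c$ already of order $mpG(d^2+1)$, so absorbing another $mp$ factor via covering would worsen the downstream sample-complexity guarantees. The remedy is to avoid discretization entirely by appealing to Pinelis' theorem, whose supermartingale construction relies only on the parallelogram law and hence produces absolute constants irrespective of ambient dimension. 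The gap between the sharp constant ($2$) produced by Pinelis' argument and the conservative constant ($15$) stated in the lemma merely supplies slack, and the restriction $\delta<e^{-2}/2$ is a harmless technical condition ensuring that $\log(15/\delta)$ is comfortably bounded below and that the inverted tail bound is meaningful.
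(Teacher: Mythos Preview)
Your approach is correct, and it differs from the paper only in which concentration result is cited. The paper's proof is a single line invoking Hayes' vector-valued Azuma inequality \cite{hayes2005large}, whose tail bound carries a prefactor of order $2e^{2}\approx 14.78$; this is precisely the origin of the constant $15$ in the statement and of the technical restriction $\delta<e^{-2}/2$. Your route via Pinelis' $(2,D)$-smooth Banach space theorem is equally valid and in fact sharper, yielding $\mathbb{P}\bigl(\|\sum_k Y_k\|_F\geq t\bigr)\leq 2e^{-t^2/(2Nc^2)}$ and hence the stated bound with the constant $15$ replaced by $2$; the slack you observe is then explained not by conservatism in Pinelis' argument but by the paper's choice to follow Hayes. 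Both references are dimension-free, so your concern about $\epsilon$-net arguments injecting an $mp$ factor is correctly anticipated and correctly avoided by either citation.
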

\begin{proof}
    The proof follows directly from~\cite[Theorem 1.8]{hayes2005large}
\end{proof}
\end{document}